\newtheorem{thm}{Theorem}[section]
\newtheorem{lem}[thm]{Lemma}
\theoremstyle{definition}
\newtheorem{defn}[thm]{Definition}
\theoremstyle{remark}
\newtheorem{rem}[thm]{Remark}
\newtheorem*{ex}{Example}
\numberwithin{equation}{section}
\newcommand{\BibTeX}{B\kern-0.1emi\kern-0.017emb\kern-0.15em\TeX}
\newcommand{\XYpic}{$\mathrm{X\kern-0.3em\raisebox{-0.18em}{Y}}$-$\mathrm{pic}\,$}
\def\cl{\mathcal {G}} 
\newcommand{\ed}{\end{document}}
\def\va{\bigtriangleup}
\begin{document}

%
%
%
%
%
%
%
%
%

\title[Basis-free Formulas for Characteristic Polynomial Coefficients]{Basis-free Formulas for Characteristic Polynomial Coefficients in Geometric Algebras}
\author[K. Abdulkhaev]{Kamron  Abdulkhaev}
\address{%
HSE University\\ 
101000 Moscow\\
Russia\\}
\email{ksabdulkhaev@edu.hse.ru}

\author[D. Shirokov]{Dmitry Shirokov}
\address{%
HSE University\\ 
101000 Moscow\\
Russia\\ \\
\and \\ \\
Institute for Information Transmission Problems of the Russian Academy of Sciences\\
127051 Moscow\\
Russia\\}
\email{dshirokov@hse.ru}
%

\subjclass{Primary 15A66; Secondary 11E88, 15A15, 68W30}
\keywords{Basis-free formula, Characteristic polynomial, Clifford algebra, Geometric algebra, Grade projection, Operation of conjugation, Spin group}
\date{\today}
\begin{abstract}
In this paper, we discuss characteristic polynomials in (Clifford) geometric algebras $\cl_{p,q}$ of vector space of dimension $n=p+q$. We present basis-free formulas for all characteristic polynomial coefficients in the cases $n\leq 6$, alongside with a method to obtain general form of these formulas. The formulas involve only the operations of geometric product, summation, and operations of conjugation. All the formulas are verified using computer calculations. We present an analytical proof of all formulas in the case $n=4$, and  one of the formulas in the case $n=5$. We present some new properties of the operations of conjugation and grade projection and use them to obtain the results of this paper. We also present formulas for characteristic polynomial coefficients in some special cases. In particular, the formulas for vectors (elements of grade $1$) and basis elements are presented in the case of arbitrary $n$, the formulas for rotors (elements of spin groups) are presented in the cases $n\leq 5$. The results of this paper can be used in different applications of geometric algebras in computer graphics, computer vision, engineering, and physics. The presented basis-free formulas for characteristic polynomial coefficients can also be used in symbolic computation.
\end{abstract}
\label{page:firstblob}
\maketitle
\section{Introduction}
In this paper, we discuss characteristic polynomials in (Clifford) geometric algebras $\cl_{p,q}$, $p+q=n\geq 1$. We solved the problem of obtaining basis-free formulas for characteristic polynomial coefficients for the cases $n\leq6$. These formulas involve only the operations of geometric product, summation, and the operations of conjugation (the grade involution, the reversion, and one additional operation of conjugation $\va$). 

\vspace{-0.3pt} 

This paper is an extended version of the short note in Conference Proceedings \cite{article-short_version-cgi2021}, where the cases $n\leq 5$ are considered. In the present paper, for the first time, we introduce the formulas for the characteristic polynomial coefficients in geometric algebras in the case $n=6$. Also, we present a  method to obtain general form of these formulas using basis-free form of one of the coefficients (determinant). For the first time, we present formulas for the characteristic polynomial coefficients in some special cases. In particular, we present the formulas for vectors (elements of grade~$1$) and basis elements in the case of arbitrary $n$. We present simplification of the formulas for rotors (elements of spin groups ${\rm Spin}_{+}(p,q)$) in the cases $p+q=n\leq 5$. Lemmas \ref{thm:ch_poly_e} and \ref{lemma:uk_show} and Theorems \ref{lemma:c_2s-1=0}, \ref{thm:like_basis_el}, \ref{thm:spin}, and \ref{thm:cross5} are new. 

\vspace{-0.2pt} 

This paper is organized as follows. In Section \ref{section2}, we present some new properties of the operations of conjugation and grade projection and use them to obtain the results of this paper. In Section \ref{section3}, we discuss the notion of characteristic polynomial in geometric algebras and remind the recursive formulas for characteristic polynomial coefficients from \cite{Shirokov}. In Section \ref{section4}, we present an analytic proof of the basis-free formulas for characteristic polynomial coefficients in the case $n=4$. In Section \ref{section5}, we solve the same problem in the case $n=5$. The formulas are verified using symbolic computation. We present an analytical proof of one of the formulas. In Section \ref{section6}, we present formulas for the characteristic polynomial coefficients of elements with some specific conditions on their  powers. In particular, we obtain the basis-free formulas for vectors and basis elements in the case of arbitrary $n$, and rotors in the cases $n\leq5$. In Section \ref{section7}, we introduce a method to obtain a general form of basis-free formulas for characteristic polynomial coefficients. We illustrate the method for the cases $n\leq 5$. In Section \ref{section8}, we apply the method for the case $n=6$. Using numerical Geometric Algebra package for Python~\cite{Python}, we checked that the formulas give valid results for geometric algebra elements with random integer coefficients. Some of the formulas are moved to Appendix \ref{appendix} because of their cumbersomeness.

\vspace{-0.2pt} 

The geometric algebras of vector spaces of dimensions $n=4,5,$ and $6$ are important for different applications. The space-time algebra $\cl_{1,3}$ is widely used for applications in physics \cite{Doran,Hestenes,Lasenby}, the conformal geometric algebra $\cl_{4,1}$ is widely used in computer science and engineering \cite{Bayro,Breuils,Dorst,Hildenbrand,Li}, the geometric algebra $\cl_{3,3}$ of projective geometry is used in computer vision and computer graphics \cite{Dorst2,Klawitter}, the conformal space-time algebras $\cl_{4,2}$ and $\cl_{2,4}$ are used in physics \cite{Dirac,Doran}.

\vspace{-0.2pt} 

The characteristic polynomial and related concepts (eigenvectors, eigenvalues) are widely used in computer vision (see, for example, on eigenfaces and the computer vision problem of human face recognition \cite{eigenface3,eigenface1,eigenface2}). The characteristic polynomial coefficients are used to solve the Sylvester and Lyapunov equations in geometric algebra \cite{CGI20,CGI20_extend}. The presented basis-fee formulas for characteristic polynomial coefficients can also be used in symbolic computation using different software \cite{AcusPack,Python,HitzerMatlab,Python2}.

\section{Grade Projections and Operations of Conjugation in Geometric Algebras}\label{section2}

Let us consider the (Clifford) geometric algebra $\cl_{p,q}$, $p+q=n$ \cite{Hestenes,Lasenby,Lounesto} with the generators $e_1$, $e_2$, \ldots, $e_n$ and the identity element $e$. The generators satisfy the conditions
\begin{eqnarray}
    e_ae_b+e_be_a=2\eta_{ab}e,\qquad a,b=1,\ldots,n, \nonumber
\end{eqnarray}
where $\eta =(\eta_{ab})={\rm diag}(1,\ldots, 1, -1, \ldots , -1)$ is the diagonal matrix with its first $p$ entries equal to $1$ and the last $q$ entries equal to $-1$ on the diagonal. 

We call the subspace of $\cl_{p,q}$ of elements, which are linear combinations of the basis elements 
\begin{eqnarray}
    e_{a_1 \ldots a_k}:= e_{a_1}\cdots e_{a_k},\qquad
    a_1 < a_2 < \cdots < a_k,\qquad
    k = 0, 1, \ldots, n,\label{def:basis}
\end{eqnarray}
with multi-indices of length $k$, the subspace of grade $k$ and denote it by $\cl_{p,q}^k$. Elements of grade 0 are identified with scalars $\cl_{p,q}^0  \equiv \mathbb{R}, e \equiv 1$.
The projection of any element $U\in\cl_{p,q}$ onto the subspace $\cl^k_{p,q}$ is denoted by $\langle U\rangle_k$ (or $U_k$ to simplify notation) in this paper.
We have
\begin{eqnarray}
    \langle U+V \rangle_k =\langle U \rangle_k+\langle V \rangle_k,\quad
    \langle\lambda U \rangle_k=\lambda \langle U \rangle_k,\quad
\lambda \in \mathbb{R},\quad
    U,V\in \cl_{p,q}.\label{proj}
\end{eqnarray}
An arbitrary element $U\in\cl_{p,q}$ can be written in the form
\begin{eqnarray}
        U = \sum_{k=0}^n\langle U\rangle_k,\qquad
    \langle U\rangle_k \in \cl_{p,q}^k.\label{conjug}
\end{eqnarray}
The scalar $\langle U \rangle_0$ is called the scalar part of $U$. We have the property
\begin{eqnarray}
\langle UV \rangle_0=\langle VU \rangle_0,\qquad \forall U, V\in\cl_{p,q}.\label{comm}
\end{eqnarray}

\begin{defn}[\cite{Shirokov}]
\normalfont
Any operation of the form
\begin{eqnarray}
    U \mapsto
    \sum_{k=0}^n\lambda_k\langle U \rangle_k,\qquad
    \lambda_k = \pm 1\qquad
\end{eqnarray}
\normalfont
is called \textit{an operation of conjugation} in  $\cl_{p,q}$.
\end{defn}
Note that the operation of conjugation is an involution: the square of each operation equals the identical operation. 
The operations of conjugation commute with each other.
We have three classical operations of conjugation: the grade involution, the reversion, and the Clifford conjugation\footnote{The Clifford conjugation is a superposition of the grade involution $\widehat{~~}$ and the reversion $\widetilde{~~}$. Note that some authors \cite{Lounesto} denote the Clifford conjugation by $\stackrel{\overline\quad}{\quad}$. We do not use separate notation for the Clifford conjugation in this paper and write the combination of the two symbols $\widehat{~~}$ and $\widetilde{~~}$.}:
\begin{eqnarray}
    \widehat{U}=\sum_{k=0}^n (-1)^k\langle U \rangle_k,\,\,\,\,
    \widetilde{U}=\sum_{k=0}^n (-1)^\frac{k(k-1)}{2}\langle U \rangle_k,\,\,\,\,
    \widehat{\widetilde{U}}=\sum_{k=0}^n (-1)^\frac{k(k+1)}{2}\langle U \rangle_k.\label{conjug2}
\end{eqnarray}
These operations have the following properties
\begin{eqnarray}
\label{conj_prop}
    \widehat{UV}=\widehat{U}\widehat{V},\qquad
    \widetilde{UV}=\widetilde{V}\widetilde{U},\qquad
    \widehat{\widetilde{UV}}=\widehat{\widetilde{V}}\widehat{\widetilde{U}},\qquad
    \forall\, U,V \in\cl_{p,q}.
\end{eqnarray}
\begin{defn}[\cite{Shirokov}]\normalfont
We call an operation of conjugation of the form
\begin{eqnarray}
    U^\bigtriangleup=\sum_{k=0}^n (-1)^\frac{k(k-1)(k-2)(k-3)}{24}\langle U \rangle_k
\end{eqnarray}
\textit{an additional operation of conjugation} in $\cl_{p,q}$ (or \textit{$\va$-conjugation}).
\end{defn}
Note that we have $(UV)^\bigtriangleup\neq U^\bigtriangleup V^\bigtriangleup$ and $(UV)^\bigtriangleup\neq V^\bigtriangleup U^\bigtriangleup $
in the general case. However, the operation $\va$ has the following weaker property by  Lemma~\ref{lemma:grade0} and~(\ref{comm}):
\begin{eqnarray}
\langle (U V)^\va \rangle_0=\langle U^\va V^\va \rangle_0=\langle V^\va U^\va \rangle_0,\quad
\forall U,V\in\cl_{p,q}.
\end{eqnarray}
We widely use the operation $\va$ in this paper.

We need the following two lemmas to prove the results of this paper.

\begin{lem}\label{lemma:grade0} We have the following properties
 \begin{eqnarray}
 \langle UV^\star \rangle_0&=&\langle U^\star V \rangle_0,\quad
 \forall U,V\in\cl_{p,q},
\label{conj_prop2}\\
\langle U^\star V^\star \rangle_0&=&\langle U V \rangle_0,\quad
 \forall U,V\in\cl_{p,q},\label{conj_prop22}
 \end{eqnarray}
 where $\star$ is any operation of conjugation (\ref{conjug});
 \begin{eqnarray}
 \langle U\rangle_0&=&\langle U^\bullet  \rangle_0,\quad
 \forall U\in\cl_{p,q},
\label{conj_prop3}\\
\langle U^\bullet V^\bullet \rangle_0&=&\langle (U V)^\bullet \rangle_0,\quad
\forall U,V\in\cl_{p,q},\label{conj_prop33}
\end{eqnarray}
 where $\bullet$ is any operation of conjugation (\ref{conjug}) that does not change the sign of grade 0 (i.e. $\lambda_0=+1$).
\end{lem}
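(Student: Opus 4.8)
The plan is to reduce every identity to the single structural fact that each operation of conjugation acts as a diagonal map $U\mapsto\sum_k\lambda_k\langle U\rangle_k$ with $\lambda_k=\pm1$, combined with the grade-wise bilinearity of the scalar-part pairing. First I would record the elementary observation that for any two operations of conjugation $\star$ and $\circ$ with sign sequences $(\lambda_k)$ and $(\mu_k)$, one has $\langle U^\star V^\circ\rangle_0=\sum_{k=0}^n\lambda_k\mu_k\langle\langle U\rangle_k\langle V\rangle_k\rangle_0$, because $\langle e_A e_B\rangle_0=0$ unless the basis elements $e_A,e_B$ have multi-indices of the same length (indeed, $\langle e_A e_B\rangle_0\neq0$ forces $B=A$ as sets, hence $|A|=|B|$); this is where the restriction to the scalar part is essential. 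With this lemma in hand, all four claims become statements about the scalar sequences $(\lambda_k^2)$, $(\lambda_k\mu_k)$, etc.

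Next I would dispatch the identities in order. For~(\ref{conj_prop2}): taking $\circ=\star$ on the right-hand side of the reduction formula and noting that $\langle U^\star V\rangle_0$ corresponds to signs $\lambda_k\cdot1$ while $\langle U V^\star\rangle_0$ corresponds to $1\cdot\lambda_k$, both equal $\sum_k\lambda_k\langle\langle U\rangle_k\langle V\rangle_k\rangle_0$, so they agree. For~(\ref{conj_prop22}): here $\langle U^\star V^\star\rangle_0$ carries the coefficient $\lambda_k^2=1$ for every $k$, hence equals $\sum_k\langle\langle U\rangle_k\langle V\rangle_k\rangle_0=\langle UV\rangle_0$; this uses crucially that $\lambda_k\in\{\pm1\}$, so $\lambda_k^2=1$. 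For~(\ref{conj_prop3}): applying the reduction with $V=e$ (so only the $k=0$ term of $V$ survives and $\langle U^\bullet e\rangle_0=\lambda_0\langle U\rangle_0$), together with the hypothesis $\lambda_0=+1$, gives $\langle U^\bullet\rangle_0=\langle U\rangle_0$. For~(\ref{conj_prop33}): write $(UV)^\bullet=\sum_k\lambda_k\langle UV\rangle_k$, so $\langle(UV)^\bullet\rangle_0=\lambda_0\langle UV\rangle_0=\langle UV\rangle_0$ by the hypothesis $\lambda_0=+1$; on the other side, $\langle U^\bullet V^\bullet\rangle_0=\langle UV\rangle_0$ by~(\ref{conj_prop22}) (which applies to $\bullet$ as a special case of a conjugation). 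Hence both sides equal $\langle UV\rangle_0$.

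The only real subtlety — and the one step I would write out carefully — is the claim that $\langle e_Ae_B\rangle_0\neq0$ implies $|A|=|B|$, which underpins the grade-diagonal form of the pairing $\langle\cdot\,\cdot\rangle_0$ and hence the reduction formula. This follows because the geometric product $e_Ae_B$ is, up to sign, the basis element indexed by the symmetric difference $A\triangle B$, and that has grade $0$ exactly when $A\triangle B=\varnothing$, i.e. $A=B$; in particular $|A|=|B|$. I expect no genuine obstacle beyond bookkeeping: once the reduction formula is established, each of the four identities is a one-line consequence of $\lambda_k^2=1$ or $\lambda_0=1$, and the properties~(\ref{proj}) and~(\ref{comm}) already stated in the excerpt handle linearity and the needed commutation under $\langle\cdot\rangle_0$.
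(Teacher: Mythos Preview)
Your argument is correct. The key reduction formula
\[
\langle U^\star V^\circ\rangle_0=\sum_{k=0}^n\lambda_k\mu_k\,\langle\,\langle U\rangle_k\langle V\rangle_k\,\rangle_0
\]
does follow from the basis fact $e_Ae_B=\pm e_{A\triangle B}$, and once it is in hand each of the four identities is indeed a one-line consequence of $\lambda_k^2=1$ or $\lambda_0=1$. Your derivations of (\ref{conj_prop2})--(\ref{conj_prop33}) from it are all valid.

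The paper takes a slightly different route. Rather than decomposing $U$ and $V$ fully by grade, it decomposes each into the $(+1)$- and $(-1)$-eigenspaces of the particular conjugation $\star$: write $U=X+x$ and $V=Y+y$ with $X^\star=X$, $x^\star=-x$, etc. The vanishing of $\langle Xy\rangle_0$ and $\langle xY\rangle_0$ then comes from the same grade-orthogonality of the scalar pairing that you isolate, and expanding the products gives $\langle UV^\star\rangle_0=\langle XY-xy\rangle_0=\langle U^\star V\rangle_0$ directly. The remaining three identities are then obtained by substitutions (replacing $U$ by $U^\star$, or $V$ by $e$) rather than by re-invoking the reduction formula. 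So both arguments rest on the same structural fact---that $\langle\,\langle U\rangle_j\langle V\rangle_k\,\rangle_0=0$ for $j\neq k$---but you make it explicit via the full grade decomposition and a single master formula, whereas the paper packages it into a coarser two-piece splitting tailored to $\star$. Your version is a bit more systematic and makes the role of $\lambda_k^2=1$ transparent; the paper's is marginally quicker for (\ref{conj_prop2}) alone but otherwise equivalent in substance.
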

\begin{proof}
If
\begin{eqnarray}
        U&=&X+x,\qquad\mbox{where}\qquad (X+x)^\star=X-x, \nonumber\\
        V&=&Y+y,\qquad\mbox{where}\qquad (Y+y)^\star=Y-y, \nonumber
\end{eqnarray}
then
\begin{eqnarray}
    \langle Xy \rangle_0=\langle Yx \rangle_0=0,
\label{eqn:lemma_grade0_xy}
\end{eqnarray}
because the elements $X$ and $y$ are of different grades (similarly for the elements $Y$ and $x$) by construction. Using (\ref{eqn:lemma_grade0_xy}) and (\ref{proj}), we get
\begin{eqnarray}
    \langle UV^\star \rangle_0&=&
    \langle (X+x)(Y-y) \rangle_0=
    \langle XY-Xy+xY-xy \rangle_0\nonumber\\
    &=&
    \langle XY-xy \rangle_0
    \label{eqn:lemma_grade0_uv^s},\\
    \langle U^\star V\rangle_0&=&
    \langle (X-x)(Y+y) \rangle_0=
    \langle XY+Xy-xY-xy \rangle_0\nonumber\\
    &=&
    \langle XY-xy \rangle_0
    .    \label{eqn:lemma_grade0_u^sv}
\end{eqnarray}
From the equality of the right-hand sides of the expressions (\ref{eqn:lemma_grade0_uv^s}), (\ref{eqn:lemma_grade0_u^sv}), we get the equality of the left-hand sides (\ref{conj_prop2}).
Substituting $U^\star$ for $U$ in (\ref{conj_prop2}), we get  (\ref{conj_prop22}).
Substituting $e$ for $V$ in (\ref{conj_prop2}) and using $e^\bullet=e$, we get  (\ref{conj_prop3})\footnote{Note that the property (\ref{conj_prop3}) is also follows from the definition of $\bullet$ and the fact that grade projections commute with operations of conjugation.}.
Using (\ref{conj_prop22}) and (\ref{conj_prop3}), we get  (\ref{conj_prop33}). 
\end{proof}

In the next lemma, we discuss the relation between the operations of conjugation (left-hand sides of the equalities) and the grade projections (right-hand sides of the equalities). Note that the same left-hand sides of equalities are used in \cite{Shirokov} to realize the scalar part operation $\langle~~\rangle_0$ in the cases of smaller dimensions $n$. For example, the left-hand side of (\ref{lem23}) is equal to $4\langle U \rangle_0$ in the cases $n\leq 6$.
\begin{lem}
\label{lemma:proj_realization}
For $n\leq7$, the following equalities hold
 \begin{eqnarray} 
U+\widehat{\widetilde{U}}&=&
 	2( U_0+
 	 U_3+
 	 U_4+
 	 U_7),\label{u+u^a}\\
 U+\widehat{\widetilde{U}}+\widehat{U}^\bigtriangleup+\widetilde{U}^\bigtriangleup&=&4( U_0+ U_7),\label{lem23}\\
 	U+\widehat{U}+\widetilde{U}^\bigtriangleup+\widehat{\widetilde{U}}^\bigtriangleup&=&4( U_0+ U_6),\label{u+u^g}\\
 	U+\widetilde{U}+\widehat{U}^\bigtriangleup+\widehat{\widetilde{U}}^\bigtriangleup&=&4( U_0+ U_5),\label{u+u^r}\\
 	U+\widehat{U}+\widetilde{U}+\widehat{\widetilde{U}}&=&4( U_0+ U_4),\label{u+u^gg}
\end{eqnarray}
where the simplified notation $U_k$ is used for $\langle U \rangle_k$.
\end{lem}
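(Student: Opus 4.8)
The plan is to prove each identity by expanding both sides into grade projections and comparing coefficients grade-by-grade. Since every operation of conjugation acts diagonally on the grade decomposition $U=\sum_{k=0}^n\langle U\rangle_k$, the left-hand side of each equality is of the form $\sum_{k=0}^n c_k\langle U\rangle_k$ for explicit integers $c_k$, and it suffices to check that $c_k$ equals twice (resp. four times) the indicated indicator function on $k\in\{0,\dots,n\}$ with $n\le 7$. So the whole lemma reduces to a finite table-check of sign exponents.

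First I would record the sign each operation attaches to grade $k$: the identity contributes $1$; the grade involution $\widehat{\ }$ contributes $(-1)^k$; the reversion $\widetilde{\ }$ contributes $(-1)^{k(k-1)/2}$; the Clifford conjugation $\widehat{\widetilde{\ }}$ contributes $(-1)^{k(k+1)/2}$; and $\va$-conjugation contributes $(-1)^{k(k-1)(k-2)(k-3)/24}$. Composing two diagonal operations just multiplies the corresponding signs, so, e.g., $\widehat{U}^\va$ attaches $(-1)^{k}(-1)^{k(k-1)(k-2)(k-3)/24}$ to grade $k$. Then for each of the five displayed identities I would tabulate, for $k=0,1,2,\dots,7$, the four signs appearing on the left and their sum. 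For instance, for (\ref{u+u^gg}) the four signs are $1$, $(-1)^k$, $(-1)^{k(k-1)/2}$, $(-1)^{k(k+1)/2}$; one checks this sum is $4$ when $k\equiv 0\pmod 4$ and $0$ otherwise — in the range $k\le 7$ this picks out exactly $k\in\{0,4\}$, giving the right-hand side $4(U_0+U_4)$. The remaining four identities are handled the same way, the $\va$ exponent $\binom{k}{4}$ being the only mildly non-obvious ingredient; its parity for $k=0,\dots,7$ is $0,0,0,0,1,1,1,1$, i.e. it equals $\lfloor k/4\rfloor\bmod 2$ on this range, which is what makes $\va$ useful for isolating the top grades.

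For (\ref{u+u^a}) only two operations occur, so the left side attaches $1+(-1)^{k(k+1)/2}$ to grade $k$, which is $2$ precisely when $k(k+1)/2$ is even, i.e. $k\equiv 0,3\pmod 4$; for $k\le 7$ that is $k\in\{0,3,4,7\}$. For (\ref{lem23}), (\ref{u+u^g}), (\ref{u+u^r}) the left sides attach, respectively, $1+(-1)^{k(k+1)/2}+(-1)^{k}(-1)^{\binom{k}{4}}+(-1)^{k(k-1)/2}(-1)^{\binom{k}{4}}$, the analogue with $\widehat{U}$ in place of $\widehat{U}^\va$, and the analogue with reversion replaced by grade involution; each sum is $4$ on the single residue class that survives (namely $k\equiv 0,7$; $k\equiv 0,6$; $k\equiv 0,5$ modulo $8$ in the relevant range) and $0$ otherwise. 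I would present this as a single $8$-row table of signs, then read off the five conclusions.

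The main obstacle — really the only point requiring care rather than bookkeeping — is the restriction $n\le 7$: the stated right-hand sides are valid only because grades above $7$ are absent, so that the periodic sign patterns have not yet had a chance to produce an unwanted second (or later) surviving grade. Concretely, $\binom{k}{4}\bmod 2$ has period $8$ in $k$ (it is even for $k\equiv 0,1,2,3$ and odd for $k\equiv 4,5,6,7$ modulo $8$), and the combined sign pattern in (\ref{lem23})–(\ref{u+u^gg}) has period $8$; hence for $n\le 7$ exactly the listed grades survive, but for $n\ge 8$ one would have to add the corresponding higher-grade terms. I would make this explicit by noting the periodicity and observing that $0,1,\dots,7$ is a full period, so no further grades intervene.
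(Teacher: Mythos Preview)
Your proposal is correct and takes essentially the same approach as the paper's own proof, namely direct calculation: expand each conjugation into its grade-by-grade sign action, sum, and read off which grades survive. Your additional remarks on the periodicity of $\binom{k}{4}\bmod 2$ and the role of the hypothesis $n\le 7$ are accurate elaborations, but the core method is identical.
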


\begin{proof}
    The proof is by direct calculation. For example, from (\ref{conjug}) and (\ref{conjug2}), we have 
    \begin{eqnarray}
        U= U_0+ U_1+ U_2+ U_3+ U_4+ U_5+ U_6+ U_7,\label{lemma_u}\\
        \widehat{\widetilde{U}}= U_0- U_1- U_2+ U_3+ U_4- U_5- U_6+ U_7.\label{lemma_u^a}
    \end{eqnarray}
    Summing (\ref{lemma_u}) and (\ref{lemma_u^a}), we obtain the expression (\ref{u+u^a}) from Lemma \ref{lemma:proj_realization}
    \begin{eqnarray}
     	U+\widehat{\widetilde{U}}
     	=
     	2 U_0+
     	2 U_3+
     	2 U_4+
     	2 U_7 
     	.\nonumber
    \end{eqnarray}
    Similarly one can prove all the other equalities from Lemma \ref{lemma:proj_realization}.
\end{proof}

\section{Characteristic Polynomials in Geometric Algebras}\label{section3}

Characteristic polynomials in geometric algebras $\cl_{p,q}$, $n=p+q$, are discussed in \cite{Helm} and \cite{Shirokov}. We use the notation $N:= 2^{[\frac{n+1}{2}]}$, where square brackets mean taking the integer part.

\begin{defn}[\cite{Shirokov}]\label{def:ch_poly}\normalfont
    Let us consider an arbitrary element $U\in\cl_{p,q}$. We call \textit{the characteristic polynomial} of $U$
    \begin{eqnarray}
        \varphi_U(\lambda) &:=& {\rm det}(\beta(\lambda e - U))={\rm Det}(\lambda e - U) \nonumber\\
        &=&\lambda^N-C_{(1)}\lambda^{N-1}-\cdots-C_{(N-1)}\lambda-C_{(N)}\in \cl_{p,q}^0,\label{ch_poly}
    \end{eqnarray}
    where $C_{(j)} = C_{(j)}(U)=c_{(j)}(\beta(U)) \in \cl_{p,q}^0\equiv \mathbb{R}, j=1,\ldots,N$ can be interpreted as constants or as elements of grade $0$ and are called \textit{characteristic polynomial coefficients} of $U$. Here $c_{(j)}(\beta(U))$ are the ordinary characteristic polynomial coefficients of the matrix $\beta(U)$ and
        $$\beta:\cl_{p,q}\to \beta(\cl_{p,q})\subset 
        M_{p,q},$$
        where
        $$M_{p,q}:=\left\lbrace
\begin{array}{ll}
{\rm Mat}(2^{\frac{n}{2}}, {\mathbb C}), & \mbox{if $n$ is even,}\\
{\rm Mat}(2^{\frac{n-1}{2}}, {\mathbb C})\oplus{\rm Mat}(2^{\frac{n-1}{2}}, {\mathbb C}), & \mbox{if $n$ is odd,}
\end{array}
\right.\label{beta}$$
is a representation of $\cl_{p,q}$ (of not minimal dimension, see the details in \cite{Shirokov}).
\end{defn}

Note that the trace ${\rm Tr}(U):={\rm tr}(\beta(U))=N \langle U \rangle_0=C_{(1)}$ and the determinant ${\rm Det}(U):=\det(\beta(U))=-C_{(N)}$ are particular cases of characteristic polynomial coefficients.  
The basis-free formulas for the determinant allow us to calculate the adjugate ${\rm Adj}(U)$ and the inverse $U^{-1}$ in $\cl_{p,q}$ (see \cite{Acus,Hitzer,Shirokov}).

We use the following recursive formulas for the characteristic polynomial coefficients $C_{(k)}$, $k=1, \ldots, N$, $N=2^{[\frac{n+1}{2}]}$ from \cite{Shirokov}. The elements $U_{(k)}\in\cl_{p,q}$, $k=1, \ldots, N$, are auxiliary.

\begin{thm}[\cite{Shirokov}]
\label{theorem:mainTheorem} Let us consider an arbitrary element $U\in\cl_{p,q}$, $n=p+q$, $N=2^{[\frac{n+1}{2}]}$. Setting $U_{(1)}=U$, we have
\begin{eqnarray}
&&U_{(k+1)}=U(U_{(k)}-C_{(k)}),\quad C_{(k)}=\frac{N}{k}\langle U_{(k)}\rangle_0,\,\,\, k=1, \ldots, N,\label{recur}\\ &&{\rm Det}(U)=-U_{(N)}=-C_{(N)}=U(C_{(N-1)}-U_{(N-1)}),\\
&&{\rm Adj}(U)=C_{(N-1)}-U_{(N-1)},\qquad U^{-1}=\frac{{\rm Adj}(U)}{{\rm Det}(U)}.\label{main_det}
\end{eqnarray}
\end{thm}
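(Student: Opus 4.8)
The final statement is Theorem~\ref{theorem:mainTheorem}, which presents the Faddeev--LeVerrier recursion adapted to the representation matrix $\beta(U)$. The plan is to transport the classical Faddeev--LeVerrier algorithm from $M_{p,q}$ back to $\cl_{p,q}$ via the algebra homomorphism $\beta$, using that $\beta$ is injective and that the scalar part $\langle\cdot\rangle_0$ corresponds to the normalized trace.

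\textbf{Step 1: recall the matrix Faddeev--LeVerrier recursion.} For a matrix $A\in {\rm Mat}(N,\mathbb{C})$ (or a direct sum of two such blocks of half the size, when $n$ is odd), define $M_{(1)}=A$ and recursively $M_{(k+1)}=A(M_{(k)}-c_{(k)}I)$ with $c_{(k)}=\frac{1}{k}{\rm tr}(M_{(k)})$. The standard result is that the characteristic polynomial of $A$ is $\lambda^N-c_{(1)}\lambda^{N-1}-\cdots-c_{(N)}$, that $M_{(N)}-c_{(N)}I=0$ (Cayley--Hamilton), and that ${\rm Adj}(A)=c_{(N-1)}I-M_{(N-1)}$. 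When $n$ is odd and the representation is a direct sum of two blocks of size $2^{(n-1)/2}$, one must check that the two blocks produce the same scalars $c_{(k)}$; this follows because $\beta(\cl_{p,q})$ lands in the diagonal subalgebra where both components have equal trace (the trace of each block is $2^{(n-1)/2}\langle U\rangle_0$), so the recursion stays coherent and $N=2^{[(n+1)/2]}=2\cdot 2^{(n-1)/2}$ gives ${\rm tr}(A)=N\langle U\rangle_0$. The even case is immediate since ${\rm tr}(\beta(U))=2^{n/2}\langle U\rangle_0=N\langle U\rangle_0$.

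\textbf{Step 2: pull the recursion back through $\beta$.} Since $\beta$ is an injective algebra homomorphism sending $e$ to $I$, if we set $U_{(1)}=U$ and $U_{(k+1)}=U(U_{(k)}-C_{(k)})$ with $C_{(k)}=\frac{N}{k}\langle U_{(k)}\rangle_0$, then by induction $\beta(U_{(k)})=M_{(k)}$ and $C_{(k)}=c_{(k)}$: indeed $\beta(U_{(1)})=\beta(U)=A=M_{(1)}$, and assuming $\beta(U_{(k)})=M_{(k)}$ we get $\frac{N}{k}\langle U_{(k)}\rangle_0=\frac{1}{k}{\rm tr}(\beta(U_{(k)}))=c_{(k)}$ by the trace identity of Step~1, hence $\beta(U_{(k+1)})=\beta(U)(\beta(U_{(k)})-C_{(k)}I)=A(M_{(k)}-c_{(k)}I)=M_{(k+1)}$. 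Applying $\beta^{-1}$ (on the image) to the matrix identities $M_{(N)}=c_{(N)}I$ and ${\rm Adj}(A)=c_{(N-1)}I-M_{(N-1)}$ yields $U_{(N)}=C_{(N)}$, ${\rm Det}(U)=\det(\beta(U))=c_{(N)}=C_{(N)}=-C_{(N)}$... here one must track the sign convention in Definition~\ref{def:ch_poly}: the characteristic polynomial is written $\lambda^N-C_{(1)}\lambda^{N-1}-\cdots-C_{(N)}$, so $C_{(N)}=-\det(-(-A))$-type bookkeeping gives ${\rm Det}(U)=-C_{(N)}$ and ${\rm Adj}(U)=C_{(N-1)}-U_{(N-1)}$, with $U^{-1}={\rm Adj}(U)/{\rm Det}(U)$ following from $U\,{\rm Adj}(U)={\rm Det}(U)\,e$. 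Finally rewriting $C_{(N)}=U_{(N)}=U(U_{(N-1)}-C_{(N-1)})$ gives ${\rm Det}(U)=-C_{(N)}=U(C_{(N-1)}-U_{(N-1)})$.

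\textbf{Main obstacle.} The genuinely delicate point is the odd-$n$ case, where $\beta(\cl_{p,q})\subset {\rm Mat}(2^{(n-1)/2},\mathbb{C})\oplus{\rm Mat}(2^{(n-1)/2},\mathbb{C})$ and one needs the Faddeev--LeVerrier scalars of the two summands to coincide at every step so that a single scalar $C_{(k)}$ governs both blocks simultaneously. This requires knowing that the image of $\beta$ is exactly the subalgebra on which the two blocks have equal trace at every power --- equivalently, that $\beta(U_{(k)})$ remains in $\beta(\cl_{p,q})$ (automatic, since it is a polynomial in $\beta(U)$) and that $\langle\cdot\rangle_0$ reads off $\frac{1}{N}$ times the total trace correctly. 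All of this is established in the construction of $\beta$ in \cite{Shirokov}, so the proof reduces to citing those properties of $\beta$ and running the two inductions above; the sign bookkeeping in Definition~\ref{def:ch_poly} is the only other place to be careful.
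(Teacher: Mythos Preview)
The paper does not prove Theorem~\ref{theorem:mainTheorem}; it is quoted verbatim from \cite{Shirokov} and used as a black box. So there is no ``paper's own proof'' to compare against. Your argument---transporting the classical Faddeev--LeVerrier recursion from $M_{p,q}$ back to $\cl_{p,q}$ via the injective homomorphism $\beta$ and the identity ${\rm tr}(\beta(V))=N\langle V\rangle_0$---is exactly the approach taken in the cited reference, and it is correct.

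Two small comments on the write-up. First, the line ``${\rm Det}(U)=\det(\beta(U))=c_{(N)}=C_{(N)}=-C_{(N)}$'' is nonsense as written; you immediately catch yourself, but it would be cleaner to do the sign bookkeeping once and correctly: since $N$ is a power of $2$ (hence even), $\varphi_U(0)={\rm Det}(-U)=(-1)^N{\rm Det}(U)={\rm Det}(U)$, while from the polynomial form $\varphi_U(0)=-C_{(N)}$, giving ${\rm Det}(U)=-C_{(N)}$ directly. Second, your worry about the odd-$n$ case is slightly overstated: once you know ${\rm tr}(\beta(V))=N\langle V\rangle_0$ for all $V\in\cl_{p,q}$ (which is part of the construction of $\beta$ in \cite{Shirokov}), the Faddeev--LeVerrier recursion applied to the full matrix $\beta(U)\in M_{p,q}$ already produces the correct characteristic polynomial of that matrix, and the $U_{(k)}$ stay in $\cl_{p,q}$ because they are polynomials in $U$. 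You do not need the two blocks to have equal traces separately; you only need the total trace formula, which you have.
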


\section{The Cases $n \leq 4$}\label{section4}

The basis-free formulas for all characteristic polynomial coefficients in $\cl_{p,q}$, $n=p+q\leq 4$ were presented in \cite{Shirokov}. These formulas were obtained using the algorithm from Theorem \ref{theorem:mainTheorem}. The formulas
(\ref{c2_4}) and (\ref{c3_4}) were proved in \cite{Shirokov} using computer calculations, all other formulas from Theorem \ref{theorem2} were proved analytically. We present an analytic proof of the formulas (\ref{c2_4}) and (\ref{c3_4}).

\begin{thm}\label{theorem2} In the cases $n = 1, 2, 3, 4$, we have the following basis-free formulas for the characteristic polynomial coefficients $C_{(k)}\in\cl^0_{p,q}$, $k=1, 2, \ldots, N$:
\begin{eqnarray}
n=1,\qquad &&C_{(1)}=U+\widehat{U},\qquad
    C_{(2)}=-U\widehat{U};\nonumber\\
n=2,\qquad &&C_{(1)}=U+\widehat{\widetilde{U}},\qquad  C_{(2)}=-U\widehat{\widetilde{U}};\nonumber\\
n=3,\qquad  &&C_{(1)}=
    U+\widehat{U}+\widetilde{U}+\widehat{\widetilde{U}},\nonumber\\
    &&C_{(2)}=-(
    U\widetilde{U}+
    U\widehat{U}+
    U\widehat{\widetilde{U}}+
    \widehat{U}\widehat{\widetilde{U}}+
    \widetilde{U}\widehat{\widetilde{U}}+
    \widehat{U}\widetilde{U}),\nonumber\\
   && C_{(3)}=
    U\widehat{U}\widetilde{U}+
    U\widehat{U}\widehat{\widetilde{U}}+
    U\widetilde{U}\widehat{\widetilde{U}}+
    \widehat{U}\widetilde{U}\widehat{\widetilde{U}},\nonumber\\
    &&C_{(4)}=-
    U\widehat{U}\widetilde{U}\widehat{\widetilde{U}};\nonumber\\
n=4,\qquad  &&C_{(1)}=
    U+\widehat{\widetilde{U}}+\widehat{U}^\bigtriangleup+\widetilde{U}^\bigtriangleup,\label{c1_4}\\
    &&C_{(2)}=-(
    U\widehat{\widetilde{U}}+
    U\widehat{U}^\bigtriangleup+
    U\widetilde{U}^\bigtriangleup\nonumber\\
    &&\qquad\quad+\widehat{\widetilde{U}}\widehat{U}^\bigtriangleup+
    \widehat{\widetilde{U}}\widetilde{U}^\bigtriangleup+
    (\widehat{U}\widetilde{U})^\bigtriangleup),\label{c2_4}\\
    &&C_{(3)}=
    U\widehat{\widetilde{U}}\widehat{U}^\bigtriangleup+
    U\widehat{\widetilde{U}}\widetilde{U}^\bigtriangleup+
    U(\widehat{U}\widetilde{U})^\bigtriangleup+
    \widehat{\widetilde{U}}(\widehat{U}\widetilde{U})^\bigtriangleup,\label{c3_4}\\
    &&C_{(4)}=-
    U\widehat{\widetilde{U}}(\widehat{U}\widetilde{U})^\bigtriangleup.\label{c4_4}
 \end{eqnarray}
\end{thm}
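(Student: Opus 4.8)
The plan is to prove the $n=4$ formulas by exhibiting the operations of conjugation as a kind of ``Galois-like'' action and reducing the characteristic polynomial to a product. First I would record that for $n=4$ we have $N=4$, and that by Lemma~\ref{lemma:proj_realization}, equation~(\ref{lem23}), the map $U\mapsto U+\widehat{\widetilde U}+\widehat U^\va+\widetilde U^\va$ equals $4\langle U\rangle_0$ (here $U_7=0$). The strategy is to show that the four elements
\begin{eqnarray}
V_1:=U,\quad V_2:=\widehat{\widetilde U},\quad V_3:=\widehat U^\va,\quad V_4:=\widetilde U^\va\nonumber
\end{eqnarray}
behave like the ``conjugates'' of $U$: namely that the elementary symmetric functions $e_k(V_1,V_2,V_3,V_4)$ are scalars and coincide with $C_{(k)}$ up to sign. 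Then (\ref{c1_4})--(\ref{c4_4}) are just $C_{(1)}=e_1$, $C_{(2)}=-e_2$, $C_{(3)}=e_3$, $C_{(4)}=-e_4$, since $e_2(V)=\sum_{i<j}V_iV_j$, $e_3(V)=\sum_{i<j<k}V_iV_jV_k$, $e_4(V)=V_1V_2V_3V_4$ match the six/four/one products written in the theorem (using that $V_iV_j$ need not commute, one must check the particular ordering chosen in (\ref{c2_4}) and (\ref{c3_4}) does give a scalar; here the $\va$-conjugation subtleties enter, e.g. $(\widehat U\widetilde U)^\va$ appears rather than $\widehat U^\va\widetilde U^\va$).

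The key algebraic step is to verify the product identity $\varphi_U(\lambda)=(\lambda-V_1)(\lambda-V_2)(\lambda-V_3)(\lambda-V_4)$ holds \emph{as an element of $\cl_{p,q}$}, or at least that its scalar part, when the $V_i$ are suitably ordered, reproduces the recursion of Theorem~\ref{theorem:mainTheorem}. Concretely I would proceed recursively following (\ref{recur}): set $U_{(1)}=U$, compute $C_{(1)}=\frac{N}{1}\langle U_{(1)}\rangle_0=4\langle U\rangle_0$, which by (\ref{lem23}) equals $V_1+V_2+V_3+V_4$; then form $U_{(2)}=U(U_{(1)}-C_{(1)})=U(U-C_{(1)})$, and show $\langle U_{(2)}\rangle_0$ produces $C_{(2)}=\frac{N}{2}\langle U_{(2)}\rangle_0$ in the claimed form. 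The mechanism that makes this work is that $U(U-C_{(1)})=U\cdot(-(V_2+V_3+V_4))$ up to the scalar $V_1$-contribution absorbed into $C_{(1)}$, and the grade-0 parts of $UV_2$, $UV_3$, $UV_4$ can be symmetrized using Lemma~\ref{lemma:grade0}, in particular (\ref{conj_prop2}) and the weaker $\va$-property $\langle(UV)^\va\rangle_0=\langle U^\va V^\va\rangle_0$. Iterating, $U_{(3)}$ and $U_{(4)}$ yield $C_{(3)}$ and $C_{(4)}=-\mathrm{Det}(U)$.

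An alternative, and probably cleaner, route is to work entirely inside the scalar part: define the scalar $t_m:=\langle V_1^{?}\cdots\rangle$ — better, use the Newton--Girard identities. Since $\mathrm{Tr}(U^m)=N\langle U^m\rangle_0$ for all $m$, and since one can show $\langle U^m\rangle_0=\langle V_i^m\rangle_0$ independently of $i$ (because $V_i$ is obtained from $U$ by an algebra involution or a grade-reshuffling that preserves even powers' scalar parts — this is where (\ref{conj_prop22}) and the $\va$-variant of Lemma~\ref{lemma:grade0} are essential), the power sums $p_m:=V_1^m+V_2^m+V_3^m+V_4^m$ have scalar part $4\langle U^m\rangle_0=\mathrm{Tr}(U^m)$ for $m=1,2,3,4$. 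Then Newton's identities over $\mathbb{R}$ force $e_k(V)=c_{(k)}(\beta(U))=C_{(k)}$ for $k=1,\dots,4$, provided the $V_i$ pairwise ``commute enough'' for the symmetric functions to be well-defined scalars — which again is checked via Lemma~\ref{lemma:grade0}.

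The main obstacle, I expect, is precisely this noncommutativity bookkeeping around the $\va$-conjugation: unlike $\widehat{~}$, $\widetilde{~}$, $\widehat{\widetilde{~}}$, the operation $\va$ is not an (anti)automorphism, so $\widehat U^\va\widetilde U^\va\neq(\widehat U\widetilde U)^\va$ in general, and one must show that exactly the combinations appearing in (\ref{c2_4})--(\ref{c4_4}) land in $\cl^0_{p,q}$ and match the recursion. Controlling this requires Lemma~\ref{lemma:proj_realization} to pin down which grades survive each partial product (e.g.\ that $U\widehat{\widetilde U}$ lands in grades $0,3,4$ by (\ref{u+u^a})) together with the weak $\va$-property $\langle(UV)^\va\rangle_0=\langle U^\va V^\va\rangle_0$ stated in the excerpt. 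Everything else is a finite, if tedious, direct computation that the paper presumably carries out grade by grade.
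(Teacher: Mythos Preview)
Your recursive plan (the middle paragraph) matches the paper's ``Step 1'': compute $C_{(k)}=\frac{N}{k}\langle U_{(k)}\rangle_0$ from (\ref{recur}), expand, and use Lemma~\ref{lemma:grade0} to symmetrize the scalar part into the claimed form. Where your proposal falls short is ``Step 2'', showing that the right-hand sides of (\ref{c2_4}) and (\ref{c3_4}) actually lie in $\cl^0_{p,q}$ rather than merely having the correct scalar part. You defer this to ``a finite, if tedious, direct computation'', but it is the substantive part of the proof and is not a brute-force expansion; it requires specific factorizations. For (\ref{c2_4}) the paper splits the six terms: the pair $U\widehat{\widetilde U}+(\widehat U\widetilde U)^\va$ is shown to be scalar by checking invariance under both $\widehat{\widetilde{~~}}$ and $\widehat{~}\circ\va$, while the remaining four terms factor as $(U+\widehat{\widetilde U})(U+\widehat{\widetilde U})^{\widetilde{\va}}$, which by (\ref{u+u^a}) equals $4(U_0+U_3+U_4)(U_0-U_3-U_4)=4(U_0^2-U_3^2-U_4^2)\in\cl^0_{p,q}$. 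For (\ref{c3_4}) the paper rewrites the four terms as $\widehat A\,\widehat B^{\va}+BA^{\va}$ with $A=\widehat U\widetilde U$ and $B=U+\widehat{\widetilde U}$, observes that both lie in $\cl_{p,q}^0\oplus\cl_{p,q}^3\oplus\cl_{p,q}^4$, and then checks the surviving (anti)commutators grade by grade. Without these groupings one does not see why precisely $(\widehat U\widetilde U)^\va$ rather than $\widehat U^\va\widetilde U^\va$ is what makes the expression land in grade $0$.

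Your Newton--Girard alternative has a harder obstruction. The claim $\langle V_i^m\rangle_0=\langle U^m\rangle_0$ is fine for $i=1,2$ since Clifford conjugation is an anti-automorphism, but for $V_3=\widehat U^\va$ and $V_4=\widetilde U^\va$ it does not follow: $\va$ is not an (anti)automorphism, so $(\widehat U^\va)^m\neq(\widehat U^m)^\va$ in general, and Lemma~\ref{lemma:grade0} only controls the scalar part of a \emph{single} product $\langle U^\va V^\va\rangle_0=\langle(UV)^\va\rangle_0$, not of iterated powers. Even if the power sums had the right scalar parts, Newton's identities are commutative polynomial relations, and with noncommuting $V_i$ the order-dependent ``elementary symmetric functions'' $e_k(V)$ have no reason to satisfy them. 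The paper never treats the $V_i$ as roots of a factored $\varphi_U$; it works directly with the recursion for Step~1 and then establishes scalarness separately via the factorizations above.
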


\begin{proof}(of (\ref{c2_4})) Our analytical proof of the formula (\ref{c2_4}) for $C_{(2)}$ in the case $n=4$ is in two steps. Step 1: we prove that the projection of the expression (\ref{c2_4}) onto the subspace of grade $0$ is equal to $C_{(2)}$ from (\ref{recur}). Step 2: we prove that the expression (\ref{c2_4}) belongs to $\cl^0_{p,q}$.

Step 1: Using (\ref{recur}), (\ref{c1_4}), and (\ref{proj}), we get
\begin{eqnarray}
     &&C_{(2)}=2\langle U(U-C_{(1)})\rangle_0=-2\langle U(\widehat{\widetilde{U}}+\widehat{U}^\bigtriangleup+\widetilde{U}^\bigtriangleup)\rangle_0\nonumber\\
    &&=-\langle U\widehat{\widetilde{U}}\rangle_0
    -\langle U\widehat{U}^\bigtriangleup\rangle_0
    -\langle U\widetilde{U}^\bigtriangleup\rangle_0
    -\langle U\widehat{\widetilde{U}}\rangle_0
    -\langle U\widehat{U}^\bigtriangleup\rangle_0
    -\langle U\widetilde{U}^\bigtriangleup\rangle_0.\nonumber
\end{eqnarray}
Using the properties (\ref{conj_prop2}) and (\ref{conj_prop3}) for the operations $\,\,\widehat{}\,\,$, $\,\,\widetilde{}\,\,$, and $\va$, we get
$$\langle U\widehat{\widetilde{U}} \rangle_0 = \langle\widehat{U}\widetilde{U}\rangle_0=
\langle(\widehat{U}\widetilde{U})^\bigtriangleup\rangle_0,\quad
\langle U\widehat{U}^\bigtriangleup\rangle_0
=\langle \widehat{\widetilde{U}}\widetilde{U}^\bigtriangleup \rangle_0,\quad \langle U\widetilde{U}^\bigtriangleup\rangle_0=\langle \widehat{\widetilde{U}}\widehat{U}^\bigtriangleup \rangle_0.
$$
Finally, we obtain
\begin{eqnarray}
C_{(2)}=-\langle
    U\widehat{\widetilde{U}}+
    U\widehat{U}^\bigtriangleup+
    U\widetilde{U}^\bigtriangleup+
    \widehat{\widetilde{U}}\widehat{U}^\bigtriangleup+
    \widehat{\widetilde{U}}\widetilde{U}^\bigtriangleup+
    (\widehat{U}\widetilde{U})^\bigtriangleup\rangle_0,\label{grade0_c2_4}
\end{eqnarray}
which differs from (\ref{c2_4}) only by the scalar part operation.

Step 2: Using the properties (\ref{conj_prop}), we conclude that the expression $U\widehat{\widetilde{U}}+
    (\widehat{U}\widetilde{U})^\bigtriangleup$ does not change under the operations $\widehat{\widetilde{~~}}$ and $\widehat{\bigtriangleup}$:
\begin{eqnarray}
    U\widehat{\widetilde{U}}+
    (\widehat{U}\widetilde{U})^\bigtriangleup=
    (U\widehat{\widetilde{U}}+
    (\widehat{U}\widetilde{U})^\bigtriangleup)^{\widehat{\widetilde{~~}}}=
    (U\widehat{\widetilde{U}}+
    (\widehat{U}\widetilde{U})^\bigtriangleup)^{\widehat{\bigtriangleup}}
    .\label{c2_4_a}
\end{eqnarray}
This means that the sum of the first and last terms of (\ref{c2_4}) belongs to the subspace of grade $0$:
\begin{eqnarray}
U\widehat{\widetilde{U}}+
    (\widehat{U}\widetilde{U})^\bigtriangleup \in \cl^0_{p,q}.\label{rr}
\end{eqnarray}
For the other four terms of (\ref{c2_4}), using (\ref{u+u^a}), we get\footnote{The commutator and anticommutator of two arbitrary elements $U,V \in \cl_{p,q}$ are denoted by $[U,V]=UV-VU$ and $\{U,V\}=UV+VU$ respectively.}\,\footnote{We remind that we use the simplified notation $U_k:=\langle U \rangle_k$ in this paper.}
\begin{eqnarray}
    &&U\widehat{U}^\bigtriangleup+
    \widehat{\widetilde{U}}\widetilde{U}^\bigtriangleup+
    U\widetilde{U}^\bigtriangleup+
    \widehat{\widetilde{U}}\widehat{U}^\bigtriangleup
    =
    (U+\widehat{\widetilde{U}})(U+\widehat{\widetilde{U}})^{\widetilde{\bigtriangleup}}\nonumber\\
    &&=
    4(U_0+U_3+U_4)(U_0-U_3-U_4)\nonumber\\
    &&=
    4(U_0^2-U_3^2-U_4^2-\{U_3,U_4\}-[U_0,U_3+U_4])
    \nonumber\\
    &&=
    4(U_0^2-U_3^2-U_4^2)\in\cl_{p,q}^0,\label{c2_4_b}
\end{eqnarray}
because $U_0^2, U_3^2, U_4^2\in\cl_{p,q}^0$ and $\{U_3,U_4\}=0$ (see, for example, \cite{Shirokov_2009}).
Summing (\ref{rr}) and (\ref{c2_4_b}), we conclude that the expression (\ref{c2_4}) belongs to $\cl_{p,q}^0$.
\end{proof}

\begin{proof}(of (\ref{c3_4})) The analytical proof of the formula (\ref{c3_4}) for $C_{(3)}$ is in two steps.

Step 1: Using the properties (\ref{comm}), (\ref{conj_prop2}), and (\ref{conj_prop3}) for the operations $\widehat{~~}$, $\widetilde{~~}$, and $\bigtriangleup$, we get 
\begin{eqnarray}
    \langle 
    U\widehat{\widetilde{U}}\widehat{U}^\bigtriangleup
    \rangle_0 
    =
    \langle 
    U\widehat{\widetilde{U}}\widetilde{U}^\bigtriangleup
    \rangle_0 
    =
    \langle 
    U(\widehat{U}\widetilde{U})^\bigtriangleup
    \rangle_0
    =
    \langle
    \widehat{\widetilde{U}}(\widehat{U}\widetilde{U})^\bigtriangleup
    \rangle_0
    .\label{prop_c3_4_uuu}
\end{eqnarray}
Using (\ref{recur}), (\ref{proj}), (\ref{c1_4}), (\ref{c2_4}), and (\ref{prop_c3_4_uuu}), we get
\begin{eqnarray}
    &&U_{(2)}=U(U-C_{(1)})
    =-(
    U\widehat{\widetilde{U}}+
    U\widehat{U}^\bigtriangleup+
    U\widetilde{U}^\bigtriangleup
    ),\nonumber\\
    &&C_{(3)}
    =\frac{4}{3}\langle U(
    U_{(2)}-C_{(2)}
    )\rangle_0=
    \frac{4}{3}\langle 
    U\widehat{\widetilde{U}}\widehat{U}^\bigtriangleup+
    U\widehat{\widetilde{U}}\widetilde{U}^\bigtriangleup+
    U(\widehat{U}\widetilde{U})^\bigtriangleup
    \rangle_0\nonumber\\
   &&=
    \langle
    U\widehat{\widetilde{U}}\widehat{U}^\bigtriangleup+
    U\widehat{\widetilde{U}}\widetilde{U}^\bigtriangleup+
    U(\widehat{U}\widetilde{U})^\bigtriangleup+
    \widehat{\widetilde{U}}(\widehat{U}\widetilde{U})^\bigtriangleup
    \rangle_0,
    \label{c3_4_grade0_2}
\end{eqnarray}
which differs from (\ref{c3_4}) only by the scalar part operation.

Step 2: Let us prove that the expression (\ref{c3_4}) belongs to $\cl^0_{p,q}$. It can be represented in the form
\begin{eqnarray}
(\widehat{U}\widetilde{U})^{\widehat{~~}}
    (U+\widehat{\widetilde{U}})^{\widehat{\bigtriangleup}}
    +
    (U+\widehat{\widetilde{U}})(\widehat{U}\widetilde{U})^{\bigtriangleup}=\widehat{A} \widehat{B}^\bigtriangleup+
    BA^\bigtriangleup,
   \label{TT}
\end{eqnarray}
where we use the notation $A:=\widehat{U}\widetilde{U}$ and $B:=U+\widehat{\widetilde{U}}$.
Using $\widehat{\widetilde{A}}=A$ and (\ref{u+u^a}), we get
\begin{eqnarray}
A=A_0+A_3+A_4,\qquad  B=B_0+B_3+B_4,\qquad \mbox{where}\quad A_i, B_i \in \cl_{p,q}^i.\nonumber
\end{eqnarray}
Therefore, the expression (\ref{TT}) is equal to
\begin{eqnarray}
&&(A_0-A_3+A_4)(B_0-B_3-B_4)+(B_0+B_3+B_4)(A_0+A_3-A_4)\nonumber\\
&&=\{A_0,B_0\}-[A_0,B_3]-[A_0,B_4]-[A_3,B_0]+\{A_3,B_3\}+\{A_3,B_4\}\nonumber\\
&&+[A_4,B_0]-\{A_4,B_3\}-\{A_4,B_4\}=\{A_0,B_0\}+\{A_3,B_3\}-\{A_4,B_4\},\nonumber
\end{eqnarray}
which belongs to $\cl^0_{p,q}$ because $\{U_{n-1},V_{n}\}=0$ for even $n$ and the expressions $\{U_0,V_0\}$, $\{U_{n-1},V_{n-1}\}$, $\{U_n,V_n\}$ belong to $\cl_{p,q}^0$ (see, for example, \cite{Shirokov_2009}\footnote{Alternatively, we can use the quaternion type classification of Clifford algebra elements \cite{quat,quat2} to prove this.}). Therefore the expression (\ref{c3_4}) belongs to $\cl_{p,q}^0$.
\end{proof}

\section{The Case $n=5$}\label{section5}
In this section, we present basis-free formulas for all characteristic polynomial coefficients in the geometric algebras $\cl_{p,q}$, $n=p+q=5$. The formula (\ref{c8_5}) for $C_{(8)}=-{\rm Det}(U)$ is presented in \cite{Shirokov} and in some another form in \cite{Acus}. The formula for $C_{(1)}={\rm Tr}(U)$ is also presented in \cite{Shirokov}.
\begin{thm}\label{thm:ck_n=5}
In the case $n=5$, we have the following basis-free formulas for the characteristic polynomial coefficients $C_{(k)}\in\cl^0_{p,q}$, $k=1, 2, \ldots, 8$:
{\footnotesize
    \begin{eqnarray}
    C_{(1)} &=&
    U
    +
    \widehat{\widetilde{U}}
    +
    \widehat{U}
    +
    \widetilde{U}
    +
    \widehat{U}^\bigtriangleup
    +
    \widetilde{U}^\bigtriangleup
    +
    U^\bigtriangleup
    +
    \widehat{\widetilde{U}}^\bigtriangleup
    ,\label{c1_5}\\
    C_{(2)} &=&
    -(
            U\widehat{\widetilde{U}}  
            +
            U\widehat{U}  
            +
            U\widetilde{U}  
            +
            U\widehat{U}^\bigtriangleup  
            +
            U\widetilde{U}^\bigtriangleup  
            +
            UU^\bigtriangleup  
            +
            U\widehat{\widetilde{U}}^\bigtriangleup
            +
            \widehat{U}\widetilde{U}  
            +
            \widehat{\widetilde{U}}\widetilde{U}\nonumber\\
            &+&
            \widehat{\widetilde{U}}\widehat{U}
            +
            \widehat{\widetilde{U}}\widetilde{U}^\bigtriangleup
            +
            \widehat{\widetilde{U}}\widehat{U}^\bigtriangleup
            +
             \widehat{U}\widehat{U}^\bigtriangleup 
            +
            \widehat{\widetilde{U}}U^\bigtriangleup
            +
            (\widehat{U}\widetilde{U})^\bigtriangleup  
            +
            (\widehat{U}U)^\bigtriangleup 
            +
            (\widehat{U}\widehat{\widetilde{U}})^\bigtriangleup\nonumber\\
            &+&
            \widetilde{U}\widehat{\widetilde{U}}^\bigtriangleup
            +
            \widetilde{U}U^\bigtriangleup
            +
            \widetilde{U}\widetilde{U}^\bigtriangleup 
            +
             \widehat{U}\widetilde{U}^\bigtriangleup
            +
            (U\widehat{\widetilde{U}})^\bigtriangleup
            +
            (\widetilde{U}\widehat{\widetilde{U}})^\bigtriangleup
            +
            (\widetilde{U}U)^\bigtriangleup
            +
            \widehat{U}U^\bigtriangleup\nonumber\\
            &+&
            \widehat{U}\widehat{\widetilde{U}}^\bigtriangleup
            +
            \widehat{\widetilde{U}}\widehat{\widetilde{U}}^\bigtriangleup
            +
             \widetilde{U}\widehat{U}^\bigtriangleup ),\label{c2_5}\\
    C_{(3)} &=&
            U\widehat{\widetilde{U}}\widehat{U}
            +
            U\widehat{\widetilde{U}}\widetilde{U}
            +
            U\widehat{U}\widetilde{U}
            +
            \widehat{\widetilde{U}}\widehat{U}\widetilde{U}
            +
            U\widehat{\widetilde{U}}\widehat{U}^\bigtriangleup
            +
            U\widehat{\widetilde{U}}\widetilde{U}^\bigtriangleup
            +
            U\widehat{\widetilde{U}}U^\bigtriangleup
            +
            U\widehat{U}\widehat{U}^\bigtriangleup
            \nonumber\\
            &+&
            U\widehat{U}\widetilde{U}^\bigtriangleup
            +
            U\widehat{U}U^\bigtriangleup
            +
            U\widehat{U}\widehat{\widetilde{U}}^\bigtriangleup
            +
            U\widetilde{U}\widehat{U}^\bigtriangleup
            +
            \widehat{\widetilde{U}}\widehat{U}\widehat{U}^\bigtriangleup
            +
            \widehat{\widetilde{U}}\widehat{U}\widetilde{U}^\bigtriangleup
            +
            \widehat{U}\widetilde{U}\widehat{U}^\bigtriangleup
            \nonumber\\
            &+&
            \widehat{U}\widetilde{U}\widetilde{U}^\bigtriangleup
            +
            \widehat{U}\widetilde{U}U^\bigtriangleup
            +
            \widehat{U}\widetilde{U}\widehat{\widetilde{U}}^\bigtriangleup
            +
            U\widetilde{U}\widetilde{U}^\bigtriangleup
            +
            U\widetilde{U}U^\bigtriangleup
            +
            U\widetilde{U}\widehat{\widetilde{U}}^\bigtriangleup
            +
            \widehat{\widetilde{U}}\widehat{U}U^\bigtriangleup
            \nonumber\\
            &+&
            \widehat{\widetilde{U}}\widehat{U}\widehat{\widetilde{U}}^\bigtriangleup
            +
            \widehat{\widetilde{U}}\widetilde{U}\widehat{U}^\bigtriangleup
            +
            \widehat{\widetilde{U}}\widetilde{U}\widetilde{U}^\bigtriangleup
            +
            \widehat{\widetilde{U}}\widetilde{U}U^\bigtriangleup
            +
            \widehat{\widetilde{U}}\widetilde{U}\widehat{\widetilde{U}}^\bigtriangleup
            +
            U\widehat{\widetilde{U}}\widehat{\widetilde{U}}^\bigtriangleup
            +
            \widehat{U}(\widehat{U}U)^\bigtriangleup
            \nonumber\\
            &+&
            U(\widehat{U}\widetilde{U})^\bigtriangleup
            +
            U(\widehat{U}U)^\bigtriangleup
            +
            U(\widehat{U}\widehat{\widetilde{U}})^\bigtriangleup
            +
            U(\widetilde{U}U)^\bigtriangleup
            +
            U(\widetilde{U}\widehat{\widetilde{U}})^\bigtriangleup
            +
            U(U\widehat{\widetilde{U}})^\bigtriangleup
            \nonumber\\
            &+&
            \widehat{\widetilde{U}}(\widehat{U}\widetilde{U})^\bigtriangleup
            +
            \widehat{\widetilde{U}}(\widehat{U}U)^\bigtriangleup
            +
            \widehat{\widetilde{U}}(\widehat{U}\widehat{\widetilde{U}})^\bigtriangleup
            +
            \widehat{\widetilde{U}}(\widetilde{U}U)^\bigtriangleup
            +
            \widehat{\widetilde{U}}(\widetilde{U}\widehat{\widetilde{U}})^\bigtriangleup
            +
            \widehat{\widetilde{U}}(U\widehat{\widetilde{U}})^\bigtriangleup
            \nonumber\\
            &+&
            \widehat{U}(\widehat{U}\widetilde{U})^\bigtriangleup
            +
            \widehat{U}(\widehat{U}\widehat{\widetilde{U}})^\bigtriangleup
            +
            \widehat{U}(\widetilde{U}U)^\bigtriangleup
            +
            \widehat{U}(\widetilde{U}\widehat{\widetilde{U}})^\bigtriangleup
            +
            \widehat{U}(U\widehat{\widetilde{U}})^\bigtriangleup
            +
            \widetilde{U}(\widehat{U}\widetilde{U})^\bigtriangleup
            \nonumber\\
            &+&
            \widetilde{U}(\widehat{U}U)^\bigtriangleup
            +
            \widetilde{U}(\widehat{U}\widehat{\widetilde{U}})^\bigtriangleup
            +
            \widetilde{U}(\widetilde{U}U)^\bigtriangleup
            +
            \widetilde{U}(\widetilde{U}\widehat{\widetilde{U}})^\bigtriangleup
            +
            \widetilde{U}(U\widehat{\widetilde{U}})^\bigtriangleup
            +
            (\widehat{U}\widetilde{U}U)^\bigtriangleup
            \nonumber\\
            &+&
            (\widehat{U}\widetilde{U}\widehat{\widetilde{U}})^\bigtriangleup
            +
            (\widehat{U}U\widehat{\widetilde{U}})^\bigtriangleup
            +
            (\widetilde{U}U\widehat{\widetilde{U}})^\bigtriangleup
    ,\label{c3_5}\\
    C_{(4)} &=&
    -(
            U\widehat{\widetilde{U}}\widehat{U}\widetilde{U}
            +
            U\widehat{\widetilde{U}}\widehat{U}\widehat{U}^\bigtriangleup
            +
            U\widehat{\widetilde{U}}\widehat{U}\widetilde{U}^\bigtriangleup
            +
            U\widehat{\widetilde{U}}\widehat{U}U^\bigtriangleup
            +
            U\widehat{\widetilde{U}}\widehat{U}\widehat{\widetilde{U}}^\bigtriangleup
            +
            U\widehat{\widetilde{U}}\widetilde{U}\widehat{U}^\bigtriangleup
            \nonumber\\
            &+&
            U\widehat{\widetilde{U}}\widetilde{U}\widetilde{U}^\bigtriangleup
            +
            U\widehat{\widetilde{U}}\widetilde{U}U^\bigtriangleup
            +
            U\widehat{\widetilde{U}}\widetilde{U}\widehat{\widetilde{U}}^\bigtriangleup
            +
            U\widehat{U}\widetilde{U}\widehat{U}^\bigtriangleup
            +
            U\widehat{U}\widetilde{U}\widetilde{U}^\bigtriangleup
            +
            U\widehat{U}\widetilde{U}U^\bigtriangleup
            \nonumber\\
            &+&
            \widehat{\widetilde{U}}\widehat{U}\widetilde{U}\widetilde{U}^\bigtriangleup
            +
            \widehat{\widetilde{U}}\widehat{U}\widetilde{U}U^\bigtriangleup
            +
            \widehat{\widetilde{U}}\widehat{U}\widetilde{U}\widehat{\widetilde{U}}^\bigtriangleup
            +
            U\widehat{U}\widetilde{U}\widehat{\widetilde{U}}^\bigtriangleup
            +
            \widehat{\widetilde{U}}\widehat{U}(\widehat{U}\widetilde{U})^\bigtriangleup
            +
            \widehat{\widetilde{U}}\widehat{U}(\widehat{U}U)^\bigtriangleup
            \nonumber\\
            &+&
            U\widehat{\widetilde{U}}(\widetilde{U}U)^\bigtriangleup
            +
            U\widehat{\widetilde{U}}(\widetilde{U}\widehat{\widetilde{U}})^\bigtriangleup
            +
            U\widehat{\widetilde{U}}(U\widehat{\widetilde{U}})^\bigtriangleup
            +
            U\widehat{\widetilde{U}}(\widehat{U}\widetilde{U})^\bigtriangleup
            +
            U\widehat{\widetilde{U}}(\widehat{U}U)^\bigtriangleup
            \nonumber\\
            &+&
            U\widehat{\widetilde{U}}(\widehat{U}\widehat{\widetilde{U}})^\bigtriangleup
            +
            U\widehat{U}(\widehat{U}\widetilde{U})^\bigtriangleup
            +
            U\widehat{U}(\widehat{U}U)^\bigtriangleup
            +
            U\widehat{U}(\widehat{U}\widehat{\widetilde{U}})^\bigtriangleup
            +
            U\widehat{U}(\widetilde{U}U)^\bigtriangleup
            \nonumber\\
            &+&
            U\widehat{U}(\widetilde{U}\widehat{\widetilde{U}})^\bigtriangleup
            +
            \widehat{\widetilde{U}}\widehat{U}(\widehat{U}\widehat{\widetilde{U}})^\bigtriangleup
            +
            U\widehat{U}(U\widehat{\widetilde{U}})^\bigtriangleup
            +
            U\widetilde{U}(\widehat{U}\widetilde{U})^\bigtriangleup
            +
            U\widetilde{U}(\widehat{U}U)^\bigtriangleup
            \nonumber\\
            &+&
            U\widetilde{U}(\widehat{U}\widehat{\widetilde{U}})^\bigtriangleup
            +
            U\widetilde{U}(\widetilde{U}U)^\bigtriangleup
            +
            U\widetilde{U}(\widetilde{U}\widehat{\widetilde{U}})^\bigtriangleup
            +
            U\widetilde{U}(U\widehat{\widetilde{U}})^\bigtriangleup
            +
            U(\widehat{U}\widetilde{U}U)^\bigtriangleup
            \nonumber\\
            &+&
            U(\widehat{U}\widetilde{U}\widehat{\widetilde{U}})^\bigtriangleup
            +
            U(\widehat{U}U\widehat{\widetilde{U}})^\bigtriangleup
            +
            U(\widetilde{U}U\widehat{\widetilde{U}})^\bigtriangleup
            +
            \widehat{\widetilde{U}}\widehat{U}\widetilde{U}\widehat{U}^\bigtriangleup
            +
            \widehat{\widetilde{U}}\widehat{U}(\widetilde{U}U)^\bigtriangleup
            \nonumber\\
            &+&
            \widehat{\widetilde{U}}\widehat{U}(\widetilde{U}\widehat{\widetilde{U}})^\bigtriangleup
            +
            \widehat{\widetilde{U}}\widehat{U}(U\widehat{\widetilde{U}})^\bigtriangleup
            +
            \widehat{\widetilde{U}}\widetilde{U}(\widehat{U}\widetilde{U})^\bigtriangleup
            +
            \widehat{\widetilde{U}}\widetilde{U}(\widehat{U}U)^\bigtriangleup
            +
            \widehat{\widetilde{U}}\widetilde{U}(\widehat{U}\widehat{\widetilde{U}})^\bigtriangleup
            \nonumber\\
            &+&
            \widehat{\widetilde{U}}\widetilde{U}(\widetilde{U}U)^\bigtriangleup
            +
            \widehat{\widetilde{U}}\widetilde{U}(\widetilde{U}\widehat{\widetilde{U}})^\bigtriangleup
            +
            \widehat{\widetilde{U}}\widetilde{U}(U\widehat{\widetilde{U}})^\bigtriangleup
            +
            \widehat{\widetilde{U}}(\widehat{U}\widetilde{U}U)^\bigtriangleup
            +
            \widehat{\widetilde{U}}(\widehat{U}\widetilde{U}\widehat{\widetilde{U}})^\bigtriangleup
            \nonumber\\
            &+&
            \widehat{\widetilde{U}}(\widehat{U}U\widehat{\widetilde{U}})^\bigtriangleup
            +
            \widehat{\widetilde{U}}(\widetilde{U}U\widehat{\widetilde{U}})^\bigtriangleup
            +
            \widehat{U}\widetilde{U}(\widehat{U}\widetilde{U})^\bigtriangleup
            +
            \widehat{U}\widetilde{U}(\widehat{U}U)^\bigtriangleup
            +
            \widehat{U}\widetilde{U}(\widehat{U}\widehat{\widetilde{U}})^\bigtriangleup
            \nonumber\\
            &+&
            \widehat{U}\widetilde{U}(\widetilde{U}U)^\bigtriangleup
            +
            \widehat{U}\widetilde{U}(\widetilde{U}\widehat{\widetilde{U}})^\bigtriangleup
            +
            \widehat{U}\widetilde{U}(U\widehat{\widetilde{U}})^\bigtriangleup
            +
            \widehat{U}(\widehat{U}\widetilde{U}U)^\bigtriangleup
            +
            \widehat{U}(\widehat{U}\widetilde{U}\widehat{\widetilde{U}})^\bigtriangleup
            \nonumber\\
            &+&
            \widehat{U}(\widehat{U}U\widehat{\widetilde{U}})^\bigtriangleup
            +
            \widehat{U}(\widetilde{U}U\widehat{\widetilde{U}})^\bigtriangleup
            +
            \widetilde{U}(\widehat{U}\widetilde{U}U)^\bigtriangleup
            +
            \widetilde{U}(\widehat{U}\widetilde{U}\widehat{\widetilde{U}})^\bigtriangleup
            +
            \widetilde{U}(\widehat{U}U\widehat{\widetilde{U}})^\bigtriangleup
            \nonumber\\
            &+&
            \widetilde{U}(\widetilde{U}U\widehat{\widetilde{U}})^\bigtriangleup
            +
            (\widehat{U}\widetilde{U}U\widehat{\widetilde{U}})^\bigtriangleup 
        ),\label{c4_5}\\
        C_{(5)} &=&
                U\widehat{\widetilde{U}}\widehat{U}\widetilde{U}\widehat{U}^\bigtriangleup
                +
                U\widehat{\widetilde{U}}\widehat{U}\widetilde{U}\widetilde{U}^\bigtriangleup
                +
                U\widehat{\widetilde{U}}\widehat{U}\widetilde{U}U^\bigtriangleup
                +
                U\widehat{\widetilde{U}}\widehat{U}\widetilde{U}\widehat{\widetilde{U}}^\bigtriangleup
                +
                U\widehat{\widetilde{U}}\widehat{U}(\widehat{U}\widetilde{U})^\bigtriangleup
                \nonumber\\
                &+&
                U\widehat{\widetilde{U}}\widehat{U}(\widehat{U}U)^\bigtriangleup
                +
                U\widehat{\widetilde{U}}\widehat{U}(\widehat{U}\widehat{\widetilde{U}})^\bigtriangleup
                +
                U\widehat{\widetilde{U}}\widehat{U}(\widetilde{U}U)^\bigtriangleup
                +
                U\widehat{\widetilde{U}}\widehat{U}(\widetilde{U}\widehat{\widetilde{U}})^\bigtriangleup
                +
                U\widehat{\widetilde{U}}\widehat{U}(U\widehat{\widetilde{U}})^\bigtriangleup
                \nonumber\\
                &+&
                U\widehat{\widetilde{U}}\widetilde{U}(\widehat{U}\widetilde{U})^\bigtriangleup
                +
                U\widehat{\widetilde{U}}\widetilde{U}(\widehat{U}U)^\bigtriangleup
                +
                U\widehat{\widetilde{U}}\widetilde{U}(\widehat{U}\widehat{\widetilde{U}})^\bigtriangleup
                +
                U\widehat{\widetilde{U}}\widetilde{U}(\widetilde{U}U)^\bigtriangleup
                +
                U\widehat{\widetilde{U}}\widetilde{U}(\widetilde{U}\widehat{\widetilde{U}})^\bigtriangleup
                \nonumber\\
                &+&
                U\widehat{\widetilde{U}}\widetilde{U}(U\widehat{\widetilde{U}})^\bigtriangleup
                +
                U\widehat{\widetilde{U}}(\widehat{U}\widetilde{U}U)^\bigtriangleup
                +
                U\widehat{\widetilde{U}}(\widehat{U}\widetilde{U}\widehat{\widetilde{U}})^\bigtriangleup
                +
                U\widehat{\widetilde{U}}(\widehat{U}U\widehat{\widetilde{U}})^\bigtriangleup
                +
                U\widehat{\widetilde{U}}(\widetilde{U}U\widehat{\widetilde{U}})^\bigtriangleup
                \nonumber\\
                &+&
                U\widehat{U}\widetilde{U}(\widehat{U}\widetilde{U})^\bigtriangleup
                +
                U\widehat{U}\widetilde{U}(\widehat{U}U)^\bigtriangleup
                +
                U\widehat{U}\widetilde{U}(\widehat{U}\widehat{\widetilde{U}})^\bigtriangleup
                +
                U\widehat{U}\widetilde{U}(\widetilde{U}U)^\bigtriangleup
                +
                U\widehat{U}\widetilde{U}(\widetilde{U}\widehat{\widetilde{U}})^\bigtriangleup
                \nonumber\\
                &+&
                U\widehat{U}\widetilde{U}(U\widehat{\widetilde{U}})^\bigtriangleup
                +
                U\widehat{U}(\widehat{U}\widetilde{U}U)^\bigtriangleup
                +
                U\widehat{U}(\widehat{U}\widetilde{U}\widehat{\widetilde{U}})^\bigtriangleup
                +
                U\widehat{U}(\widehat{U}U\widehat{\widetilde{U}})^\bigtriangleup 
                +
                U\widehat{U}(\widetilde{U}U\widehat{\widetilde{U}})^\bigtriangleup
                \nonumber\\
                &+&
                U\widetilde{U}(\widehat{U}\widetilde{U}U)^\bigtriangleup
                +
                U\widetilde{U}(\widehat{U}\widetilde{U}\widehat{\widetilde{U}})^\bigtriangleup
                +
                U\widetilde{U}(\widehat{U}U\widehat{\widetilde{U}})^\bigtriangleup
                +
                U\widetilde{U}(\widetilde{U}U\widehat{\widetilde{U}})^\bigtriangleup
                +
                U(\widehat{U}\widetilde{U}U\widehat{\widetilde{U}})^\bigtriangleup
                \nonumber\\
                &+&
                \widehat{\widetilde{U}}\widehat{U}\widetilde{U}(\widehat{U}\widetilde{U})^\bigtriangleup
                +
                \widehat{\widetilde{U}}\widehat{U}\widetilde{U}(\widehat{U}U)^\bigtriangleup
                +
                \widehat{\widetilde{U}}\widehat{U}\widetilde{U}(\widehat{U}\widehat{\widetilde{U}})^\bigtriangleup
                +
                \widehat{\widetilde{U}}\widehat{U}\widetilde{U}(\widetilde{U}U)^\bigtriangleup
                +
                \widehat{\widetilde{U}}\widehat{U}\widetilde{U}(\widetilde{U}\widehat{\widetilde{U}})^\bigtriangleup
                \nonumber\\
                &+&
                \widehat{\widetilde{U}}\widehat{U}\widetilde{U}(U\widehat{\widetilde{U}})^\bigtriangleup
                +
                \widehat{\widetilde{U}}\widehat{U}(\widehat{U}\widetilde{U}U)^\bigtriangleup
                +
                \widehat{\widetilde{U}}\widehat{U}(\widehat{U}\widetilde{U}\widehat{\widetilde{U}})^\bigtriangleup
                +
                \widehat{\widetilde{U}}\widehat{U}(\widehat{U}U\widehat{\widetilde{U}})^\bigtriangleup
                +
                \widehat{\widetilde{U}}\widehat{U}(\widetilde{U}U\widehat{\widetilde{U}})^\bigtriangleup
                \nonumber\\
                &+&
                \widehat{\widetilde{U}}\widetilde{U}(\widehat{U}\widetilde{U}U)^\bigtriangleup
                +
                \widehat{\widetilde{U}}\widetilde{U}(\widehat{U}\widetilde{U}\widehat{\widetilde{U}})^\bigtriangleup
                +
                \widehat{\widetilde{U}}\widetilde{U}(\widehat{U}U\widehat{\widetilde{U}})^\bigtriangleup
                +
                \widehat{\widetilde{U}}\widetilde{U}(\widetilde{U}U\widehat{\widetilde{U}})^\bigtriangleup
                +
                \widehat{\widetilde{U}}(\widehat{U}\widetilde{U}U\widehat{\widetilde{U}})^\bigtriangleup
                \nonumber\\
                &+&
                \widehat{U}\widetilde{U}(\widehat{U}\widetilde{U}U)^\bigtriangleup
                +
                \widehat{U}\widetilde{U}(\widehat{U}\widetilde{U}\widehat{\widetilde{U}})^\bigtriangleup
                +
                \widehat{U}\widetilde{U}(\widehat{U}U\widehat{\widetilde{U}})^\bigtriangleup
                +
                \widehat{U}\widetilde{U}(\widetilde{U}U\widehat{\widetilde{U}})^\bigtriangleup
                +
                \widehat{U}(\widehat{U}\widetilde{U}U\widehat{\widetilde{U}})^\bigtriangleup
                \nonumber\\
                &+&
                \widetilde{U}(\widehat{U}\widetilde{U}U\widehat{\widetilde{U}})^\bigtriangleup
            ,\label{c5_5}\\
    C_{(6)} &=&
    -(
        U\widehat{\widetilde{U}}\widehat{U}\widetilde{U}(\widehat{U}\widetilde{U})^\bigtriangleup
        +
        U\widehat{\widetilde{U}}\widehat{U}\widetilde{U}(\widehat{U}U)^\bigtriangleup
        +
        U\widehat{\widetilde{U}}\widehat{U}\widetilde{U}(\widehat{U}\widehat{\widetilde{U}})^\bigtriangleup
        +
        U\widehat{\widetilde{U}}\widehat{U}\widetilde{U}(\widetilde{U}U)^\bigtriangleup
        \nonumber\\
        &+&
        U\widehat{\widetilde{U}}\widehat{U}\widetilde{U}(\widetilde{U}\widehat{\widetilde{U}})^\bigtriangleup
        +
        U\widehat{\widetilde{U}}\widehat{U}\widetilde{U}(U\widehat{\widetilde{U}})^\bigtriangleup
        +
        U\widehat{\widetilde{U}}\widehat{U}(\widehat{U}\widetilde{U}U)^\bigtriangleup
        +
        U\widehat{\widetilde{U}}\widehat{U}(\widehat{U}\widetilde{U}\widehat{\widetilde{U}})^\bigtriangleup
        \nonumber\\
        &+&
        U\widehat{\widetilde{U}}\widehat{U}(\widehat{U}U\widehat{\widetilde{U}})^\bigtriangleup
        +
        U\widehat{\widetilde{U}}\widehat{U}(\widetilde{U}U\widehat{\widetilde{U}})^\bigtriangleup
        +
        U\widehat{\widetilde{U}}\widetilde{U}(\widehat{U}\widetilde{U}U)^\bigtriangleup
        +
        U\widehat{\widetilde{U}}\widetilde{U}(\widehat{U}\widetilde{U}\widehat{\widetilde{U}})^\bigtriangleup
        \nonumber\\
        &+&
        U\widehat{\widetilde{U}}\widetilde{U}(\widehat{U}U\widehat{\widetilde{U}})^\bigtriangleup
        +
        U\widehat{\widetilde{U}}\widetilde{U}(\widetilde{U}U\widehat{\widetilde{U}})^\bigtriangleup
        +
        U\widehat{\widetilde{U}}(\widehat{U}\widetilde{U}U\widehat{\widetilde{U}})^\bigtriangleup
        +
        U\widehat{U}\widetilde{U}(\widehat{U}\widetilde{U}U)^\bigtriangleup
        \nonumber\\
        &+&
        U\widehat{U}\widetilde{U}(\widehat{U}\widetilde{U}\widehat{\widetilde{U}})^\bigtriangleup
        +
        U\widehat{U}\widetilde{U}(\widehat{U}U\widehat{\widetilde{U}})^\bigtriangleup
        +
        U\widehat{U}\widetilde{U}(\widetilde{U}U\widehat{\widetilde{U}})^\bigtriangleup
        +
        U\widehat{U}(\widehat{U}\widetilde{U}U\widehat{\widetilde{U}})^\bigtriangleup
        \nonumber\\
        &+&
        U\widetilde{U}(\widehat{U}\widetilde{U}U\widehat{\widetilde{U}})^\bigtriangleup
        +
        \widehat{\widetilde{U}}\widehat{U}\widetilde{U}(\widehat{U}\widetilde{U}U)^\bigtriangleup
        +
        \widehat{\widetilde{U}}\widehat{U}\widetilde{U}(\widehat{U}\widetilde{U}\widehat{\widetilde{U}})^\bigtriangleup
        +
        \widehat{\widetilde{U}}\widehat{U}\widetilde{U}(\widehat{U}U\widehat{\widetilde{U}})^\bigtriangleup
        \nonumber\\
        &+&
        \widehat{\widetilde{U}}\widehat{U}\widetilde{U}(\widetilde{U}U\widehat{\widetilde{U}})^\bigtriangleup
        +
        \widehat{\widetilde{U}}\widehat{U}(\widehat{U}\widetilde{U}U\widehat{\widetilde{U}})^\bigtriangleup
        +
        \widehat{\widetilde{U}}\widetilde{U}(\widehat{U}\widetilde{U}U\widehat{\widetilde{U}})^\bigtriangleup
        +
        \widehat{U}\widetilde{U}(\widehat{U}\widetilde{U}U\widehat{\widetilde{U}})^\bigtriangleup
    )
    ,\label{c6_5}\\
    C_{(7)} &=&
    U\widehat{\widetilde{U}}\widehat{U}\widetilde{U}(\widehat{U}\widetilde{U}U)^\bigtriangleup
    +
    U\widehat{\widetilde{U}}\widehat{U}\widetilde{U}(\widehat{U}\widetilde{U}\widehat{\widetilde{U}})^\bigtriangleup
    +
    U\widehat{\widetilde{U}}\widehat{U}\widetilde{U}(\widehat{U}U\widehat{\widetilde{U}})^\bigtriangleup
    \nonumber\\
    &+&
    U\widehat{\widetilde{U}}\widehat{U}\widetilde{U}(\widetilde{U}U\widehat{\widetilde{U}})^\bigtriangleup
    +
    U\widehat{\widetilde{U}}\widehat{U}(\widehat{U}\widetilde{U}U\widehat{\widetilde{U}})^\bigtriangleup
    +
    U\widehat{\widetilde{U}}\widetilde{U}(\widehat{U}\widetilde{U}U\widehat{\widetilde{U}})^\bigtriangleup
    \nonumber\\
    &+&
    U\widehat{U}\widetilde{U}(\widehat{U}\widetilde{U}U\widehat{\widetilde{U}})^\bigtriangleup
    +
    \widehat{\widetilde{U}}\widehat{U}\widetilde{U}(\widehat{U}\widetilde{U}U\widehat{\widetilde{U}})^\bigtriangleup
    , \label{c7_5}\\
    C_{(8)} &=&
    -
    U\widehat{\widetilde{U}}\widehat{U}\widetilde{U}(\widehat{U}\widetilde{U}U\widehat{\widetilde{U}})^\bigtriangleup
    . \label{c8_5}
    \end{eqnarray}
}
\end{thm}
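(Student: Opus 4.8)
The plan is to prove Theorem~\ref{thm:ck_n=5} in the same two-part fashion already used for $n=4$. Every coefficient $C_{(k)}$ is, by Definition~\ref{def:ch_poly}, a scalar; hence for each of the eight proposed right-hand sides $P_{(k)}$ of~(\ref{c1_5})--(\ref{c8_5}) it suffices to establish (i) that $P_{(k)}\in\cl^0_{p,q}$ and (ii) that $\langle P_{(k)}\rangle_0$ coincides with the value produced by the recursion of Theorem~\ref{theorem:mainTheorem} with $N=8$; together these give $P_{(k)}=C_{(k)}$. Part~(ii) is obtained by substituting, in $C_{(k)}=\frac{8}{k}\langle U_{(k)}\rangle_0$, the already-established formulas for $C_{(1)},\dots,C_{(k-1)}$ and for the auxiliary elements $U_{(1)},\dots,U_{(k-1)}$, expanding, and collapsing the resulting scalar parts by means of~(\ref{comm}) and Lemma~\ref{lemma:grade0}: the identities $\langle UV^\star\rangle_0=\langle U^\star V\rangle_0$ and $\langle U^\star V^\star\rangle_0=\langle UV\rangle_0$ for the grade involution, the reversion and their composition, together with $\langle U^\va\rangle_0=\langle U\rangle_0$ and $\langle U^\va V^\va\rangle_0=\langle (UV)^\va\rangle_0=\langle VU\rangle_0$ for the $\va$-conjugation. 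Part~(i) is obtained by grouping the summands of $P_{(k)}$ and applying the projector identities of Lemma~\ref{lemma:proj_realization}, which for $n=5$ specialise to $U+\widehat{\widetilde U}+\widehat U^\va+\widetilde U^\va=4U_0$, $U+\widehat U+\widetilde U^\va+\widehat{\widetilde U}^\va=4U_0$, $U+\widehat U+\widetilde U+\widehat{\widetilde U}=4(U_0+U_4)$ and $U+\widetilde U+\widehat U^\va+\widehat{\widetilde U}^\va=4(U_0+U_5)$, together with the standard facts (see~\cite{Shirokov_2009}) that $U_0^2$, $U_4^2$, $U_5^2$ and suitable anticommutators of homogeneous parts are scalars; these force the non-scalar grades of each group to cancel.

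Because the expressions~(\ref{c2_5})--(\ref{c7_5}) are very large, a complete hand derivation of all eight formulas is not realistic, and the equalities~(\ref{c2_5})--(\ref{c8_5}) are accordingly verified by symbolic computation: one writes $U=\sum_A u_A e_A$ with indeterminate coefficients $u_A$ over the $2^5=32$ basis blades, runs the recursion of Theorem~\ref{theorem:mainTheorem} through a matrix representation $\beta$ to obtain $C_{(1)},\dots,C_{(8)}$ as polynomials in the $u_A$, expands each proposed basis-free expression in the same basis, and checks that the two agree identically; this is complemented by a numerical test with random integer $u_A$ using the package~\cite{Python}. The two extreme coefficients are already in the literature: $C_{(1)}={\rm Tr}(U)$ and $C_{(8)}=-{\rm Det}(U)$ appear in~\cite{Shirokov}, and~(\ref{c8_5}) also in~\cite{Acus}; moreover~(\ref{c1_5}) follows immediately from Lemma~\ref{lemma:proj_realization}: its right-hand side splits as $\bigl(U+\widehat{\widetilde U}+\widehat U^\va+\widetilde U^\va\bigr)+\bigl(\widehat U+\widetilde U+U^\va+\widehat{\widetilde U}^\va\bigr)$, and by~(\ref{lem23}) applied to $U$ and to $\widehat U$ respectively (with $U_7=0$ since $n=5$) each bracket equals $4U_0$, whence $C_{(1)}=8U_0=N\langle U\rangle_0$.

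In addition to the symbolic verification I would give a full analytic proof of one formula --- I choose $C_{(2)}$, formula~(\ref{c2_5}), to parallel the $n=4$ treatment --- via the two-step method above. Step~1: by~(\ref{recur}) with $N=8$ one has $C_{(2)}=4\langle U(U-C_{(1)})\rangle_0=-4\langle U(\widehat{\widetilde U}+\widehat U+\widetilde U+\widehat U^\va+\widetilde U^\va+U^\va+\widehat{\widetilde U}^\va)\rangle_0$, i.e. a sum of seven scalar-part quantities each weighted by $4$; on the other hand, using~(\ref{comm}) and Lemma~\ref{lemma:grade0}, the grade-$0$ part of each of the $28$ summands of~(\ref{c2_5}) collapses to one of these same seven values (each value being indexed by one nontrivial composition of the grade involution, the reversion and $\va$), with exactly four summands falling in each class, so that the sum of the grade-$0$ parts of the $28$ summands equals $-4\langle U(\widehat{\widetilde U}+\cdots+\widehat{\widetilde U}^\va)\rangle_0=C_{(2)}$. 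Step~2: to prove that~(\ref{c2_5}) itself lies in $\cl^0_{p,q}$, partition its $28$ terms into blocks, each of which is either visibly invariant under a pair of commuting operations of conjugation whose common fixed subspace is $\cl^0_{p,q}$, or can be rewritten --- via the projector identities of Lemma~\ref{lemma:proj_realization} --- as a product of two sums of homogeneous parts (one of the factors conjugated) whose non-scalar grades cancel by the (anti)commutation relations for grade projections, exactly as in the $n=4$ computations around~(\ref{c2_4_b}) and~(\ref{TT}).

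The step I expect to be the main obstacle is Step~2 for $C_{(2)}$, and more generally part~(i) of the scheme. For $n=5$ the centre of $\cl_{p,q}$ is two-dimensional, $\cl^0_{p,q}\oplus\cl^5_{p,q}$, so the pseudoscalar no longer anticommutes with grade-$4$ elements (indeed $\{U_4,U_5\}$ is a nonzero element of grade~$1$), and more grades are simultaneously in play; finding the right partition of the $28$ terms into blocks that individually collapse to scalars, and carrying out the anticommutator bookkeeping so that grades $1,2,3,4,5$ genuinely vanish, is appreciably more delicate than in the four-dimensional case. For $C_{(3)},\dots,C_{(7)}$ the analogous scalarity argument is in principle the same but becomes prohibitively long, which is precisely why those formulas are established by symbolic computation rather than analytically.
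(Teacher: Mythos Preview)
Your approach is essentially the paper's: symbolic verification of all eight formulas, together with an analytic proof of~(\ref{c2_5}) only via the same two-step scheme (grade-$0$ match with the recursion, then scalarity). Your Step~1 is the paper's Step~1 verbatim --- the paper partitions the $28$ terms of~(\ref{c2_5}) into seven four-term blocks $d_1,\dots,d_7$ whose scalar parts coincide in each block by Lemma~\ref{lemma:grade0}, exactly as you outline.

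Where you flag a gap, Step~2, there is indeed a concrete trick you have not identified. The paper shows $d_1\in\cl^0_{p,q}$ by conjugation invariance, and $d_2,d_3\in\cl^0_{p,q}\oplus\cl^5_{p,q}$ by invariance under $\widetilde{\phantom{U}}$ and $\widehat{\phantom{U}}^{\va}$; the grade-$5$ residue of $d_2$ and $d_3$ is then killed not by anticommutator bookkeeping but by the cyclic identity $\langle UV\rangle_n=\langle VU\rangle_n$, valid for odd $n$ (here $n=5$). This is the device that handles the two-dimensional centre you correctly worried about. Finally $d_4+d_5+d_6+d_7$ factors, via~(\ref{u+u^gg}), as $(U+\widehat U+\widetilde U+\widehat{\widetilde U})(U+\widehat U+\widetilde U+\widehat{\widetilde U})^{\va}=16(U_0+U_4)(U_0-U_4)=16(U_0^2-U_4^2)\in\cl^0_{p,q}$, which is the product-of-homogeneous-sums mechanism you anticipated.
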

\begin{proof} 
We verified the basis-free formulas (\ref{c1_5}) -- (\ref{c8_5}) for $C_1$, \ldots, $C_8$ using Symbolic Geometric Algebra package for SymPy \cite{Python2}.

We also present an analytical proof of the basis-free formula (\ref{c2_5}) for $C_{(2)}$.

Step 1: Let us denote
    \begin{eqnarray}
        d_1 :=
            U\widehat{\widetilde{U}}  
            + 
             \widehat{U}\widetilde{U}  
            +
             (\widehat{U}\widetilde{U})^\bigtriangleup  
            +
             (U\widehat{\widetilde{U}})^\bigtriangleup,\nonumber \\
        d_2 :=  
            U\widehat{U}  
            +
            \widehat{\widetilde{U}}\widetilde{U} 
            +
             (\widehat{U}U)^\bigtriangleup 
            +
             (\widetilde{U}\widehat{\widetilde{U}})^\bigtriangleup, \nonumber \\
        d_3 :=
            U\widetilde{U}  
            +
            \widehat{\widetilde{U}}\widehat{U}
            +
            (\widehat{U}\widehat{\widetilde{U}})^\bigtriangleup
            +
            (\widetilde{U}U)^\bigtriangleup,\nonumber \\
        d_4 :=
            U\widehat{U}^\bigtriangleup  
            +
            \widehat{\widetilde{U}}\widetilde{U}^\bigtriangleup
            +
            \widetilde{U}\widehat{\widetilde{U}}^\bigtriangleup
            +
            \widehat{U}U^\bigtriangleup,\nonumber\\
        d_5 :=
            U\widetilde{U}^\bigtriangleup  
            +
            \widehat{\widetilde{U}}\widehat{U}^\bigtriangleup
            +
             \widetilde{U}U^\bigtriangleup
            +
             \widehat{U}\widehat{\widetilde{U}}^\bigtriangleup,\nonumber \\
        d_6 :=
            UU^\bigtriangleup  
            +
             \widehat{U}\widehat{U}^\bigtriangleup 
            +
             \widetilde{U}\widetilde{U}^\bigtriangleup 
            +
             \widehat{\widetilde{U}}\widehat{\widetilde{U}}^\bigtriangleup\nonumber,\\
        d_7 :=
            U\widehat{\widetilde{U}}^\bigtriangleup
            +
             \widehat{\widehat{U}}U^\bigtriangleup
            +
             \widehat{U}\widetilde{U}^\bigtriangleup 
            +
             \widetilde{U}\widehat{U}^\bigtriangleup .\label{proof_c3_5_c1}
    \end{eqnarray}
    Using the properties (\ref{conj_prop2}) and (\ref{conj_prop3}), we can prove that the projections onto the subspace of the grade $0$ of all terms in each of the 7 expressions (\ref{proof_c3_5_c1}) are equal to each other. For example,  
    $$
            \frac{1}{4}\langle d_1 \rangle_0=
            \langle U\widehat{\widetilde{U}} \rangle_0 
            = 
            \langle \widehat{U}\widetilde{U} \rangle_0 
            =
            \langle (\widehat{U}\widetilde{U})^\bigtriangleup \rangle_0 
            =
            \langle (U\widehat{\widetilde{U}})^\bigtriangleup \rangle_0
    .$$
    Using (\ref{recur}), (\ref{proj}), (\ref{c1_5}), and (\ref{proof_c3_5_c1}), we get
    \begin{eqnarray}
           \!\!\!\!\!\! &&C_{(2)}=4\langle U(U-C_{(1)})\rangle_0
            =-4\langle U(
                \widehat{\widetilde{U}}
                +
                \widehat{U}
                +
                \widetilde{U}
                +
                \widehat{U}^\bigtriangleup
                +
                \widetilde{U}^\bigtriangleup
                +
                U^\bigtriangleup
                +
                \widehat{\widetilde{U}}^\bigtriangleup
            )\rangle_0\nonumber\\
           \!\!\!\!\!\!&&=-\langle d_1+d_2+d_3+d_4+d_5+d_6+d_7\rangle_0\nonumber
            ,
    \end{eqnarray}
    which differs from (\ref{c2_5}) only by the scalar part operation.
    
    Step 2:
    It follows from (\ref{u+u^g}), (\ref{u+u^r}), and $\widetilde{d_3}=\widehat{d_3}^{\bigtriangleup}=d_3$ that $d_1\in\cl_{p,q}^0$, $d_2, d_3\in\cl_{p,q}^{0}\oplus\cl_{p,q}^5$. However, using (\ref{conjug}) and $\langle UV\rangle_n = \langle VU\rangle_n$ for odd $n$, we can prove that  $d_2$ and $d_3$  belong to $
    \cl_{p,q}^0$:
    \begin{eqnarray}
        \langle d_2 \rangle_5
        =
        \langle
        U\widehat{U}  
        +
        \widehat{\widetilde{U}}\widetilde{U} 
        -
         \widehat{U}U
        -
         \widetilde{U}\widehat{\widetilde{U}} 
        \rangle_5 = 
        \langle
        U\widehat{U}  
        +
        \widehat{\widetilde{U}}\widetilde{U} 
        -
         U\widehat{U}
        -
         \widehat{\widetilde{U}}\widetilde{U}
        \rangle_5
        =0
        ,\nonumber\\
        \langle d_3 \rangle_5
        =
        \langle
        U\widetilde{U}  
        +
        \widehat{\widetilde{U}}\widehat{U}
        -
        \widehat{U}\widehat{\widetilde{U}}
        -
        \widetilde{U}U
        \rangle_5
        =
        \langle
        U\widetilde{U}  
        +
        \widehat{\widetilde{U}}\widehat{U}
        -
        \widehat{\widetilde{U}}\widehat{U}
        -
        U\widetilde{U}
        \rangle_5
        =0
        .\nonumber
    \end{eqnarray}
    Using (\ref{u+u^gg}), we get that the following sum can be represented in the form
    \begin{eqnarray}
        &&d_4+d_5+d_6+d_7=
        (U+\widehat{U}+\widetilde{U}+\widehat{\widetilde{U}})
        (U+\widehat{U}+\widetilde{U}+\widehat{\widetilde{U}})^\bigtriangleup\nonumber\\
        &&=16(
        U_0
        +
        U_4)
        (
        U_0
        -
        U_4)
        =
        16(
        U_0^2
        -
        U_4^2
        )\in\cl^0_{p,q}
        ,
        \nonumber
        \label{proof_c2_5_c4-c7}
    \end{eqnarray}
because $U_0^2, U_4^2 \in\cl_{p,q}^0$. Therefore the expression (\ref{c2_5}) belongs to $\cl_{p,q}^0$.
\end{proof}

\section{Formulas for the Characteristic Polynomial Coefficients in the Special Cases}\label{section6} 
In this section, we present the basis-free formulas for the characteristic polynomial coefficients in the cases with some specific conditions on the element. In particular, we obtain the basis-free formulas for vectors and basis elements in the case of arbitrary $n$, and rotors in the cases $n\leq 5$. 

We remind that $N:= 2^{[\frac{n+1}{2}]}$, where square brackets mean taking the integer part.
\begin{lem}\label{thm:ch_poly_e}
    For $U=ue$, $u\in\mathbb{R}$, where $e$ is the identity element of $\cl_{p,q}$, we have
    \begin{eqnarray}
    C_{(i)}=(-1)^{i+1}\binom{N}{i}U^{i}, \qquad i=1,\ldots,N,\label{q2}
    \end{eqnarray}
    where we denote the binomial coefficient by $\binom{n}{k}=\frac{n!}{k!(n-k)!}$.
\end{lem}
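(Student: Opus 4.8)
The plan is to compute the characteristic polynomial $\varphi_U(\lambda)$ directly from its definition (\ref{ch_poly}) as $\det(\beta(\lambda e - U))$, exploiting that $\beta$ is an algebra homomorphism. First I would observe that $\beta(e)$ is the identity of the matrix algebra $M_{p,q}$: the identity matrix $I_{2^{n/2}}$ when $n$ is even, and the pair $(I_{2^{(n-1)/2}},I_{2^{(n-1)/2}})$ when $n$ is odd. Hence for $U = ue$ we have $\beta(\lambda e - U) = (\lambda - u)\beta(e)$, i.e.\ a scalar multiple of the identity, so that $\det(\beta(\lambda e - U)) = (\lambda - u)^N$. Indeed, in the even case this is $\det((\lambda-u)I_{2^{n/2}}) = (\lambda-u)^{2^{n/2}}$ with $2^{n/2}=N$, and in the odd case it is the product of the two block determinants, $(\lambda-u)^{2^{(n-1)/2}}\cdot(\lambda-u)^{2^{(n-1)/2}} = (\lambda-u)^{2^{(n+1)/2}} = (\lambda-u)^N$.

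Then I would expand by the binomial theorem,
\[
\varphi_U(\lambda) = (\lambda - u)^N = \sum_{i=0}^{N}\binom{N}{i}(-1)^{i}u^{i}\lambda^{N-i},
\]
and match coefficients with $\varphi_U(\lambda) = \lambda^N - C_{(1)}\lambda^{N-1} - \cdots - C_{(N)}$ from (\ref{ch_poly}): the coefficient of $\lambda^{N-i}$ yields $-C_{(i)} = (-1)^{i}\binom{N}{i}u^{i}$ for $i = 1,\ldots,N$, that is, $C_{(i)} = (-1)^{i+1}\binom{N}{i}u^{i}$. Finally I would rewrite this in the claimed form using $U^{i} = (ue)^{i} = u^{i}e \equiv u^{i}$ under the identification $\cl_{p,q}^0 \equiv \mathbb{R}$, $e \equiv 1$.

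As an alternative argument and a cross-check, one can avoid $\beta$ entirely and proceed by induction using the recursion (\ref{recur}): writing $U_{(k)} = a_k u^{k} e$ one finds $a_1 = 1$ and $a_{k+1} = a_k(1 - N/k)$, whence $a_k = (-1)^{k-1}\binom{N-1}{k-1}$ and $C_{(k)} = \frac{N}{k}a_k u^{k} = (-1)^{k+1}\binom{N}{k}u^{k}$, using the identity $\frac{N}{k}\binom{N-1}{k-1} = \binom{N}{k}$. There is essentially no serious obstacle here; the only point deserving a line of care is the meaning of $\det$ on $M_{p,q}$ in the odd-dimensional (direct-sum) case, which is settled by the block-product computation above.
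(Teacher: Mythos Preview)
Your main argument is correct and essentially identical to the paper's: compute $\varphi_U(\lambda)={\rm Det}((\lambda-u)e)=(\lambda-u)^N$, expand by the binomial theorem, and read off the coefficients. The paper does not spell out the even/odd block-determinant detail nor include your inductive cross-check via (\ref{recur}), but these are only extra care, not a different method.
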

\begin{proof}
    Using Definition \ref{def:ch_poly}, we get
    \begin{eqnarray}
        \varphi_U(\lambda) &=&{\rm Det}(\lambda e - ue)={\rm Det}((\lambda - u)e)\nonumber\\
        &=&(\lambda - u)^N{\rm Det}(e)=\sum_{i=0}^{N} \binom{N}{i}(-u)^i\lambda^{N-i}.\label{q1}
    \end{eqnarray}
    Comparing (\ref{q1}) with (\ref{ch_poly}),
    we obtain (\ref{q2}).
\end{proof}
\begin{lem}\label{lemma:uk_show}
The elements $U_{(k)}$ and the characteristic polynomial coefficients $C_{(k)}$ from Theorem \ref{theorem:mainTheorem} could be expressed in the following way:
\begin{eqnarray}
    U_{(k)}&=& U^k-\sum_{i=1}^{k-1}C_{(i)} U^{k-i}, \quad 
    k=1,2,\ldots,N, \label{lemm:uk_show}\\
    C_{(k)}&=&\frac{N}{k}\langle U^k\rangle_0-\frac{N}{k}\sum_{i=1}^{k-1}C_{(i)}\langle U^{k-i}\rangle_0, \quad
    k=1,2,\ldots,N.\label{lemm:ck_show}
\end{eqnarray}
\end{lem}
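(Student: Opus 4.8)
The plan is to prove both identities together by induction on $k$, taking as input only the recursion of Theorem~\ref{theorem:mainTheorem}: $U_{(1)}=U$, $U_{(k+1)}=U(U_{(k)}-C_{(k)})$, and $C_{(k)}=\frac{N}{k}\langle U_{(k)}\rangle_0$. First I would observe that (\ref{lemm:ck_show}) is a purely formal consequence of (\ref{lemm:uk_show}): each coefficient $C_{(i)}$ lies in $\cl^0_{p,q}\equiv\mathbb{R}$, so applying $\frac{N}{k}\langle\,\cdot\,\rangle_0$ to (\ref{lemm:uk_show}) and using the linearity (\ref{proj}) of the grade projection yields
\begin{eqnarray}
C_{(k)}&=&\frac{N}{k}\langle U_{(k)}\rangle_0=\frac{N}{k}\langle U^k\rangle_0-\frac{N}{k}\sum_{i=1}^{k-1}C_{(i)}\langle U^{k-i}\rangle_0.\nonumber
\end{eqnarray}
Hence the whole task reduces to establishing (\ref{lemm:uk_show}).

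For the induction, the base case $k=1$ is immediate: the sum is empty and $U_{(1)}=U=U^1$. For the inductive step, suppose (\ref{lemm:uk_show}) holds for some $k$ with $1\le k\le N-1$. Using the recursion together with the fact that the scalars $C_{(i)}$ commute with $U$, I would compute
\begin{eqnarray}
U_{(k+1)}&=&U U_{(k)}-C_{(k)}U=U\Bigl(U^k-\sum_{i=1}^{k-1}C_{(i)}U^{k-i}\Bigr)-C_{(k)}U\nonumber\\
&=&U^{k+1}-\sum_{i=1}^{k-1}C_{(i)}U^{k+1-i}-C_{(k)}U=U^{k+1}-\sum_{i=1}^{k}C_{(i)}U^{k+1-i},\nonumber
\end{eqnarray}
which is precisely (\ref{lemm:uk_show}) with $k+1$ in place of $k$, completing the induction.

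This argument involves no serious obstacle; it is a routine telescoping induction. The only points deserving a word of care are (i) that the $C_{(i)}$ are genuine real scalars, which is what licenses both pulling them through the geometric product in the inductive step and through $\langle\,\cdot\,\rangle_0$ in the reduction of (\ref{lemm:ck_show}) to (\ref{lemm:uk_show}), and (ii) the index bookkeeping: the recursion defining $U_{(k+1)}$ is needed only for $k\le N-1$, which is exactly the range the inductive step uses, so both formulas hold for the full range $k=1,\ldots,N$.
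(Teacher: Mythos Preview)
Your proof is correct and follows essentially the same approach as the paper: the paper simply says ``using the recursive formula for $U_{(k)}$ $k$ times'' and then derives (\ref{lemm:ck_show}) by combining $C_{(k)}=\frac{N}{k}\langle U_{(k)}\rangle_0$ with (\ref{lemm:uk_show}), which is exactly your induction spelled out in detail.
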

\begin{proof}
    Using the recursive formula for $U_{(k)}$ (\ref{recur})  $k$ times, we obtain (\ref{lemm:uk_show}). Combining the formula for $C_{(k)}$ (\ref{recur}) and (\ref{lemm:uk_show}), we obtain (\ref{lemm:ck_show}).
\end{proof}
\begin{thm}\label{lemma:c_2s-1=0}
    Let us have an element $U\in\cl_{p,q}$  and some $k\in\{1,2,\ldots,\frac{N}{2}\}$ such that
    \begin{eqnarray}
        \langle U^{2s-1} \rangle_0 =0,\qquad  s=1, 2, \ldots, k.\label{proof:induc_odd}
    \end{eqnarray} 
    Then the characteristic polynomial coefficients with odd indices of the element $U$ are equal to~$0$:
    \begin{eqnarray}
        C_{(2s-1)}=0,\qquad s=1, 2, \ldots, k.\label{lem:c_2s+1=0}
    \end{eqnarray}
    \end{thm}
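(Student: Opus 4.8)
The plan is to prove \eqref{lem:c_2s-1=0} by strong induction on $s$, using the explicit recursion \eqref{lemm:ck_show} from Lemma \ref{lemma:uk_show} together with the hypothesis \eqref{proof:induc_odd}. The formula \eqref{lemm:ck_show} expresses $C_{(k)}$ as a multiple of $\langle U^k\rangle_0$ minus a combination of the previously computed coefficients $C_{(1)},\ldots,C_{(k-1)}$ weighted by scalar parts $\langle U^{k-i}\rangle_0$. The key observation is a parity argument: $\langle U^j\rangle_0$ enters the formula for $C_{(k)}$ only when $j=k-i$ for some $1\le i\le k-1$, and the whole expression for $C_{(k)}$ thus involves only scalar parts $\langle U^j\rangle_0$ with $1\le j\le k$ and coefficients $C_{(i)}$ with $i<k$.

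First I would set up the induction: for $k=1$, \eqref{lemm:ck_show} gives $C_{(1)}=N\langle U\rangle_0$, which vanishes by the $s=1$ case of the hypothesis. For the inductive step, fix $m$ with $1\le m\le k$ and assume $C_{(2t-1)}=0$ for all $t<m$ (equivalently, all odd-indexed coefficients below $2m-1$ vanish). Applying \eqref{lemm:ck_show} with $k=2m-1$:
\begin{eqnarray}
C_{(2m-1)}=\frac{N}{2m-1}\langle U^{2m-1}\rangle_0-\frac{N}{2m-1}\sum_{i=1}^{2m-2}C_{(i)}\langle U^{2m-1-i}\rangle_0.\nonumber
\end{eqnarray}
The first term is $0$ by \eqref{proof:induc_odd} since $2m-1\le 2k-1$. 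In the sum, for each $i$ the product $C_{(i)}\langle U^{2m-1-i}\rangle_0$ must vanish: if $i$ is even, then $2m-1-i$ is odd and lies in $\{1,3,\ldots,2m-3\}\subseteq\{1,3,\ldots,2k-1\}$, so $\langle U^{2m-1-i}\rangle_0=0$ by hypothesis; if $i$ is odd, then $i=2t-1$ with $t<m$, so $C_{(i)}=0$ by the induction hypothesis. Hence every summand is zero and $C_{(2m-1)}=0$, completing the induction.

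The step most worth care is the bookkeeping of index ranges — verifying that whenever $i$ is even the exponent $2m-1-i$ is a positive odd number no larger than $2k-1$, so that the hypothesis genuinely applies, and that when $i$ is odd it is of the form $2t-1$ with $t$ strictly less than $m\le k$, so the inductive hypothesis covers it. There is no analytic difficulty here; the entire argument is a clean parity/induction consequence of Lemma \ref{lemma:uk_show}, and I would present it in just a few lines after displaying the specialization of \eqref{lemm:ck_show} to odd $k$.
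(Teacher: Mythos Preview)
Your proof is correct and follows essentially the same approach as the paper: both argue by induction on $s$, invoke the explicit formula \eqref{lemm:ck_show} from Lemma~\ref{lemma:uk_show} at the odd index $2m-1$, and use the parity split (odd $i$ killed by the inductive hypothesis on $C_{(i)}$, even $i$ killed by the hypothesis on $\langle U^{\text{odd}}\rangle_0$). The only cosmetic difference is that the paper first discards all odd-indexed $C_{(i)}$ from the sum and then observes the remaining exponents are odd, whereas you handle each summand by a case distinction; the underlying argument is identical.
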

    \begin{proof}
        The proof is by induction on $s$.
        First let us check the base case $s=1$. Using (\ref{recur}) and (\ref{proof:induc_odd}), we get
        $$C_{(1)}=N\langle U_{(1)} \rangle_0=N\langle U \rangle_0=0.$$
         For the inductive step, assume that for some $k\in\{1,2,\ldots,\frac{N}{2}-1\}$ the following holds 
        \begin{eqnarray}
             C_{(2s-1)}=0, \quad s=1,2,\ldots,k. \label{proof:induc_hypo}
        \end{eqnarray}
        Now let us show that (\ref{proof:induc_hypo}) holds for $s=k+1$. Using Lemma \ref{lemma:uk_show}, we obtain
        \begin{eqnarray}
            C_{(2k+1)}=\frac{N}{2k+1}\langle U^{2k+1}\rangle_0-\frac{N}{2k+1}\sum_{i=1}^{2k}C_{(i)}\langle U^{2k+1-i}\rangle_0.\label{proof:induc_exp1}
        \end{eqnarray}
        By the induction hypothesis (\ref{proof:induc_hypo}), the expression (\ref{proof:induc_exp1}) could be simplified as follows
        \begin{eqnarray}
            C_{(2k+1)}=\frac{N}{2k+1}\langle U^{2k+1}\rangle_0-\frac{N}{2k+1}\sum_{j=1}^{k}C_{(2j)}\langle U^{2k+1-2j}\rangle_0.\nonumber
        \end{eqnarray}
    Using (\ref{proof:induc_odd}) and the fact that $2k+1$ and $2k+1-2j$ are odd for all $k$ and $j$, we get $C_{(2k+1)}=0$. This concludes the induction step and the proof.
\end{proof}
\begin{thm}\label{thm:like_basis_el}
    Let us have an element $U\in\cl_{p,q}$ such that 
    \begin{eqnarray}
        \langle U \rangle_0 =0, \qquad
        U^{2}\in \cl_{p,q}^0.\label{lem:el_similar_tobasis}
    \end{eqnarray}
    Then the following formulas hold
    \begin{eqnarray}
        &&C_{(2s-1)}=0,\qquad s=1,2,\ldots,\frac{N}{2},\label{lem:c2k-1=0}\\
        &&C_{(2s)}=(-1)^{s-1}\binom{\frac{N}{2}}{s}U^{2s}, \qquad s=1,2,\ldots,\frac{N}{2}.\label{lem:c2k_show}
    \end{eqnarray}
\end{thm}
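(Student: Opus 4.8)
The plan is to handle the two claims separately, exactly as they are stated. The hypothesis $U^{2}\in\cl_{p,q}^{0}$ means $U^{2}=\mu e$ for some $\mu\in\mathbb{R}$, and hence $U^{2s}=\mu^{s}e$ for every $s\geq 1$. First I would dispatch (\ref{lem:c2k-1=0}): for each $s$ we have $\langle U^{2s-1}\rangle_{0}=\langle U^{2s-2}U\rangle_{0}=\mu^{s-1}\langle U\rangle_{0}=0$ by the first hypothesis in (\ref{lem:el_similar_tobasis}), so condition (\ref{proof:induc_odd}) of Theorem \ref{lemma:c_2s-1=0} holds with $k=\tfrac{N}{2}$, and (\ref{lem:c2k-1=0}) follows at once.

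For (\ref{lem:c2k_show}) I would argue by induction on $s$ using the closed form (\ref{lemm:ck_show}) of Lemma \ref{lemma:uk_show}. Write $M:=\tfrac{N}{2}$. Taking $k=2s$ in (\ref{lemm:ck_show}), discarding the vanishing odd-index terms (just proved), and using $\langle U^{2s}\rangle_{0}=\mu^{s}$ and $\langle U^{2(s-j)}\rangle_{0}=\mu^{s-j}$, the identity $\tfrac{N}{2s}=\tfrac{M}{s}$ collapses the recursion to
\begin{eqnarray}
C_{(2s)}=\frac{M}{s}\,\mu^{s}-\frac{M}{s}\sum_{j=1}^{s-1}C_{(2j)}\,\mu^{s-j}.\nonumber
\end{eqnarray}
Substituting the induction hypothesis $C_{(2j)}=(-1)^{j-1}\binom{M}{j}\mu^{j}$ for $j<s$ and absorbing the leading $\mu^{s}$ term as the $j=0$ summand turns the right-hand side into $\tfrac{M}{s}\mu^{s}\sum_{j=0}^{s-1}(-1)^{j}\binom{M}{j}$; the base case $s=1$ is just $C_{(2)}=\tfrac{N}{2}\langle U^{2}\rangle_{0}=M\mu$.

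The one non-mechanical ingredient, and what I regard as the main (though standard) obstacle, is the partial alternating binomial identity $\sum_{j=0}^{s-1}(-1)^{j}\binom{M}{j}=(-1)^{s-1}\binom{M-1}{s-1}$, proved by induction on $s$ via Pascal's rule; combined with $\tfrac{M}{s}\binom{M-1}{s-1}=\binom{M}{s}$ it yields $C_{(2s)}=(-1)^{s-1}\binom{M}{s}\mu^{s}=(-1)^{s-1}\binom{N/2}{s}U^{2s}$, completing the induction. As an alternative, more conceptual route one may pass to the representation $\beta$: since $\beta(U)^{2}=\mu I$, for $\mu\neq 0$ the matrix $\beta(U)$ is semisimple with eigenvalues among $\pm\sqrt{\mu}$, and $\mathrm{Tr}(U)=N\langle U\rangle_{0}=0$ forces equal multiplicities $N/2$, so $\varphi_{U}(\lambda)=(\lambda^{2}-\mu)^{N/2}$; the binomial theorem then reads off all $C_{(j)}$ simultaneously, with the degenerate case $\mu=0$ giving the nilpotent situation $\varphi_{U}(\lambda)=\lambda^{N}$.
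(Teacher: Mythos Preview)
Your argument is correct. The treatment of (\ref{lem:c2k-1=0}) is identical to the paper's: reduce to Theorem \ref{lemma:c_2s-1=0} by observing $\langle U^{2s-1}\rangle_0=\mu^{s-1}\langle U\rangle_0=0$. For (\ref{lem:c2k_show}) both arguments proceed by induction through the recursive machinery of Theorem \ref{theorem:mainTheorem}/Lemma \ref{lemma:uk_show}, but the bookkeeping differs: the paper first shows $U_{(2s)}\in\cl_{p,q}^0$, then applies the recursion twice to obtain the one-step relation $C_{(2s)}=\frac{2s-2-N}{2s}\,C_{(2s-2)}U^{2}$, so that the induction step is just Pascal's identity $\binom{M}{s}=\frac{M-s+1}{s}\binom{M}{s-1}$; you instead plug the full inductive hypothesis into (\ref{lemm:ck_show}) and close with the partial alternating sum $\sum_{j=0}^{s-1}(-1)^{j}\binom{M}{j}=(-1)^{s-1}\binom{M-1}{s-1}$. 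The paper's route is a bit cleaner since it avoids carrying the whole sum, while yours stays closer to the explicit formula of Lemma \ref{lemma:uk_show}. Your alternative via the representation $\beta$ is a genuinely different and more conceptual argument not in the paper: from $\beta(U)^2=\mu I$ and $\mathrm{Tr}(\beta(U))=0$ one gets $\varphi_U(\lambda)=(\lambda^2-\mu)^{N/2}$ directly, and the binomial expansion reads off all $C_{(k)}$ at once without induction.
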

\begin{proof}
It follows from (\ref{lem:el_similar_tobasis}) that 
    \begin{eqnarray}
        \langle U^{2s-1} \rangle_0 =0, \quad
        U^{2s}\in \cl_{p,q}^0,\quad
        s=1,2,\ldots,\frac{N}{2}.\label{u^2=num=>}
    \end{eqnarray}
Hence using Theorem \ref{lemma:c_2s-1=0}, we get (\ref{lem:c2k-1=0}).

The proof of (\ref{lem:c2k_show}) is by induction on $s$. First we need to prove some auxiliary identities and statements. Using (\ref{lemm:uk_show}), we get
\begin{eqnarray}
    U_{(2s)}=U^{2s}-\sum_{i=1}^{2s-1}C_{(i)} U^{2s-i}, \quad s=1,2,\ldots,\frac{N}{2}.\label{exp_th-num}
\end{eqnarray}
Using (\ref{lem:c2k-1=0}), we obtain the following simplification of (\ref{exp_th-num}):
\begin{eqnarray}
    U_{(2s)}= U^{2s}-\sum_{j=1}^{s}C_{(2j)} U^{2s-2j}.\nonumber
\end{eqnarray}
Using (\ref{u^2=num=>}) and the fact that $2s$ and $2s-2j$ are even for all $s$ and $j$, we finally get
\begin{eqnarray}
    U_{(2s)}\in \cl_{p,q}^0.\label{lem:u2k_isgrade0}
\end{eqnarray}
Using (\ref{recur}) and (\ref{lem:u2k_isgrade0}), we get
\begin{eqnarray}
     C_{(2s)}=\frac{N}{2s}\langle U_{(2s)}\rangle_0=\frac{N}{2s}U_{(2s)}.\label{lem:proof_c2k_u2k}
\end{eqnarray}
Using the recursive formulas (\ref{recur}) two times, (\ref{lem:c2k-1=0}), and (\ref{lem:proof_c2k_u2k}), we obtain 
    \begin{eqnarray}
     C_{(2s)}&=&\frac{N}{2s}U_{(2s)}=\frac{N}{2s}U(U_{(2s-1)}-C_{(2s-1)})=\frac{N}{2s}U U_{(2s-1)}\label{lem:proof_c2k_u2k_2}\\
     &=&\frac{N}{2s}U^2(U_{(2s-2)}-C_{(2s-2)}).\nonumber
    \end{eqnarray}
 Using (\ref{lem:proof_c2k_u2k}), we finally get the following recursive identity
    \begin{eqnarray}
     C_{(2s)}
     &=&\frac{N}{2s}U^2(\frac{2s-2}{N}C_{(2s-2)}-C_{(2s-2)})\nonumber\\
     &=&\frac{N}{2s}U^2(\frac{2s-2-N}{N})C_{(2s-2)}=\frac{2s-2-N}{2s}C_{(2s-2)}U^2.\label{lem:c2k_c2k}
    \end{eqnarray}
    
Now let us proceed to the actual proof of (\ref{lem:c2k_show}) by induction on $s$. First let us check the base case $s = 1$. The base case follows from (\ref{lem:proof_c2k_u2k_2}): $$C_{(2)}=\frac{N}{2}U^2=\binom{\frac{N}{2}}{1}U^2.$$
For the inductive step, assume that for some $k\in\{1,2,\ldots,\frac{N}{2}-1\}$
    \begin{eqnarray}
        C_{(2s)}=(-1)^{s-1}\binom{\frac{N}{2}}{s}U^{2s}, \qquad 
        s=1,2,\ldots,k.\label{proof:ind_step1}
    \end{eqnarray}
Now let us show that (\ref{proof:ind_step1}) holds for $s=k + 1$. Using the formula (\ref{lem:c2k_c2k}), the inductive hypothesis (\ref{proof:ind_step1}), and the identity $\binom{n}{k}=\frac{n-k+1}{k}\binom{n}{k-1}$, we get
\begin{eqnarray}
    &&C_{(2k+2)}=\frac{k-\frac{N}{2}}{k+1} C_{(2k)} U^{2}=\frac{k-\frac{N}{2}}{k+1}(-1)^{k-1}\binom{\frac{N}{2}}{k}U^{2k}U^2\nonumber\\
    &&=(-1)^{k}\frac{\frac{N}{2}-(k+1)+1}{k+1}\binom{\frac{N}{2}}{(k+1)-1}U^{2k}U^2=(-1)^{k}\binom{\frac{N}{2}}{k+1}U^{2k+2},\nonumber
\end{eqnarray}which concludes the induction step and the proof.
\end{proof}
\begin{ex}
For elements $U\in\cl_{p,q}$ that satisfy the conditions of Theorem~\ref{thm:like_basis_el}, we have
\begin{eqnarray}
n=1,2, 
&&C_{(1)}=0,\quad
    C_{(2)}=U^2;\nonumber\\
n=3,4,  
&&C_{(1)}=C_{(3)}=0,\quad
    C_{(2)}=2U^2,\quad
    C_{(4)}=-U^4;\nonumber\\
n=5,6,  &&C_{(1)}=C_{(3)}=C_{(5)}=C_{(7)}=0,\nonumber\\
&&C_{(2)}=4U^2,\quad
    C_{(4)}=-6U^4,\quad
    C_{(6)}=4U^6,\quad
    C_{(8)}=-U^8;\nonumber\\
n=7,8,  &&C_{(1)}=C_{(3)}=C_{(5)}=C_{(7)}=C_{(9)}=C_{(11)}=C_{(13)}=C_{(15)}=0,\nonumber\\
    &&C_{(2)}=8U^2,\quad
    C_{(4)}=-28U^4,\quad
    C_{(6)}=56U^6,\quad
    C_{(8)}=-70U^8,\nonumber\\
    &&C_{(10)}=56U^{10},\,\,
    C_{(12)}=-28U^{12},\,\,
    C_{(14)}=8U^{14},\,\,
    C_{(16)}=-U^{16}.\nonumber
 \end{eqnarray}
\end{ex}
\begin{ex}
Let us consider an element of the following form
        \begin{eqnarray}
            U=ue_{a_1 \ldots a_k},\qquad  u\in\mathbb{R},  \nonumber
        \end{eqnarray}
        where 
        $e_{a_1 \ldots a_k}$ is an basis element from (\ref{def:basis}) except the identity element $e$. The element $U$ satisfies the conditions of Theorem \ref{thm:like_basis_el}. Thus the formulas (\ref{lem:c2k-1=0}) and (\ref{lem:c2k_show}) are valid for the element $U$.
\end{ex}
\begin{ex}
    An element of grade $1$ (a vector) 
    $$U=\sum_{i=1}^{n}u_ie_i\in \cl_{p,q}^1,\qquad u_i\in\mathbb{R},$$  
    satisfies the conditions of Theorem \ref{thm:like_basis_el}.  Thus the formulas (\ref{lem:c2k-1=0}) and (\ref{lem:c2k_show}) are valid for the vector $U$.
\end{ex}
\begin{ex}
    Let us consider the following element
    $$U=e_1+e_5+e_{15}\in \cl_{5,0}.$$
    For this element, we have
    \begin{eqnarray}
        \langle U\rangle_0 =0,\qquad 
        U^{2}=e,\label{ex:prop_e1+e2+e12}
    \end{eqnarray}
    meaning the element $U$ satisfies the conditions of Theorem \ref{thm:like_basis_el}. Using Theorem \ref{thm:like_basis_el}, we get
    \begin{eqnarray*}
    &&C_{(1)}=C_{(3)}=C_{(5)}=C_{(7)}=0,\\
    &&C_{(2)}=4,\quad C_{(4)}=-6,\quad C_{(6)}=4,\quad C_{(8)}=-1.
    \end{eqnarray*}
\end{ex}
\begin{ex}
    Let us consider the following element  
    $$U=e_1+e_2+e_{45}\in \cl_{5,0},\qquad N=8.$$
     For this element, we have
    \begin{eqnarray}
        U^{2}\in\cl_{p,q}^0\oplus\cl_{p,q}^3,\qquad  
        U^{2}\notin \cl_{p,q}^0,\qquad
        \langle U^{2s-1}\rangle_0 =0,\quad
        s=1, 2, 3, 4,\label{ex:prop_e1+e2+e45}
    \end{eqnarray}
    meaning the element $U$ satisfies the conditions of Theorem \ref{lemma:c_2s-1=0}, but does not satisfy the conditions of Theorem~\ref{thm:like_basis_el}. Using Theorem \ref{lemma:c_2s-1=0}, we get
    $$C_{(1)}=C_{(3)}=C_{(5)}=C_{(7)}=0.$$
    Using the recursive formulas from Theorem \ref{theorem:mainTheorem}, we verified  that $C_{(2)}=4$, which is not equal to $4U^2$. 
\end{ex}
\begin{ex}\label{ex_th63_2}
    Let us consider the following element
$$U=e_{3}+e_{12}+e_{15}+e_{45}+e_{234}\in \cl_{5,0}, \qquad N=8$$
    For this element, we have 
    \begin{eqnarray}
        \langle U\rangle_0 = \langle U^3\rangle_0 = 0,\qquad 
        \langle U^{5}\rangle_0 \neq 0,\nonumber
    \end{eqnarray}
   meaning the element $U$ satisfies the conditions of Theorem \ref{lemma:c_2s-1=0} for $k=2$, which is less than $\frac{N}{2}=4$. Using Theorem \ref{lemma:c_2s-1=0}, we get
    $$C_{(1)}=C_{(3)}=0.$$
    Using the recursive formulas from Theorem \ref{theorem:mainTheorem}, we verified  that $C_{(5)}=-64\neq0$. This means that the condition on $k$ in Theorem \ref{lemma:c_2s-1=0} is significant.
\end{ex}
\bigskip
Let us consider the following well-known spin groups \cite{Lounesto, Hestenes, Doran}
\begin{eqnarray}
    {\rm Spin_{+}}(p,q)=\{ 
    U\in \cl_{p,q} \,\,|\,\,  \widehat{U}=U,\quad  \widetilde{U}U=e,\quad 
       U\cl_{p,q}^1U^{-1}\subset \cl_{p,q}^1\}.\label{def:spin} 
\end{eqnarray}
The elements of these groups are often called rotors. Spin groups have a wide range of applications in physics, computer science, and engineering. In the following theorem, we present basis-free formulas for characteristic polynomial coefficients of rotors in the cases $n\leq 5$.  
\begin{thm}\label{thm:spin} For an arbitrary element $U\in {\rm Spin}_{+}(p,q)$, we have the following basis-free formulas for the characteristic polynomial coefficients $C_{(k)}\in\cl^0_{p,q}$, $k=1, 2, \ldots, N$.\\
In the case $n=1$, we have
\begin{eqnarray*}
    C_{(1)}=2U,\qquad
    C_{(2)}=-e.
\end{eqnarray*}
In the case $n=2$, we have
\begin{eqnarray*}
    C_{(1)}=U+\widetilde{U},\qquad  C_{(2)}=-e.
\end{eqnarray*}
In the case $n=3$, we have
\begin{eqnarray*}
    &&C_{(1)}=C_{(3)}=
    2(U+\widetilde{U}),\quad
    C_{(2)}=-(
    4e+
    U^2+
    \widetilde{U}^2
    ),
    \quad C_{(4)}=-e.
\end{eqnarray*}
In the case $n=4$, we have
\begin{eqnarray*}
    C_{(1)}=C_{(3)}&=&
    U+\widetilde{U}+U^\bigtriangleup+\widetilde{U}^\bigtriangleup,\nonumber\\
    C_{(2)}&=&-(
    2e+
    UU^\bigtriangleup+
    U\widetilde{U}^\bigtriangleup+\widetilde{U}U^\bigtriangleup+
    \widetilde{U}\widetilde{U}^\bigtriangleup),\nonumber\\
    C_{(4)}&=&-e.
\end{eqnarray*}
In the case $n=5$, we have
\begin{eqnarray*}
    C_{(1)}=C_{(7)}&=&
    2(U+\widetilde{U}+U^\bigtriangleup+\widetilde{U}^\bigtriangleup),\\
    C_{(2)}=C_{(6)}&=&-(
    8e+
    U^2+
    \widetilde{U}^2+
    (U^2)^\bigtriangleup+
    (\widetilde{U}^2)^\bigtriangleup
    +
    4(UU^\bigtriangleup+
    U\widetilde{U}^\bigtriangleup\\
    &&+
    \widetilde{U}U^\bigtriangleup+
    \widetilde{U}\widetilde{U}^\bigtriangleup)),\\
    C_{(3)}=C_{(5)}&=&
    10(U+\widetilde{U}+U^\bigtriangleup+\widetilde{U}^\bigtriangleup)+2(
        U^2U^\bigtriangleup+
        U^2\widetilde{U}^\bigtriangleup+
        \widetilde{U}^2U^\bigtriangleup\\
        &&+
        \widetilde{U}^2\widetilde{U}^\bigtriangleup
        +U(U^2)^\bigtriangleup+
        U(\widetilde{U}^2)^\bigtriangleup+
        \widetilde{U}(U^2)^\bigtriangleup+
        \widetilde{U}(\widetilde{U}^2)^\bigtriangleup),\\\
    C_{(4)}&=&-(18e+8(
    UU^\bigtriangleup+
    U\widetilde{U}^\bigtriangleup+
    \widetilde{U}U^\bigtriangleup+
    \widetilde{U}\widetilde{U}^\bigtriangleup)
    +4(
    (U^2)^\bigtriangleup\\
    &&+
    (\widetilde{U}^2)^\bigtriangleup
    +
    U^2+
    \widetilde{U}^2)
    +U^2(U^2)^\bigtriangleup
    +U^2(\widetilde{U}^2)^\bigtriangleup
    +\widetilde{U}^2(U^2)^\bigtriangleup\\
    &&+
    \widetilde{U}^2(\widetilde{U}^2)^\bigtriangleup),\\
    C_{(8)}&=&-e.
\end{eqnarray*}
\end{thm}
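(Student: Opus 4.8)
The plan is to derive each formula by specializing the general basis-free formulas already established --- Theorem~\ref{theorem2} for $n\le4$ and Theorem~\ref{thm:ck_n=5} for $n=5$ --- to a rotor. The proof is therefore pure algebraic simplification: no new structural ingredient is required, and the third defining condition $U\cl_{p,q}^{1}U^{-1}\subset\cl_{p,q}^{1}$ of ${\rm Spin}_{+}(p,q)$ is never used; only $\widehat{U}=U$ and $\widetilde{U}U=e$ enter. (This is consistent with the fact that for $n\le5$ the first two conditions already imply the third.)

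First I would record the elementary consequences of the two relations. Since $\widehat{U}=U$, the grade involution acts trivially on $U$, so $\widehat{\widetilde U}=\widetilde U$, $\widehat U^{\bigtriangleup}=U^{\bigtriangleup}$ and $\widehat{\widetilde U}{}^{\bigtriangleup}=\widetilde U^{\bigtriangleup}$; hence after substitution only the four ``letters'' $U,\widetilde U,U^{\bigtriangleup},\widetilde U^{\bigtriangleup}$ survive. Since $\cl_{p,q}$ is finite-dimensional, the left inverse $\widetilde U$ of $U$ is automatically two-sided, so $U\widetilde U=\widetilde U U=e$; in particular $U$ and $\widetilde U$ commute, and every product of copies of $U$ and $\widetilde U$ collapses to a single power (writing $U^{-1}:=\widetilde U$), e.g. $U^{2}\widetilde U=U$ and $U^{2}\widetilde U^{2}=e$. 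Finally $e^{\bigtriangleup}=e$, so any $\bigtriangleup$-conjugate of a product that collapses to $e$ is again $e$, and ${\rm Det}(U)=-C_{(N)}=e$ forces $C_{(N)}=-e$ in every case. The palindromic form $C_{(k)}=C_{(N-k)}$ appearing in the statement will come out of the computation; a priori it also follows from $\widetilde U=U^{-1}$ and ${\rm Det}(U)=e$ --- since reversion is an anti-automorphism, $\beta(U)^{-1}=\beta(\widetilde U)$ is conjugate to $\beta(U)$, whence $\varphi_{U}(\lambda)=\lambda^{N}\varphi_{U}(1/\lambda)$ --- so it is enough to treat $C_{(1)},\ldots,C_{(N/2)}$ together with $C_{(N)}$.

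For $n\le4$ the substitution is short, and I would carry it out term by term in each of the four cases. Putting $\widehat U=U$ (hence $\widehat{\widetilde U}=\widetilde U$, and so on) into the formulas of Theorem~\ref{theorem2} and then repeatedly using $U\widetilde U=\widetilde U U=e$ together with the collapse identities above and $(\widehat U\widetilde U)^{\bigtriangleup}=(U\widetilde U)^{\bigtriangleup}=e$ yields the stated expressions directly; for instance, $(\widehat U\widetilde U)^{\bigtriangleup}=e$ removes the last term of $C_{(2)}$ in the case $n=4$ and makes $C_{(3)}=C_{(1)}$, while $U^{2}\widetilde U^{2}=e$ gives $C_{(4)}=-e$.

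The case $n=5$ is the same mechanism applied to the much longer formulas of Theorem~\ref{thm:ck_n=5}, and this is where essentially all of the work lies. After the substitution $\widehat U=U$, every maximal run of non-$\bigtriangleup$ letters in a summand collapses to a power $U^{a}$, and likewise the product inside each $\bigtriangleup$ collapses to a power of $U$; inspecting the formulas, only the powers $e,U^{\pm1},U^{\pm2}$ actually occur, so each summand becomes one of finitely many ``types'' $U^{a}(U^{b})^{\bigtriangleup}$ with $a,b\in\{-2,-1,0,1,2\}$. What remains is to count, for each of $C_{(2)},C_{(3)},C_{(4)}$ (and $C_{(8)}$; the rest follow from the palindromy, or from the identical count), the multiplicity with which each type occurs and to match it against the claimed formula --- for example, the $28$ terms of $C_{(2)}$ regroup after collapse into $8$ copies of $e$, one copy each of $U^{2},\widetilde U^{2},(U^{2})^{\bigtriangleup},(\widetilde U^{2})^{\bigtriangleup}$, and four copies each of $UU^{\bigtriangleup},U\widetilde U^{\bigtriangleup},\widetilde U U^{\bigtriangleup},\widetilde U\widetilde U^{\bigtriangleup}$, which is exactly the stated formula for $C_{(2)}$. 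This is elementary but long (the formulas for $C_{(3)},C_{(4)},C_{(5)}$ have of order sixty terms each), and I would organize it with the same term-grouping bookkeeping used in the Step~2 arguments of Theorems~\ref{theorem2} and~\ref{thm:ck_n=5}; the resulting identities have been, and can be, cross-checked with the symbolic Geometric Algebra packages. The main obstacle is precisely this bookkeeping for $n=5$: there is no conceptual difficulty, but tracking every term through the substitution and collapse with the correct multiplicities and signs is where an error is most likely to creep in.
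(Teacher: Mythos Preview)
Your approach is correct and is essentially the same as the paper's: the paper's proof simply states that the formulas follow straightforwardly from Theorem~\ref{theorem2} and Theorem~\ref{thm:ck_n=5} together with the conditions $\widehat{U}=U$ and $\widetilde{U}U=e$. Your additional observations---that the third spin condition is not needed, and that the palindromy $C_{(k)}=C_{(N-k)}$ follows from ${\rm Det}(U)=e$ and $\widetilde U=U^{-1}$---are nice refinements that the paper does not make explicit.
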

\begin{proof}
    The formulas follow straightforward from Theorem \ref{theorem2},  Theorem \ref{thm:ck_n=5} and the conditions $\widehat{U}=U$ and $\widetilde{U}U=e$ from the definition of the spin group~(\ref{def:spin}).
\end{proof}
 Note that similarly we can obtain the basis-free formulas for the characteristic polynomial coefficients for an arbitrary element $U\in {\rm Spin}_{+}(p,q)$ in the case $n=p+q=6$ by simplification the basis-free formulas from Section~\ref{section8} and Appendix~\ref{appendix} and using the conditions $\widehat{U}=U$ and $\widetilde{U}U=e$ from the definition of the spin group~(\ref{def:spin}). We do not present these formulas in this paper because of their cumbersomeness. 

\section{The General Form of the Obtained Formulas for the Characteristic Polynomial Coefficients}\label{section7}
In this section, we introduce a method to obtain a general form of the basis-free formulas for all characteristic polynomial coefficients in the cases $n\leq 5$. The essence of the method is illustrated by the following simple example.
\begin{ex} 
    Using (\ref{c4_4}) and (\ref{main_det}) for $n=4$, we get
    $${\rm Det}(U)=U\widehat{\widetilde{U}}(\widehat{U}\widetilde{U})^\bigtriangleup\in \cl_{p,q},\qquad U\in \cl_{p,q}.$$
    By enumerating every occurrence of the element $U$ from left to right in the basis-free formula ${\rm Det}(U)$, we get the following multi-variable function 
    \begin{eqnarray}
    {\rm F}(x_1,x_2,x_3,x_4)=x_1\widehat{\widetilde{x_2}}(\widehat{x_3}\widetilde{x_4})^\bigtriangleup\in \cl_{p,q},\qquad x_{i}\in\cl_{p,q},\quad
    i=1,2,3,4.\label{ex:f-det4}
    \end{eqnarray}
    In the case $n=4$, all the characteristic polynomial coefficients $C_{(1)}$, $C_{(2)}$, $C_{(3)}$, $C_{(4)}$ could be expressed using (\ref{ex:f-det4}) as follows 
    \begin{eqnarray}
        C_{(1)}&=&
            {\rm F}(U,e,e,e)+
            {\rm F}(e,U,e,e)+
            {\rm F}(e,e,U,e)+
            {\rm F}(e,e,e,U)\nonumber\\
        C_{(2)}&=&-(
            {\rm F}(U,U,e,e)+
            {\rm F}(U,e,U,e)+
            {\rm F}(U,e,e,U)+
            {\rm F}(e,U,U,e)\nonumber\\
            &&+
            {\rm F}(e,U,e,U)+
            {\rm F}(e,e,U,U))\nonumber\\
        C_{(3)}&=&
            {\rm F}(e,U,U,U)+
            {\rm F}(U,e,U,U)+
            {\rm F}(U,U,e,U)+
            {\rm F}(U,U,U,e)\nonumber\\
        C_{(4)}&=&
            -{\rm F}(U,U,U,U).\nonumber
    \end{eqnarray}
    Therefore one could rewrite the formulas for the characteristic polynomial coefficients in the following general form
    $$C_{(k)}=(-1)^{k+1}\sum_{X_j \in X(k)}
            {\rm F}(X_j),\qquad
            k=1,2,3,4,$$
    where $X(k)$ is the set of all possible tuples with $k$ elements $U$ and $4-k$ identity elements $e$.
\end{ex}
Using similar method, we obtain the general form of the basis-free formulas for all characteristic polynomial coefficients for the cases $n\leq 5$ in the following theorem.
\begin{thm}\label{thm:cross5}
    In the cases $n\leq 5$, the basis-free formulas for the characteristic polynomial coefficients $C_{(k)}\in\cl_{p,q}$, $k = 1,2,\ldots,N$ have the following form
    \begin{eqnarray}
        C_{(k)}=(-1)^{k+1}\sum_{X_j\in X(k)}{\rm F}(X_j),\quad
        k = 1,2,\ldots,N,\label{cross<5}
    \end{eqnarray}
    where ${\rm F}$ is defined as follows
    \begin{eqnarray}
        n=1,&&
        {\rm F}(x_1,x_2)=x_1\widehat{x_2}\in\cl_{p,q},
        \quad x_1,x_2\in\cl_{p,q};\nonumber\\
        n=2,&&
        {\rm F}(x_1,x_2)=x_1\widehat{\widetilde{x_2}}\in\cl_{p,q},
        \quad x_1,x_2\in\cl_{p,q};\nonumber\\
        n=3,&&
        {\rm F}(x_1,x_2,x_3,x_4)=x_1\widehat{x_2}\widetilde{x_3}\widehat{\widetilde{x_4}}\in\cl_{p,q},
        \quad x_1,x_2,x_3,x_4\in\cl_{p,q};\nonumber\\
        n=4,&&
        {\rm F}(x_1,x_2,x_3,x_4)=x_1\widehat{\widetilde{x_2}}(\widehat{x_3}\widetilde{x_4})^\bigtriangleup\in\cl_{p,q},
        \quad x_1,x_2,x_3,x_4\in\cl_{p,q};\nonumber\\
        n=5,&&
        {\rm F}(x_1,x_2,\ldots,x_8)=
        x_1\widehat{\widetilde{x_2}}\widehat{x_3}\widetilde{x_4}(\widehat{x_5}\widetilde{x_6}x_7\widehat{\widetilde{x_8}})^\bigtriangleup\in\cl_{p,q},\nonumber\\
        && x_1,x_2,\ldots,x_8\in\cl_{p,q},\nonumber
    \end{eqnarray}
    and $X(k)$ is the set of all possible tuples with $k$ elements $U$ and $N-k$ identity elements $e$:
    \begin{eqnarray}
        X(k)&=&\left\lbrace (x_1,x_2,\ldots,x_N) \, | \,
                    x_i\in\lbrace e,U\rbrace, \quad \sum_{i=1}^N x_i=kU+(N-k)e
                \right\rbrace.\nonumber
    \end{eqnarray}
\end{thm}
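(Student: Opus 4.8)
The plan is to deduce all the formulas (\ref{cross<5}) simultaneously from the \emph{single} basis-free determinant formula in each dimension, via a substitution (generating-function) argument. First I would record, for each $n\le 5$, that the known basis-free determinant formula takes the shape ${\rm Det}(U)={\rm F}(U,U,\ldots,U)$ with ${\rm F}$ as in the statement: by (\ref{main_det}) one has $-C_{(N)}={\rm Det}(U)$, and the explicit expressions for $C_{(N)}$ — from Theorem \ref{theorem2} for $n\le 4$ (e.g. (\ref{c4_4})) and from Theorem \ref{thm:ck_n=5} for $n=5$ (namely (\ref{c8_5})), and the analogous lower-dimensional formulas in Theorem \ref{theorem2} — are precisely the products of conjugates of $U$ obtained by setting $x_1=\cdots=x_N=U$ in ${\rm F}$.

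Next I would note that ${\rm F}(x_1,\ldots,x_N)$ is $\mathbb R$-multilinear. Each of the operations $\widehat{\ \ }$, $\widetilde{\ \ }$, $\widehat{\widetilde{\ \ }}$, $\va$ is a linear combination of grade projections, hence $\mathbb R$-linear by (\ref{proj}); the geometric product is bilinear; and each argument $x_i$ occurs exactly once in the defining expression of ${\rm F}$. Consequently ${\rm F}$ is linear in each slot separately, and in particular ${\rm F}(e,\ldots,e)={\rm Det}(e)=1$ since every conjugation fixes $e$.

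The main step is then to substitute $W=\lambda e-U$ into the identity ${\rm Det}(W)={\rm F}(W,\ldots,W)$, viewing $\lambda\in\mathbb R$ as a parameter, and to expand $\lambda e-U=\lambda e+(-U)$ by multilinearity over the independent choice, in each of the $N$ slots, of $\lambda e$ or $-U$. Pulling a factor $\lambda$ out of each slot filled by $e$ and a factor $-1$ out of each slot filled by $U$, the slots containing exactly $k$ copies of $U$ — which are exactly the tuples in $X(k)$ — contribute $(-1)^k\lambda^{N-k}\sum_{X_j\in X(k)}{\rm F}(X_j)$, so that \[ {\rm Det}(\lambda e-U)=\sum_{k=0}^{N}(-1)^k\lambda^{N-k}\sum_{X_j\in X(k)}{\rm F}(X_j). \] Comparing with $\varphi_U(\lambda)={\rm Det}(\lambda e-U)=\lambda^N-C_{(1)}\lambda^{N-1}-\cdots-C_{(N)}$ from (\ref{ch_poly}): the $\lambda^N$-coefficient matches since ${\rm F}(e,\ldots,e)=1$, and matching the coefficient of $\lambda^{N-k}$ for $k=1,\ldots,N$ gives $-C_{(k)}=(-1)^k\sum_{X_j\in X(k)}{\rm F}(X_j)$, i.e. (\ref{cross<5}).

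I do not expect a serious obstacle: the substance of the theorem is the structural observation, while the genuinely hard work — proving the determinant formulas and the coefficient formulas in Theorems \ref{theorem2} and \ref{thm:ck_n=5} — is already in place. The one point needing care is the multilinearity of ${\rm F}$ when $\va$ is applied to a product, as in $(\widehat{x_3}\widetilde{x_4})^\va$ for $n=4$ or $(\widehat{x_5}\widetilde{x_6}x_7\widehat{\widetilde{x_8}})^\va$ for $n=5$: although $\va$ is not multiplicative (so ${\rm F}$ is \emph{not} a product of conjugates of the $x_i$ taken separately), $\va$ is still linear and is applied to an expression already multilinear in the inputs, so ${\rm F}$ remains multilinear; the remaining check, that the assignments with $k$ of the $N$ slots equal to $U$ are exactly $X(k)$, is immediate from the definition of $X(k)$.
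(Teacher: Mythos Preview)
Your argument is correct, and it takes a genuinely different route from the paper's own proof. The paper simply states that the proof is ``by direct calculation'': one expands the sums $\sum_{X_j\in X(k)}{\rm F}(X_j)$ term by term and checks that the resulting expressions coincide with the explicit formulas already established in Theorem~\ref{theorem2} (for $n\le 4$) and Theorem~\ref{thm:ck_n=5} (for $n=5$). In other words, the paper treats Theorem~\ref{thm:cross5} as a repackaging of formulas it has already proved one at a time.

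Your approach is more structural: you use only the single formula ${\rm Det}(W)={\rm F}(W,\ldots,W)$ (valid for arbitrary $W\in\cl_{p,q}$) together with the multilinearity of ${\rm F}$, substitute $W=\lambda e-U$, expand, and read off all coefficients at once. This is cleaner and more economical, since it requires as input only the determinant formula rather than the full list of $C_{(k)}$'s; it also \emph{explains} why the $C_{(k)}$ look like elementary symmetric polynomials in conjugates of $U$, a phenomenon the paper only observes a posteriori. Your argument is essentially the substitution-and-expand counterpart of the differentiation method recorded in Remark~\ref{review} (contributed by a referee), where one recovers $C_{(k)}$ by differentiating $-{\rm Det}(\lambda e-U)$ with respect to $\lambda$; the two are equivalent ways of extracting the coefficient of $\lambda^{N-k}$. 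The paper's direct-verification approach, by contrast, has the minor practical advantage of confirming that the compact form literally matches the long explicit lists displayed earlier, but offers no additional insight.
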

\begin{proof}
    The proof is by direct calculation. The presented formulas (\ref{cross<5}) coincide with the formulas from Theorem \ref{theorem2} and Theorem \ref{thm:ck_n=5}.
\end{proof}
Note that the basis-free formulas for the characteristic polynomial coefficients from Theorem~\ref{theorem2} and Theorem~\ref{thm:ck_n=5} look like elementary symmetric polynomials (if we ignore the operation~$\va$) in the variables $U$, $\widehat{U}$, $\widetilde{U}$, etc. The general form of these formulas presented in Theorem~\ref{thm:cross5} illustrates this property.
\begin{rem}\label{review}
    One of the anonymous reviewers noted that all $C_{(k)}$ coefficients for $k < N$ can be computed recursively\footnote{Note that alternatively one could get directly $D_{(k)}(\lambda)=\frac{1}{(N-k)!}\frac{ \partial^{N-k} D_{(N)}(\lambda)}{ \partial\lambda^{N-k}}$.} by differentiating negative determinant $-{\rm Det}(\lambda e - U)$  with respect to $\lambda$ and equating $\lambda$ to zero:
    \begin{eqnarray*}
        &&D_{(N)} (\lambda) := -{\rm Det}(\lambda e -U), \quad U\in\cl_{p,q}, \quad\lambda\in\mathbb{R},\\
        &&D_{(k-1)}(\lambda):=\frac{1}{N-(k-1)}\frac{ \partial D_{(k)}(\lambda)}{ \partial\lambda},
        \quad  k=N,\ldots,1,\\
        &&C_{(k)} = D_{(k)}(0),
    \end{eqnarray*}
    which is a straightforward method to obtain coefficients for any polynomial. Note that this method gives the same result as the method proposed at the beginning of this section.
\end{rem}
\section{The Case $n=6$}
\label{section8}
In this section, we present basis-free formulas for all characteristic polynomial coefficients $C_{(k)}$, $k=1, 2, \ldots, 8$ in the geometric algebras $\cl_{p,q}$, $n=p+q=6$. To obtain the result, we generalize the method from the previous section and apply it to the case $n=6$. We construct the general form of the formulas using the known basis-free formula for the determinant ${\rm Det}$.

The formula
\begin{eqnarray}
&&C_{(8)}
=-{\rm Det}(U)\nonumber\\
&&=
-\frac{1}{3}
U\widetilde{U}\widehat{U}\widehat{\widetilde{U}}(\widehat{U}\widehat{\widetilde{U}}U\widetilde{U})^\bigtriangleup
-\frac{2}{3}
U\widetilde{U}((\widehat{U}\widehat{\widetilde{U}})^\bigtriangleup((\widehat{U}\widehat{\widetilde{U}})^\bigtriangleup(U\widetilde{U})^\bigtriangleup)^\bigtriangleup)^\bigtriangleup,\label{det6}
\end{eqnarray}
is presented in this form in \cite{Shirokov} and in some another form in \cite{Acus}.
The formula 
\begin{eqnarray}
    C_{(1)}={\rm Tr}(U)=U+\widetilde{U}+\widehat{U}+\widehat{\widetilde{U}}+\widehat{U}^\bigtriangleup+\widehat{\widetilde{U}}^\bigtriangleup+U^\bigtriangleup+\widetilde{U}^\bigtriangleup\label{tr6}
\end{eqnarray}
is also presented in \cite{Shirokov}. The formulas for the characteristic polynomial coefficients $C_{(2)}$, $C_{(3)}$, $\ldots$, $C_{(7)}$ are presented for the first time in this paper.

    In the case $n = 6$, the basis-free formulas for the characteristic polynomial coefficients $C_{(k)}\in\cl_{p,q}$, $k = 1,2,\ldots8$ have the  following form
    \begin{eqnarray}
        C_{(k)}=(-1)^{k+1}\sum_{X_j\in X(k)}{\rm F}(X_j),\qquad 
        k = 1,2,\ldots,8, \label{cross6}
    \end{eqnarray}
    where ${\rm F}$ is the function on $8$ variables
    \begin{eqnarray}
        {\rm F}(x_1,x_2,\ldots,x_8)&=&
        \frac{1}{3}x_1\widetilde{x_2}\widehat{x_3}\widehat{\widetilde{x_4}}(\widehat{x_5}\widehat{\widetilde{x_6}}x_7\widetilde{x_8})^\bigtriangleup\nonumber\\
        &&+\frac{2}{3}x_1\widetilde{x_2}((\widehat{x_3}\widehat{\widetilde{x_4}})^\bigtriangleup((\widehat{x_5}\widehat{\widetilde{x_6}})^\bigtriangleup(x_7\widetilde{x_8})^\bigtriangleup)^\bigtriangleup)^\bigtriangleup
        \in\cl_{p,q},\nonumber
    \end{eqnarray}
    \begin{eqnarray}
        x_1,x_2,\ldots,x_8\in\cl_{p,q},\nonumber
    \end{eqnarray}
    $X(k)$ is the set of all possible tuples with $k$ elements $U$ and $8-k$ identity elements $e$:
    \begin{eqnarray}
        X(k)&=&\left\lbrace (x_1,x_2,\ldots,x_8) \, | \,
                    x_i\in\lbrace e,U\rbrace,\quad
                    \sum_{i=1}^8 x_i=kU+(8-k)e
                \right\rbrace.\nonumber
    \end{eqnarray}
 Using numerical Geometric Algebra package for Python \cite{Python}, we checked that the formulas (\ref{cross6}) give valid results for geometric algebra elements with random integer coefficients.
\begin{ex}
    Using (\ref{cross6}), we get
    \begin{eqnarray}
        &&C_{(8)}=
        -\sum_{X_j\in X(8)}{\rm F}(X_j)=-{\rm Det}(U),\nonumber
    \end{eqnarray} 
    which coincides with (\ref{det6}).
    Here $X(8)$ is the set of the only possible tuple with eight elements $U$:
    \begin{eqnarray}
        X(8)&=&\left\lbrace
                \begin{array}{l}
                    (U,U,U,U,U,U,U,U)
                \end{array}
                \right\rbrace.\nonumber
    \end{eqnarray}
\end{ex}

\begin{ex}
    Using (\ref{cross6}), we get
    \begin{eqnarray}
        &&C_{(1)}=
        \sum_{X_j\in X(1)}{\rm F}(X_j)={\rm Tr}(U),\nonumber
    \end{eqnarray} 
    which coincides with (\ref{tr6}).
    Here $X(1)$ is the set of all possible tuples with seven identity elements $e$ and one element $U$:
    \begin{eqnarray}
        X(1)&=&\left\lbrace
                \begin{array}{l}
                    (U,e,e,e,e,e,e,e),
                    (e,U,e,e,e,e,e,e),\\
                    (e,e,U,e,e,e,e,e),
                    (e,e,e,U,e,e,e,e),\\
                    (e,e,e,e,U,e,e,e),
                    (e,e,e,e,e,U,e,e),\\
                    (e,e,e,e,e,e,U,e),
                    (e,e,e,e,e,e,e,U)
                \end{array}
                \right\rbrace.\nonumber
    \end{eqnarray}
\end{ex}

\begin{ex}
    Using (\ref{cross6}), we get
    \begin{eqnarray}
        &&C_{(2)}=
        -\sum_{X_j\in X(2)}{\rm F}(X_j)=
            -(
            (UU^{\bigtriangleup}+
            U\widetilde{U}+
            (\widehat{U}\widehat{\widetilde{U}})^{\bigtriangleup}+
            \widetilde{U}\widehat{\widetilde{U}}+
            \widehat{U}\widehat{\widetilde{U}}
            \nonumber\\&&
            +
            U\widehat{U}^{\bigtriangleup}+\widetilde{U}\widehat{U}^{\bigtriangleup}+
            U\widehat{\widetilde{U}}+
            \widetilde{U}\widetilde{U}^{\bigtriangleup}+
            (U\widetilde{U})^{\bigtriangleup}+
            U\widehat{U}+
            \widetilde{U}\widehat{\widetilde{U}}^{\bigtriangleup}+
            \widetilde{U}\widehat{U}+
            U\widetilde{U}^{\bigtriangleup}
            \nonumber\\&&+
            U\widehat{\widetilde{U}}^{\bigtriangleup}+
            \widetilde{U}U^{\bigtriangleup})+
            \frac{1}{3}(
            (\widehat{\widetilde{U}}\widetilde{U})^{\bigtriangleup}+
            (\widehat{\widetilde{U}}U)^{\bigtriangleup}+
            (\widehat{U}\widetilde{U})^{\bigtriangleup}+
            (\widehat{U}U)^{\bigtriangleup}+
            \widehat{\widetilde{U}}\widetilde{U}^{\bigtriangleup} \nonumber\\&&+
            \widehat{\widetilde{U}}U^{\bigtriangleup}
+ \widehat{\widetilde{U}}\widehat{\widetilde{U}}^{\bigtriangleup}+
            \widehat{\widetilde{U}}\widehat{U}^{\bigtriangleup}+
            \widehat{U}\widetilde{U}^{\bigtriangleup}+
            \widehat{U}U^{\bigtriangleup}+
            \widehat{U}\widehat{\widetilde{U}}^{\bigtriangleup}+
            \widehat{U}\widehat{U}^{\bigtriangleup})
            +\frac{2}{3}( \widehat{\widetilde{U}}^{\bigtriangleup}\widetilde{U}^{\bigtriangleup}\nonumber\\&&+
            \widehat{\widetilde{U}}^{\bigtriangleup}U^{\bigtriangleup}+
            \widehat{U}^{\bigtriangleup}\widetilde{U}^{\bigtriangleup}+
            \widehat{U}^{\bigtriangleup}U^{\bigtriangleup}+
            (\widehat{\widetilde{U}}^{\bigtriangleup}\widetilde{U})^{\bigtriangleup}+
            (\widehat{\widetilde{U}}^{\bigtriangleup}U)^{\bigtriangleup}+
            (\widehat{\widetilde{U}}^{\bigtriangleup}\widehat{\widetilde{U}})^{\bigtriangleup}
            \nonumber\\&&+
            (\widehat{\widetilde{U}}^{\bigtriangleup}\widehat{U})^{\bigtriangleup}+
            (\widehat{U}^{\bigtriangleup}\widetilde{U})^{\bigtriangleup}+
            (\widehat{U}^{\bigtriangleup}U)^{\bigtriangleup}+
            (\widehat{U}^{\bigtriangleup}\widehat{\widetilde{U}})^{\bigtriangleup}+
            (\widehat{U}^{\bigtriangleup}\widehat{U})^{\bigtriangleup})),\nonumber
    \end{eqnarray} 
    where $X(2)$ is the set of all possible tuples with six identity elements $e$ and two elements $U$.
\end{ex}

\begin{ex}
    Using (\ref{cross6}), we get
    \begin{eqnarray*}
        &&C_{(7)}=
        \sum_{X_j\in X(7)}{\rm F}(X_j)=
        \frac{1}{3}(
            \widetilde{U}\widehat{U}\widehat{\widetilde{U}}(\widehat{U}\widehat{\widetilde{U}}U\widetilde{U})^{\bigtriangleup}+
            U\widehat{U}\widehat{\widetilde{U}}(\widehat{U}\widehat{\widetilde{U}}U\widetilde{U})^{\bigtriangleup}\\
            &&+
            U\widetilde{U}\widehat{\widetilde{U}}(\widehat{U}\widehat{\widetilde{U}}U\widetilde{U})^{\bigtriangleup}+
            U\widetilde{U}\widehat{U}(\widehat{U}\widehat{\widetilde{U}}U\widetilde{U})^{\bigtriangleup}+
            U\widetilde{U}\widehat{U}\widehat{\widetilde{U}}(\widehat{\widetilde{U}}U\widetilde{U})^{\bigtriangleup}\\
            &&+
            U\widetilde{U}\widehat{U}\widehat{\widetilde{U}}(\widehat{U}U\widetilde{U})^{\bigtriangleup}+
            U\widetilde{U}\widehat{U}\widehat{\widetilde{U}}(\widehat{U}\widehat{\widetilde{U}}\widetilde{U})^{\bigtriangleup}+
            U\widetilde{U}\widehat{U}\widehat{\widetilde{U}}(\widehat{U}\widehat{\widetilde{U}}U)^{\bigtriangleup})\\
            &&+\frac{2}{3}(
            \widetilde{U}((\widehat{U}\widehat{\widetilde{U}})^{\bigtriangleup}((\widehat{U}\widehat{\widetilde{U}})^{\bigtriangleup}(U\widetilde{U})^{\bigtriangleup})^{\bigtriangleup})^{\bigtriangleup}+
            U((\widehat{U}\widehat{\widetilde{U}})^{\bigtriangleup}((\widehat{U}\widehat{\widetilde{U}})^{\bigtriangleup}(U\widetilde{U})^{\bigtriangleup})^{\bigtriangleup})^{\bigtriangleup}\\
            &&+
            U\widetilde{U}(\widehat{\widetilde{U}}^{\bigtriangleup}((\widehat{U}\widehat{\widetilde{U}})^{\bigtriangleup}(U\widetilde{U})^{\bigtriangleup})^{\bigtriangleup})^{\bigtriangleup}+
            U\widetilde{U}(\widehat{U}^{\bigtriangleup}((\widehat{U}\widehat{\widetilde{U}})^{\bigtriangleup}(U\widetilde{U})^{\bigtriangleup})^{\bigtriangleup})^{\bigtriangleup}\\
            &&+
            U\widetilde{U}((\widehat{U}\widehat{\widetilde{U}})^{\bigtriangleup}(\widehat{\widetilde{U}}^{\bigtriangleup}(U\widetilde{U})^{\bigtriangleup})^{\bigtriangleup})^{\bigtriangleup}+
            U\widetilde{U}((\widehat{U}\widehat{\widetilde{U}})^{\bigtriangleup}(\widehat{U}^{\bigtriangleup}(U\widetilde{U})^{\bigtriangleup})^{\bigtriangleup})^{\bigtriangleup}\\
            &&+
            U\widetilde{U}((\widehat{U}\widehat{\widetilde{U}})^{\bigtriangleup}((\widehat{U}\widehat{\widetilde{U}})^{\bigtriangleup}\widetilde{U}^{\bigtriangleup})^{\bigtriangleup})^{\bigtriangleup}+
            U\widetilde{U}((\widehat{U}\widehat{\widetilde{U}})^{\bigtriangleup}((\widehat{U}\widehat{\widetilde{U}})^{\bigtriangleup}U^{\bigtriangleup})^{\bigtriangleup})^{\bigtriangleup}),\nonumber
    \end{eqnarray*} 
    where $X(7)$ is the set of all possible tuples with one identity element $e$ and seven elements $U$.
\end{ex}
We present basis-free formulas for the remaining characteristic polynomial coefficients $C_{(3)}$, $C_{(4)}$, $C_{(5)}$, $C_{(6)}$ in Appendix~\ref{appendix} because of their cumbersomeness.

\section{Conclusions}

In this paper, for the first time we present the basis-free formulas for all characteristic polynomial coefficients in geometric algebras $\cl_{p,q}$ in the cases $n=p+q=5,6$. These results generalize the results of the paper \cite{Shirokov} for the cases $n\leq 4$. The formulas involve only the operations of geometric product, summation, and operations of conjugation. We actively use the $\bigtriangleup$-conjugation in our considerations. Several new properties of the operation $\va$ and other operations of conjugation in geometric algebra are presented. Using symbolic computation, we verified that the presented basis-free formulas for the characteristic polynomial coefficients in the cases $n\leq 5$ are equivalent to the known recursive formulas (\ref{recur}).  Using numerical Geometric Algebra package for Python \cite{Python}, we checked that the presented basis-free formulas for the characteristic polynomial coefficients in the cases $n=6$ give valid results for geometric algebra elements with random integer coefficients. We present an analytical proof of the basis-free formulas for $C_{(2)}$ and $C_{(3)}$ in the case $n=4$ and the basis-free formula for $C_{(2)}$ in the case $n=5$. The proof of the equivalence of the proposed basis-free formulas to the recursive ones turned out to be rather nontrivial.

We provide important special cases of the basis-free formulas. For elements of grade 1 (vectors) and the basis elements, we found the basis-free formulas for all characteristic polynomial coefficients in the case of arbitrary $n$. For elements of group ${\rm Spin}(p,q)$, $p+q=n\leq 5$, we present the basis-free formulas in Theorem \ref{thm:spin}. Using specific examples, we show the significance of the theorem conditions and the difference between Theorem~\ref{lemma:c_2s-1=0} and Theorem~\ref{thm:like_basis_el}.

We introduce the method of obtaining the general form of the presented basis-free formulas for the characteristic polynomial coefficients using the basis-free formula for ${\rm Det}$ (determinant) in the cases $n\leq 6$. Using this method, we solve the dimensions $n=5$ and $n=6$. The applicability of the method to the higher dimensions is a subject for further research.

Different (recursive and explicit) formulas can be used for different purposes. The recursive formulas are interesting from theoretical and computational points of view. We actively use the recursive formulas in our analytical proofs. The explicit basis-free formulas for the characteristic polynomial coefficients in the cases $n\leq 6$ look like elementary symmetric polynomials (if we ignore the operation $\va$) in the variables $U$, $\widehat{U}$, $\widetilde{U}$, etc. (see the details in Section~\ref{section7} and Section~\ref{section8}). This observation is interesting from a theoretical point of view. The presented explicit formulas for characteristic polynomial coefficients allow us to obtain the simplified explicit formulas for some special cases, in particular, for elements of spin groups (see Section~\ref{section6}).  
 
The geometric algebras of vector spaces of dimensions $n=4,5,$ and $6$ are important for different applications in physics (the space-time algebra $\cl_{1,3}$ \cite{Doran,Hestenes,Lasenby}, the conformal space-time algebras $\cl_{4,2}$ and $\cl_{2,4}$ \cite{Dirac,Doran}), in computer science and engineering (the conformal geometric algebra $\cl_{4,1}$ \cite{Bayro,Breuils,Dorst,Hildenbrand,Li}), in computer vision and computer graphics (the geometric algebra $\cl_{3,3}$ of projective geometry \cite{Dorst2,Klawitter}). In particular, the characteristic polynomial coefficients are used to solve the Sylvester and Lyapunov equations in geometric algebra \cite{CGI20,CGI20_extend}. The presented basis-free formulas for characteristic polynomial coefficients can also be used in symbolic computation using different software \cite{AcusPack,Python,HitzerMatlab,Python2}.

\section*{Acknowledgment}

The main results of this paper were reported at the International Conference “Computer Graphics
International 2021 (CGI2021)” (online, Geneva, Switzerland, September 2021). The authors are grateful to the organizers and the participants of this conference for fruitful discussions.

The authors are grateful to the anonymous Reviewers for their careful reading of the paper and helpful comments on how to improve the presentation.

The publication was prepared within the framework of the Academic Fund Program at the HSE University in 2022 (grant 22-00-001).

\medskip

\noindent{\bf Data availability} Data sharing not applicable to this article as no datasets were generated or analyzed during the current study.

\noindent\textbf{Declarations}

\noindent\textbf{Conflict of Interest} The authors declare that they have no conflict of interest.

\appendix

\section{Basis-Free Formulas in the Case $n=6$}
\label{appendix}
In the case $n=6$, we have the following basis-free formulas for the characteristic polynomial coefficients $C_{(3)}, C_{(4)}, C_{(5)}, C_{(6)}\in\cl_{p,q}$. The remaining characteristic polynomial coefficients $C_{(1)}, C_{(2)}, C_{(7)}, C_{(8)}\in\cl_{p,q}$ are presented in Section~\ref{section8}.
    {\footnotesize
    \\\\
    $C_{(3)}=
            (\widetilde{U}(\widehat{U}\widehat{\widetilde{U}})^{\bigtriangleup}+
            U\widetilde{U}\widehat{\widetilde{U}}^{\bigtriangleup}\!\!\!\!+
            U\widetilde{U}\widehat{\widetilde{U}}+
            U(\widehat{U}\widehat{\widetilde{U}})^{\bigtriangleup}
            +
            U\widetilde{U}\widehat{U}^{\bigtriangleup}+
            U\widehat{U}\widehat{\widetilde{U}}+
            U\widetilde{U}\widetilde{U}^{\bigtriangleup}
            +
            U\widetilde{U}U^{\bigtriangleup}
            +
            U\widetilde{U}\widehat{U}+
            \widetilde{U}(U\widetilde{U})^{\bigtriangleup}+
            U(U\widetilde{U})^{\bigtriangleup}+
            \widetilde{U}\widehat{U}\widehat{\widetilde{U}})
            +\frac{1}{3}(
            (\widehat{\widetilde{U}}U\widetilde{U})^{\bigtriangleup}+
            (\widehat{U}U\widetilde{U})^{\bigtriangleup}
            +
            (\widehat{U}\widehat{\widetilde{U}}\widetilde{U})^{\bigtriangleup}+
            (\widehat{U}\widehat{\widetilde{U}}U)^{\bigtriangleup}
            +
            \widehat{\widetilde{U}}(U\widetilde{U})^{\bigtriangleup}+
            \widehat{\widetilde{U}}(\widehat{\widetilde{U}}\widetilde{U})^{\bigtriangleup}+
            \widehat{\widetilde{U}}(\widehat{\widetilde{U}}U)^{\bigtriangleup}+
            \widehat{\widetilde{U}}(\widehat{U}\widetilde{U})^{\bigtriangleup}
            +
            \widehat{\widetilde{U}}(\widehat{U}U)^{\bigtriangleup}
            +
            \widehat{\widetilde{U}}(\widehat{U}\widehat{\widetilde{U}})^{\bigtriangleup}+
            \widehat{U}(U\widetilde{U})^{\bigtriangleup}+
            \widehat{U}(\widehat{\widetilde{U}}\widetilde{U})^{\bigtriangleup}
            +
            \widehat{U}(\widehat{\widetilde{U}}U)^{\bigtriangleup}+
            \widehat{U}(\widehat{U}\widetilde{U})^{\bigtriangleup}
            +
            \widehat{U}(\widehat{U}U)^{\bigtriangleup}
            +
            \widehat{U}(\widehat{U}\widehat{\widetilde{U}})^{\bigtriangleup}
            +
            \widehat{U}\widehat{\widetilde{U}}\widetilde{U}^{\bigtriangleup}+
            \widehat{U}\widehat{\widetilde{U}}U^{\bigtriangleup}+
            \widehat{U}\widehat{\widetilde{U}}\widehat{\widetilde{U}}^{\bigtriangleup}+
            \widehat{U}\widehat{\widetilde{U}}\widehat{U}^{\bigtriangleup}
            +
            \widetilde{U}(\widehat{\widetilde{U}}\widetilde{U})^{\bigtriangleup}+
            \widetilde{U}(\widehat{\widetilde{U}}U)^{\bigtriangleup}+
            \widetilde{U}(\widehat{U}\widetilde{U})^{\bigtriangleup}
            +
            \widetilde{U}(\widehat{U}U)^{\bigtriangleup}
            +
            \widetilde{U}\widehat{\widetilde{U}}\widetilde{U}^{\bigtriangleup}
            +
            \widetilde{U}\widehat{\widetilde{U}}U^{\bigtriangleup}+
            \widetilde{U}\widehat{\widetilde{U}}\widehat{\widetilde{U}}^{\bigtriangleup}+
            \widetilde{U}\widehat{\widetilde{U}}\widehat{U}^{\bigtriangleup}
            +
            \widetilde{U}\widehat{U}\widetilde{U}^{\bigtriangleup}+
            \widetilde{U}\widehat{U}U^{\bigtriangleup}
            +
            \widetilde{U}\widehat{U}\widehat{\widetilde{U}}^{\bigtriangleup}
            +
            \widetilde{U}\widehat{U}\widehat{U}^{\bigtriangleup}
            +
            U(\widehat{\widetilde{U}}\widetilde{U})^{\bigtriangleup}+
            U(\widehat{\widetilde{U}}U)^{\bigtriangleup}+
            U(\widehat{U}\widetilde{U})^{\bigtriangleup}+
            U(\widehat{U}U)^{\bigtriangleup}
            +
            U\widehat{\widetilde{U}}\widetilde{U}^{\bigtriangleup}
            +
            U\widehat{\widetilde{U}}U^{\bigtriangleup}+
            U\widehat{\widetilde{U}}\widehat{\widetilde{U}}^{\bigtriangleup}+
            U\widehat{\widetilde{U}}\widehat{U}^{\bigtriangleup}
            +
            U\widehat{U}\widetilde{U}^{\bigtriangleup}+
            U\widehat{U}U^{\bigtriangleup}+
            U\widehat{U}\widehat{\widetilde{U}}^{\bigtriangleup}+
            U\widehat{U}\widehat{U}^{\bigtriangleup})
            +\frac{2}{3}(
            \widehat{\widetilde{U}}^{\bigtriangleup}(U\widetilde{U})^{\bigtriangleup}+
            \widehat{U}^{\bigtriangleup}(U\widetilde{U})^{\bigtriangleup}+
            (\widehat{U}\widehat{\widetilde{U}})^{\bigtriangleup}\widetilde{U}^{\bigtriangleup}
            +
            (\widehat{U}\widehat{\widetilde{U}})^{\bigtriangleup}U^{\bigtriangleup}
            +
            (\widehat{\widetilde{U}}^{\bigtriangleup}U\widetilde{U})^{\bigtriangleup}+
            (\widehat{\widetilde{U}}^{\bigtriangleup}(\widehat{\widetilde{U}}^{\bigtriangleup}\widetilde{U}^{\bigtriangleup})^{\bigtriangleup})^{\bigtriangleup}+
            (\widehat{\widetilde{U}}^{\bigtriangleup}(\widehat{\widetilde{U}}^{\bigtriangleup}U^{\bigtriangleup})^{\bigtriangleup})^{\bigtriangleup}+
            (\widehat{\widetilde{U}}^{\bigtriangleup}(\widehat{U}^{\bigtriangleup}\widetilde{U}^{\bigtriangleup})^{\bigtriangleup})^{\bigtriangleup}
            +
            (\widehat{\widetilde{U}}^{\bigtriangleup}(\widehat{U}^{\bigtriangleup}U^{\bigtriangleup})^{\bigtriangleup})^{\bigtriangleup}+
            (\widehat{\widetilde{U}}^{\bigtriangleup}\widehat{U}\widehat{\widetilde{U}})^{\bigtriangleup}+
            (\widehat{U}^{\bigtriangleup}U\widetilde{U})^{\bigtriangleup}+
            (\widehat{U}^{\bigtriangleup}(\widehat{\widetilde{U}}^{\bigtriangleup}\widetilde{U}^{\bigtriangleup})^{\bigtriangleup})^{\bigtriangleup}
            +
            (\widehat{U}^{\bigtriangleup}(\widehat{\widetilde{U}}^{\bigtriangleup}U^{\bigtriangleup})^{\bigtriangleup})^{\bigtriangleup}
            +
            (\widehat{U}^{\bigtriangleup}(\widehat{U}^{\bigtriangleup}\widetilde{U}^{\bigtriangleup})^{\bigtriangleup})^{\bigtriangleup}+
            (\widehat{U}^{\bigtriangleup}(\widehat{U}^{\bigtriangleup}U^{\bigtriangleup})^{\bigtriangleup})^{\bigtriangleup}+
            (\widehat{U}^{\bigtriangleup}\widehat{U}\widehat{\widetilde{U}})^{\bigtriangleup}
            +
            ((\widehat{U}\widehat{\widetilde{U}})^{\bigtriangleup}\widetilde{U})^{\bigtriangleup}+
            ((\widehat{U}\widehat{\widetilde{U}})^{\bigtriangleup}U)^{\bigtriangleup}
            +
            ((\widehat{U}\widehat{\widetilde{U}})^{\bigtriangleup}\widehat{\widetilde{U}})^{\bigtriangleup}
            +
            ((\widehat{U}\widehat{\widetilde{U}})^{\bigtriangleup}\widehat{U})^{\bigtriangleup}
            +
            \widetilde{U}\widehat{\widetilde{U}}^{\bigtriangleup}\widetilde{U}^{\bigtriangleup}+
            \widetilde{U}\widehat{\widetilde{U}}^{\bigtriangleup}U^{\bigtriangleup}+
            \widetilde{U}\widehat{U}^{\bigtriangleup}\widetilde{U}^{\bigtriangleup}+
            \widetilde{U}\widehat{U}^{\bigtriangleup}U^{\bigtriangleup}
            +
            \widetilde{U}(\widehat{\widetilde{U}}^{\bigtriangleup}\widetilde{U})^{\bigtriangleup}
            +
            \widetilde{U}(\widehat{\widetilde{U}}^{\bigtriangleup}U)^{\bigtriangleup}+
            \widetilde{U}(\widehat{\widetilde{U}}^{\bigtriangleup}\widehat{\widetilde{U}})^{\bigtriangleup}+
            \widetilde{U}(\widehat{\widetilde{U}}^{\bigtriangleup}\widehat{U})^{\bigtriangleup}
            +
            \widetilde{U}(\widehat{U}^{\bigtriangleup}\widetilde{U})^{\bigtriangleup}+
            \widetilde{U}(\widehat{U}^{\bigtriangleup}U)^{\bigtriangleup}+
            \widetilde{U}(\widehat{U}^{\bigtriangleup}\widehat{\widetilde{U}})^{\bigtriangleup}
            +
            \widetilde{U}(\widehat{U}^{\bigtriangleup}\widehat{U})^{\bigtriangleup}
            +
            U\widehat{\widetilde{U}}^{\bigtriangleup}\widetilde{U}^{\bigtriangleup}+
            U\widehat{\widetilde{U}}^{\bigtriangleup}U^{\bigtriangleup}+
            U\widehat{U}^{\bigtriangleup}\widetilde{U}^{\bigtriangleup}+
            U\widehat{U}^{\bigtriangleup}U^{\bigtriangleup}
            +
            U(\widehat{\widetilde{U}}^{\bigtriangleup}\widetilde{U})^{\bigtriangleup}+
            U(\widehat{\widetilde{U}}^{\bigtriangleup}U)^{\bigtriangleup}
            +
            U(\widehat{\widetilde{U}}^{\bigtriangleup}\widehat{\widetilde{U}})^{\bigtriangleup}
            +
            U(\widehat{\widetilde{U}}^{\bigtriangleup}\widehat{U})^{\bigtriangleup}
            +
            U(\widehat{U}^{\bigtriangleup}\widetilde{U})^{\bigtriangleup}+
            U(\widehat{U}^{\bigtriangleup}U)^{\bigtriangleup}+
            U(\widehat{U}^{\bigtriangleup}\widehat{\widetilde{U}})^{\bigtriangleup}+
            U(\widehat{U}^{\bigtriangleup}\widehat{U})^{\bigtriangleup}),$
            \\\\
    $C_{(4)}=-(
            (U\widetilde{U}\widehat{U}\widehat{\widetilde{U}}+
            U\widetilde{U}(\widehat{U}\widehat{\widetilde{U}})^{\bigtriangleup}+
            U\widetilde{U}(U\widetilde{U})^{\bigtriangleup})
            +\frac{1}{3}(
            (\widehat{U}\widehat{\widetilde{U}}U\widetilde{U})^{\bigtriangleup}+
            \widehat{\widetilde{U}}(\widehat{\widetilde{U}}U\widetilde{U})^{\bigtriangleup}+
            \widehat{\widetilde{U}}(\widehat{U}U\widetilde{U})^{\bigtriangleup}+
            \widehat{\widetilde{U}}(\widehat{U}\widehat{\widetilde{U}}\widetilde{U})^{\bigtriangleup}
            +
            \widehat{\widetilde{U}}(\widehat{U}\widehat{\widetilde{U}}U)^{\bigtriangleup}+
            \widehat{U}(\widehat{\widetilde{U}}U\widetilde{U})^{\bigtriangleup}+
            \widehat{U}(\widehat{U}U\widetilde{U})^{\bigtriangleup}+
            \widehat{U}(\widehat{U}\widehat{\widetilde{U}}\widetilde{U})^{\bigtriangleup}
            +
            \widehat{U}(\widehat{U}\widehat{\widetilde{U}}U)^{\bigtriangleup}+
            \widehat{U}\widehat{\widetilde{U}}(U\widetilde{U})^{\bigtriangleup}+
            \widehat{U}\widehat{\widetilde{U}}(\widehat{\widetilde{U}}\widetilde{U})^{\bigtriangleup}+
            \widehat{U}\widehat{\widetilde{U}}(\widehat{\widetilde{U}}U)^{\bigtriangleup}
            +
            \widehat{U}\widehat{\widetilde{U}}(\widehat{U}\widetilde{U})^{\bigtriangleup}+
            \widehat{U}\widehat{\widetilde{U}}(\widehat{U}U)^{\bigtriangleup}+
            \widehat{U}\widehat{\widetilde{U}}(\widehat{U}\widehat{\widetilde{U}})^{\bigtriangleup}+
            \widetilde{U}(\widehat{\widetilde{U}}U\widetilde{U})^{\bigtriangleup}
            +
            \widetilde{U}(\widehat{U}U\widetilde{U})^{\bigtriangleup}+
            \widetilde{U}(\widehat{U}\widehat{\widetilde{U}}\widetilde{U})^{\bigtriangleup}+
            \widetilde{U}(\widehat{U}\widehat{\widetilde{U}}U)^{\bigtriangleup}+
            \widetilde{U}\widehat{\widetilde{U}}(U\widetilde{U})^{\bigtriangleup}
            +
            \widetilde{U}\widehat{\widetilde{U}}(\widehat{\widetilde{U}}\widetilde{U})^{\bigtriangleup}+
            \widetilde{U}\widehat{\widetilde{U}}(\widehat{\widetilde{U}}U)^{\bigtriangleup}+
            \widetilde{U}\widehat{\widetilde{U}}(\widehat{U}\widetilde{U})^{\bigtriangleup}+
            \widetilde{U}\widehat{\widetilde{U}}(\widehat{U}U)^{\bigtriangleup}
            +
            \widetilde{U}\widehat{\widetilde{U}}(\widehat{U}\widehat{\widetilde{U}})^{\bigtriangleup}+
            \widetilde{U}\widehat{U}(U\widetilde{U})^{\bigtriangleup}+
            \widetilde{U}\widehat{U}(\widehat{\widetilde{U}}\widetilde{U})^{\bigtriangleup}+
            \widetilde{U}\widehat{U}(\widehat{\widetilde{U}}U)^{\bigtriangleup}
            +
            \widetilde{U}\widehat{U}(\widehat{U}\widetilde{U})^{\bigtriangleup}+
            \widetilde{U}\widehat{U}(\widehat{U}U)^{\bigtriangleup}+
            \widetilde{U}\widehat{U}(\widehat{U}\widehat{\widetilde{U}})^{\bigtriangleup}+
            \widetilde{U}\widehat{U}\widehat{\widetilde{U}}\widetilde{U}^{\bigtriangleup}
            +
            \widetilde{U}\widehat{U}\widehat{\widetilde{U}}U^{\bigtriangleup}+
            \widetilde{U}\widehat{U}\widehat{\widetilde{U}}\widehat{\widetilde{U}}^{\bigtriangleup}+
            \widetilde{U}\widehat{U}\widehat{\widetilde{U}}\widehat{U}^{\bigtriangleup}+
            U(\widehat{\widetilde{U}}U\widetilde{U})^{\bigtriangleup}
            +
            U(\widehat{U}U\widetilde{U})^{\bigtriangleup}+
            U(\widehat{U}\widehat{\widetilde{U}}\widetilde{U})^{\bigtriangleup}+
            U(\widehat{U}\widehat{\widetilde{U}}U)^{\bigtriangleup}+
            U\widehat{\widetilde{U}}(U\widetilde{U})^{\bigtriangleup}
            +
            U\widehat{\widetilde{U}}(\widehat{\widetilde{U}}\widetilde{U})^{\bigtriangleup}+
            U\widehat{\widetilde{U}}(\widehat{\widetilde{U}}U)^{\bigtriangleup}+
            U\widehat{\widetilde{U}}(\widehat{U}\widetilde{U})^{\bigtriangleup}+
            U\widehat{\widetilde{U}}(\widehat{U}U)^{\bigtriangleup}
            +
            U\widehat{\widetilde{U}}(\widehat{U}\widehat{\widetilde{U}})^{\bigtriangleup}+
            U\widehat{U}(U\widetilde{U})^{\bigtriangleup}+
            U\widehat{U}(\widehat{\widetilde{U}}\widetilde{U})^{\bigtriangleup}+
            U\widehat{U}(\widehat{\widetilde{U}}U)^{\bigtriangleup}
            +
            U\widehat{U}(\widehat{U}\widetilde{U})^{\bigtriangleup}+
            U\widehat{U}(\widehat{U}U)^{\bigtriangleup}+
            U\widehat{U}(\widehat{U}\widehat{\widetilde{U}})^{\bigtriangleup}+
            U\widehat{U}\widehat{\widetilde{U}}\widetilde{U}^{\bigtriangleup}
            +
            U\widehat{U}\widehat{\widetilde{U}}U^{\bigtriangleup}+
            U\widehat{U}\widehat{\widetilde{U}}\widehat{\widetilde{U}}^{\bigtriangleup}+
            U\widehat{U}\widehat{\widetilde{U}}\widehat{U}^{\bigtriangleup}+
            U\widetilde{U}(\widehat{\widetilde{U}}\widetilde{U})^{\bigtriangleup}
            +
            U\widetilde{U}(\widehat{\widetilde{U}}U)^{\bigtriangleup}+
            U\widetilde{U}(\widehat{U}\widetilde{U})^{\bigtriangleup}+
            U\widetilde{U}(\widehat{U}U)^{\bigtriangleup}+
            U\widetilde{U}\widehat{\widetilde{U}}\widetilde{U}^{\bigtriangleup}
            +
            U\widetilde{U}\widehat{\widetilde{U}}U^{\bigtriangleup}+
            U\widetilde{U}\widehat{\widetilde{U}}\widehat{\widetilde{U}}^{\bigtriangleup}+
            U\widetilde{U}\widehat{\widetilde{U}}\widehat{U}^{\bigtriangleup}+
            U\widetilde{U}\widehat{U}\widetilde{U}^{\bigtriangleup}
            +
            U\widetilde{U}\widehat{U}U^{\bigtriangleup}+
            U\widetilde{U}\widehat{U}\widehat{\widetilde{U}}^{\bigtriangleup}+
            U\widetilde{U}\widehat{U}\widehat{U}^{\bigtriangleup})
            +\frac{2}{3}(
            (\widehat{U}\widehat{\widetilde{U}})^{\bigtriangleup}(U\widetilde{U})^{\bigtriangleup}+
            (\widehat{\widetilde{U}}^{\bigtriangleup}\!\!(\widehat{\widetilde{U}}^{\bigtriangleup}\!\!(U\widetilde{U})^{\bigtriangleup})^{\bigtriangleup})^{\bigtriangleup}+
            (\widehat{\widetilde{U}}^{\bigtriangleup}\!\!(\widehat{U}^{\bigtriangleup}(U\widetilde{U})^{\bigtriangleup})^{\bigtriangleup})^{\bigtriangleup}
            +
            (\widehat{\widetilde{U}}^{\bigtriangleup}((\widehat{U}\widehat{\widetilde{U}})^{\bigtriangleup}\widetilde{U}^{\bigtriangleup})^{\bigtriangleup})^{\bigtriangleup}+
            (\widehat{\widetilde{U}}^{\bigtriangleup}((\widehat{U}\widehat{\widetilde{U}})^{\bigtriangleup}U^{\bigtriangleup})^{\bigtriangleup})^{\bigtriangleup}+
            (\widehat{U}^{\bigtriangleup}(\widehat{\widetilde{U}}^{\bigtriangleup}(U\widetilde{U})^{\bigtriangleup})^{\bigtriangleup})^{\bigtriangleup}
            +
            (\widehat{U}^{\bigtriangleup}(\widehat{U}^{\bigtriangleup}(U\widetilde{U})^{\bigtriangleup})^{\bigtriangleup})^{\bigtriangleup}+\\
            (\widehat{U}^{\bigtriangleup}((\widehat{U}\widehat{\widetilde{U}})^{\bigtriangleup}\widetilde{U}^{\bigtriangleup})^{\bigtriangleup})^{\bigtriangleup}
            +
            (\widehat{U}^{\bigtriangleup}((\widehat{U}\widehat{\widetilde{U}})^{\bigtriangleup}U^{\bigtriangleup})^{\bigtriangleup})^{\bigtriangleup}
            +
            ((\widehat{U}\widehat{\widetilde{U}})^{\bigtriangleup}U\widetilde{U})^{\bigtriangleup}
            +
            ((\widehat{U}\widehat{\widetilde{U}})^{\bigtriangleup}(\widehat{\widetilde{U}}^{\bigtriangleup}\widetilde{U}^{\bigtriangleup})^{\bigtriangleup})^{\bigtriangleup}
            +
            ((\widehat{U}\widehat{\widetilde{U}})^{\bigtriangleup}(\widehat{\widetilde{U}}^{\bigtriangleup}U^{\bigtriangleup})^{\bigtriangleup})^{\bigtriangleup}
            +
            ((\widehat{U}\widehat{\widetilde{U}})^{\bigtriangleup}(\widehat{U}^{\bigtriangleup}\widetilde{U}^{\bigtriangleup})^{\bigtriangleup})^{\bigtriangleup}
            +
            ((\widehat{U}\widehat{\widetilde{U}})^{\bigtriangleup}(\widehat{U}^{\bigtriangleup}U^{\bigtriangleup})^{\bigtriangleup})^{\bigtriangleup}
            +
            ((\widehat{U}\widehat{\widetilde{U}})^{\bigtriangleup}\widehat{U}\widehat{\widetilde{U}})^{\bigtriangleup}
            +
            \widetilde{U}\widehat{\widetilde{U}}^{\bigtriangleup}(U\widetilde{U})^{\bigtriangleup}
            +
            \widetilde{U}\widehat{U}^{\bigtriangleup}(U\widetilde{U})^{\bigtriangleup}
            +
            \widetilde{U}(\widehat{U}\widehat{\widetilde{U}})^{\bigtriangleup}\widetilde{U}^{\bigtriangleup}
            +
            \widetilde{U}(\widehat{U}\widehat{\widetilde{U}})^{\bigtriangleup}U^{\bigtriangleup}
            +
            \widetilde{U}(\widehat{\widetilde{U}}^{\bigtriangleup}U\widetilde{U})^{\bigtriangleup}
            +\\
            \widetilde{U}(\widehat{\widetilde{U}}^{\bigtriangleup}(\widehat{\widetilde{U}}^{\bigtriangleup}\widetilde{U}^{\bigtriangleup})^{\bigtriangleup})^{\bigtriangleup}
            +
            \widetilde{U}(\widehat{\widetilde{U}}^{\bigtriangleup}(\widehat{\widetilde{U}}^{\bigtriangleup}U^{\bigtriangleup})^{\bigtriangleup})^{\bigtriangleup}
            +
            \widetilde{U}(\widehat{\widetilde{U}}^{\bigtriangleup}(\widehat{U}^{\bigtriangleup}\widetilde{U}^{\bigtriangleup})^{\bigtriangleup})^{\bigtriangleup}
            +
            \widetilde{U}(\widehat{\widetilde{U}}^{\bigtriangleup}(\widehat{U}^{\bigtriangleup}U^{\bigtriangleup})^{\bigtriangleup})^{\bigtriangleup}
            +
            \widetilde{U}(\widehat{\widetilde{U}}^{\bigtriangleup}\widehat{U}\widehat{\widetilde{U}})^{\bigtriangleup}
            +
            \widetilde{U}(\widehat{U}^{\bigtriangleup}U\widetilde{U})^{\bigtriangleup}
            +
            \widetilde{U}(\widehat{U}^{\bigtriangleup}(\widehat{\widetilde{U}}^{\bigtriangleup}\widetilde{U}^{\bigtriangleup})^{\bigtriangleup})^{\bigtriangleup}
            +
            \widetilde{U}(\widehat{U}^{\bigtriangleup}(\widehat{\widetilde{U}}^{\bigtriangleup}U^{\bigtriangleup})^{\bigtriangleup})^{\bigtriangleup}
            +\\
            \widetilde{U}(\widehat{U}^{\bigtriangleup}(\widehat{U}^{\bigtriangleup}\widetilde{U}^{\bigtriangleup})^{\bigtriangleup})^{\bigtriangleup}
            +
            \widetilde{U}(\widehat{U}^{\bigtriangleup}(\widehat{U}^{\bigtriangleup}U^{\bigtriangleup})^{\bigtriangleup})^{\bigtriangleup}
            +
            \widetilde{U}(\widehat{U}^{\bigtriangleup}\widehat{U}\widehat{\widetilde{U}})^{\bigtriangleup}
            +
            \widetilde{U}((\widehat{U}\widehat{\widetilde{U}})^{\bigtriangleup}\widetilde{U})^{\bigtriangleup}
            +\\
            \widetilde{U}((\widehat{U}\widehat{\widetilde{U}})^{\bigtriangleup}U)^{\bigtriangleup}
            +
            \widetilde{U}((\widehat{U}\widehat{\widetilde{U}})^{\bigtriangleup}\widehat{\widetilde{U}})^{\bigtriangleup}
            +
            \widetilde{U}((\widehat{U}\widehat{\widetilde{U}})^{\bigtriangleup}\widehat{U})^{\bigtriangleup}
            +
            U\widehat{\widetilde{U}}^{\bigtriangleup}(U\widetilde{U})^{\bigtriangleup}
            +
            U\widehat{U}^{\bigtriangleup}(U\widetilde{U})^{\bigtriangleup}
            +\\
            U(\widehat{U}\widehat{\widetilde{U}})^{\bigtriangleup}\widetilde{U}^{\bigtriangleup}
            +
            U(\widehat{U}\widehat{\widetilde{U}})^{\bigtriangleup}U^{\bigtriangleup}
            +
            U(\widehat{\widetilde{U}}^{\bigtriangleup}U\widetilde{U})^{\bigtriangleup}
            +
            U(\widehat{\widetilde{U}}^{\bigtriangleup}(\widehat{\widetilde{U}}^{\bigtriangleup}\widetilde{U}^{\bigtriangleup})^{\bigtriangleup})^{\bigtriangleup}
            +
            U(\widehat{\widetilde{U}}^{\bigtriangleup}(\widehat{\widetilde{U}}^{\bigtriangleup}U^{\bigtriangleup})^{\bigtriangleup})^{\bigtriangleup}
            +
            U(\widehat{\widetilde{U}}^{\bigtriangleup}(\widehat{U}^{\bigtriangleup}\widetilde{U}^{\bigtriangleup})^{\bigtriangleup})^{\bigtriangleup}
            +
            U(\widehat{\widetilde{U}}^{\bigtriangleup}(\widehat{U}^{\bigtriangleup}U^{\bigtriangleup})^{\bigtriangleup})^{\bigtriangleup}
            +
            U(\widehat{\widetilde{U}}^{\bigtriangleup}\widehat{U}\widehat{\widetilde{U}})^{\bigtriangleup}
            +
            U(\widehat{U}^{\bigtriangleup}U\widetilde{U})^{\bigtriangleup}
            +\\
            U(\widehat{U}^{\bigtriangleup}(\widehat{\widetilde{U}}^{\bigtriangleup}\widetilde{U}^{\bigtriangleup})^{\bigtriangleup})^{\bigtriangleup}
            +
            U(\widehat{U}^{\bigtriangleup}(\widehat{\widetilde{U}}^{\bigtriangleup}U^{\bigtriangleup})^{\bigtriangleup})^{\bigtriangleup}
            +
            U(\widehat{U}^{\bigtriangleup}(\widehat{U}^{\bigtriangleup}\widetilde{U}^{\bigtriangleup})^{\bigtriangleup})^{\bigtriangleup}
            +
            U(\widehat{U}^{\bigtriangleup}(\widehat{U}^{\bigtriangleup}U^{\bigtriangleup})^{\bigtriangleup})^{\bigtriangleup}
            +
            U(\widehat{U}^{\bigtriangleup}\widehat{U}\widehat{\widetilde{U}})^{\bigtriangleup}
            +
            U((\widehat{U}\widehat{\widetilde{U}})^{\bigtriangleup}\widetilde{U})^{\bigtriangleup}
            +
            U((\widehat{U}\widehat{\widetilde{U}})^{\bigtriangleup}U)^{\bigtriangleup}
            +
            U((\widehat{U}\widehat{\widetilde{U}})^{\bigtriangleup}\widehat{\widetilde{U}})^{\bigtriangleup}
            +
            U((\widehat{U}\widehat{\widetilde{U}})^{\bigtriangleup}\widehat{U})^{\bigtriangleup}
            +\\
            U\widetilde{U}\widehat{\widetilde{U}}^{\bigtriangleup}\widetilde{U}^{\bigtriangleup}
            +
            U\widetilde{U}\widehat{\widetilde{U}}^{\bigtriangleup}U^{\bigtriangleup}
            +
            U\widetilde{U}\widehat{U}^{\bigtriangleup}\widetilde{U}^{\bigtriangleup}+
            U\widetilde{U}\widehat{U}^{\bigtriangleup}U^{\bigtriangleup}+
            U\widetilde{U}(\widehat{\widetilde{U}}^{\bigtriangleup}\widetilde{U})^{\bigtriangleup}
            +
            U\widetilde{U}(\widehat{\widetilde{U}}^{\bigtriangleup}U)^{\bigtriangleup}+
            U\widetilde{U}(\widehat{\widetilde{U}}^{\bigtriangleup}\widehat{\widetilde{U}})^{\bigtriangleup}+
            U\widetilde{U}(\widehat{\widetilde{U}}^{\bigtriangleup}\widehat{U})^{\bigtriangleup}
            +
            U\widetilde{U}(\widehat{U}^{\bigtriangleup}\widetilde{U})^{\bigtriangleup}+
            U\widetilde{U}(\widehat{U}^{\bigtriangleup}U)^{\bigtriangleup}+
            U\widetilde{U}(\widehat{U}^{\bigtriangleup}\widehat{\widetilde{U}})^{\bigtriangleup}
            +
            U\widetilde{U}(\widehat{U}^{\bigtriangleup}\widehat{U})^{\bigtriangleup})),
            $\\\\
        $C_{(5)}=
            \frac{1}{3}(\widehat{\widetilde{U}}(\widehat{U}\widehat{\widetilde{U}}U\widetilde{U})^{\bigtriangleup}+
            \widehat{U}(\widehat{U}\widehat{\widetilde{U}}U\widetilde{U})^{\bigtriangleup}+
            \widehat{U}\widehat{\widetilde{U}}(\widehat{\widetilde{U}}U\widetilde{U})^{\bigtriangleup}
            +
            \widehat{U}\widehat{\widetilde{U}}(\widehat{U}U\widetilde{U})^{\bigtriangleup}+
            \widehat{U}\widehat{\widetilde{U}}(\widehat{U}\widehat{\widetilde{U}}\widetilde{U})^{\bigtriangleup}+
            \widehat{U}\widehat{\widetilde{U}}(\widehat{U}\widehat{\widetilde{U}}U)^{\bigtriangleup}
            +
            \widetilde{U}(\widehat{U}\widehat{\widetilde{U}}U\widetilde{U})^{\bigtriangleup}+
            \widetilde{U}\widehat{\widetilde{U}}(\widehat{\widetilde{U}}U\widetilde{U})^{\bigtriangleup}+
            \widetilde{U}\widehat{\widetilde{U}}(\widehat{U}U\widetilde{U})^{\bigtriangleup}
            +
            \widetilde{U}\widehat{\widetilde{U}}(\widehat{U}\widehat{\widetilde{U}}\widetilde{U})^{\bigtriangleup}+
            \widetilde{U}\widehat{\widetilde{U}}(\widehat{U}\widehat{\widetilde{U}}U)^{\bigtriangleup}+
            \widetilde{U}\widehat{U}(\widehat{\widetilde{U}}U\widetilde{U})^{\bigtriangleup}
            +
            \widetilde{U}\widehat{U}(\widehat{U}U\widetilde{U})^{\bigtriangleup}+
            \widetilde{U}\widehat{U}(\widehat{U}\widehat{\widetilde{U}}\widetilde{U})^{\bigtriangleup}+
            \widetilde{U}\widehat{U}(\widehat{U}\widehat{\widetilde{U}}U)^{\bigtriangleup}
            +
            \widetilde{U}\widehat{U}\widehat{\widetilde{U}}(U\widetilde{U})^{\bigtriangleup}+
            \widetilde{U}\widehat{U}\widehat{\widetilde{U}}(\widehat{\widetilde{U}}\widetilde{U})^{\bigtriangleup}+
            \widetilde{U}\widehat{U}\widehat{\widetilde{U}}(\widehat{\widetilde{U}}U)^{\bigtriangleup}
            +
            \widetilde{U}\widehat{U}\widehat{\widetilde{U}}(\widehat{U}\widetilde{U})^{\bigtriangleup}+
            \widetilde{U}\widehat{U}\widehat{\widetilde{U}}(\widehat{U}U)^{\bigtriangleup}+
            \widetilde{U}\widehat{U}\widehat{\widetilde{U}}(\widehat{U}\widehat{\widetilde{U}})^{\bigtriangleup}
            +
            U(\widehat{U}\widehat{\widetilde{U}}U\widetilde{U})^{\bigtriangleup}+
            U\widehat{\widetilde{U}}(\widehat{\widetilde{U}}U\widetilde{U})^{\bigtriangleup}+
            U\widehat{\widetilde{U}}(\widehat{U}U\widetilde{U})^{\bigtriangleup}
            +
            U\widehat{\widetilde{U}}(\widehat{U}\widehat{\widetilde{U}}\widetilde{U})^{\bigtriangleup}+
            U\widehat{\widetilde{U}}(\widehat{U}\widehat{\widetilde{U}}U)^{\bigtriangleup}+
            U\widehat{U}(\widehat{\widetilde{U}}U\widetilde{U})^{\bigtriangleup}
            +
            U\widehat{U}(\widehat{U}U\widetilde{U})^{\bigtriangleup}+
            U\widehat{U}(\widehat{U}\widehat{\widetilde{U}}\widetilde{U})^{\bigtriangleup}+
            U\widehat{U}(\widehat{U}\widehat{\widetilde{U}}U)^{\bigtriangleup}
            +
            U\widehat{U}\widehat{\widetilde{U}}(U\widetilde{U})^{\bigtriangleup}+
            U\widehat{U}\widehat{\widetilde{U}}(\widehat{\widetilde{U}}\widetilde{U})^{\bigtriangleup}+
            U\widehat{U}\widehat{\widetilde{U}}(\widehat{\widetilde{U}}U)^{\bigtriangleup}
            +
            U\widehat{U}\widehat{\widetilde{U}}(\widehat{U}\widetilde{U})^{\bigtriangleup}+
            U\widehat{U}\widehat{\widetilde{U}}(\widehat{U}U)^{\bigtriangleup}+
            U\widehat{U}\widehat{\widetilde{U}}(\widehat{U}\widehat{\widetilde{U}})^{\bigtriangleup}
            +
            U\widetilde{U}(\widehat{\widetilde{U}}U\widetilde{U})^{\bigtriangleup}+
            U\widetilde{U}(\widehat{U}U\widetilde{U})^{\bigtriangleup}+
            U\widetilde{U}(\widehat{U}\widehat{\widetilde{U}}\widetilde{U})^{\bigtriangleup}
            +
            U\widetilde{U}(\widehat{U}\widehat{\widetilde{U}}U)^{\bigtriangleup}+
            U\widetilde{U}\widehat{\widetilde{U}}(U\widetilde{U})^{\bigtriangleup}+
            U\widetilde{U}\widehat{\widetilde{U}}(\widehat{\widetilde{U}}\widetilde{U})^{\bigtriangleup}
            +
            U\widetilde{U}\widehat{\widetilde{U}}(\widehat{\widetilde{U}}U)^{\bigtriangleup}+
            U\widetilde{U}\widehat{\widetilde{U}}(\widehat{U}\widetilde{U})^{\bigtriangleup}+
            U\widetilde{U}\widehat{\widetilde{U}}(\widehat{U}U)^{\bigtriangleup}
            +
            U\widetilde{U}\widehat{\widetilde{U}}(\widehat{U}\widehat{\widetilde{U}})^{\bigtriangleup}+
            U\widetilde{U}\widehat{U}(U\widetilde{U})^{\bigtriangleup}+
            U\widetilde{U}\widehat{U}(\widehat{\widetilde{U}}\widetilde{U})^{\bigtriangleup}
            +
            U\widetilde{U}\widehat{U}(\widehat{\widetilde{U}}U)^{\bigtriangleup}+
            U\widetilde{U}\widehat{U}(\widehat{U}\widetilde{U})^{\bigtriangleup}+
            U\widetilde{U}\widehat{U}(\widehat{U}U)^{\bigtriangleup}
            +
            U\widetilde{U}\widehat{U}(\widehat{U}\widehat{\widetilde{U}})^{\bigtriangleup}+
            U\widetilde{U}\widehat{U}\widehat{\widetilde{U}}\widetilde{U}^{\bigtriangleup}+
            U\widetilde{U}\widehat{U}\widehat{\widetilde{U}}U^{\bigtriangleup}\!\!
            +
            U\widetilde{U}\widehat{U}\widehat{\widetilde{U}}\widehat{\widetilde{U}}^{\bigtriangleup}+
            U\widetilde{U}\widehat{U}\widehat{\widetilde{U}}\widehat{U}^{\bigtriangleup})
            +
            \frac{2}{3}(
            (\widehat{\widetilde{U}}^{\bigtriangleup}((\widehat{U}\widehat{\widetilde{U}})^{\bigtriangleup}(U\widetilde{U})^{\bigtriangleup})^{\bigtriangleup})^{\bigtriangleup}
            +\\
            (\widehat{U}^{\bigtriangleup}((\widehat{U}\widehat{\widetilde{U}})^{\bigtriangleup}(U\widetilde{U})^{\bigtriangleup})^{\bigtriangleup})^{\bigtriangleup}
            +
            ((\widehat{U}\widehat{\widetilde{U}})^{\bigtriangleup}(\widehat{\widetilde{U}}^{\bigtriangleup}(U\widetilde{U})^{\bigtriangleup})^{\bigtriangleup})^{\bigtriangleup}
            +
            ((\widehat{U}\widehat{\widetilde{U}})^{\bigtriangleup}(\widehat{U}^{\bigtriangleup}(U\widetilde{U})^{\bigtriangleup})^{\bigtriangleup})^{\bigtriangleup}
            +\\
            ((\widehat{U}\widehat{\widetilde{U}})^{\bigtriangleup}((\widehat{U}\widehat{\widetilde{U}})^{\bigtriangleup}\widetilde{U}^{\bigtriangleup})^{\bigtriangleup})^{\bigtriangleup}+
            ((\widehat{U}\widehat{\widetilde{U}})^{\bigtriangleup}((\widehat{U}\widehat{\widetilde{U}})^{\bigtriangleup}U^{\bigtriangleup})^{\bigtriangleup})^{\bigtriangleup}
            +
            \widetilde{U}(\widehat{U}\widehat{\widetilde{U}})^{\bigtriangleup}(U\widetilde{U})^{\bigtriangleup}
            +\\
            \widetilde{U}(\widehat{\widetilde{U}}^{\bigtriangleup}(\widehat{\widetilde{U}}^{\bigtriangleup}(U\widetilde{U})^{\bigtriangleup})^{\bigtriangleup})^{\bigtriangleup}
            +
            \widetilde{U}(\widehat{\widetilde{U}}^{\bigtriangleup}(\widehat{U}^{\bigtriangleup}(U\widetilde{U})^{\bigtriangleup})^{\bigtriangleup})^{\bigtriangleup}
            +
            \widetilde{U}(\widehat{\widetilde{U}}^{\bigtriangleup}((\widehat{U}\widehat{\widetilde{U}})^{\bigtriangleup}\widetilde{U}^{\bigtriangleup})^{\bigtriangleup})^{\bigtriangleup}
            +\\
            \widetilde{U}(\widehat{\widetilde{U}}^{\bigtriangleup}((\widehat{U}\widehat{\widetilde{U}})^{\bigtriangleup}U^{\bigtriangleup})^{\bigtriangleup})^{\bigtriangleup}
            +
            \widetilde{U}(\widehat{U}^{\bigtriangleup}(\widehat{\widetilde{U}}^{\bigtriangleup}(U\widetilde{U})^{\bigtriangleup})^{\bigtriangleup})^{\bigtriangleup}
            +
            \widetilde{U}(\widehat{U}^{\bigtriangleup}(\widehat{U}^{\bigtriangleup}(U\widetilde{U})^{\bigtriangleup})^{\bigtriangleup})^{\bigtriangleup}
            +\\
            \widetilde{U}(\widehat{U}^{\bigtriangleup}((\widehat{U}\widehat{\widetilde{U}})^{\bigtriangleup}\widetilde{U}^{\bigtriangleup})^{\bigtriangleup})^{\bigtriangleup}
            +
            \widetilde{U}(\widehat{U}^{\bigtriangleup}((\widehat{U}\widehat{\widetilde{U}})^{\bigtriangleup}U^{\bigtriangleup})^{\bigtriangleup})^{\bigtriangleup}
            +
            \widetilde{U}((\widehat{U}\widehat{\widetilde{U}})^{\bigtriangleup}U\widetilde{U})^{\bigtriangleup}
            +\\
            \widetilde{U}((\widehat{U}\widehat{\widetilde{U}})^{\bigtriangleup}(\widehat{\widetilde{U}}^{\bigtriangleup}\widetilde{U}^{\bigtriangleup})^{\bigtriangleup})^{\bigtriangleup}
            +
            \widetilde{U}((\widehat{U}\widehat{\widetilde{U}})^{\bigtriangleup}(\widehat{\widetilde{U}}^{\bigtriangleup}U^{\bigtriangleup})^{\bigtriangleup})^{\bigtriangleup}
            +
            \widetilde{U}((\widehat{U}\widehat{\widetilde{U}})^{\bigtriangleup}(\widehat{U}^{\bigtriangleup}\widetilde{U}^{\bigtriangleup})^{\bigtriangleup})^{\bigtriangleup}
            +\\
            \widetilde{U}((\widehat{U}\widehat{\widetilde{U}})^{\bigtriangleup}(\widehat{U}^{\bigtriangleup}U^{\bigtriangleup})^{\bigtriangleup})^{\bigtriangleup}
            +
            \widetilde{U}((\widehat{U}\widehat{\widetilde{U}})^{\bigtriangleup}\widehat{U}\widehat{\widetilde{U}})^{\bigtriangleup}
            +
            U(\widehat{U}\widehat{\widetilde{U}})^{\bigtriangleup}(U\widetilde{U})^{\bigtriangleup}
            +
            U(\widehat{\widetilde{U}}^{\bigtriangleup}(\widehat{\widetilde{U}}^{\bigtriangleup}(U\widetilde{U})^{\bigtriangleup})^{\bigtriangleup})^{\bigtriangleup}
            +\\
            U(\widehat{\widetilde{U}}^{\bigtriangleup}(\widehat{U}^{\bigtriangleup}(U\widetilde{U})^{\bigtriangleup})^{\bigtriangleup})^{\bigtriangleup}
            +
            U(\widehat{\widetilde{U}}^{\bigtriangleup}((\widehat{U}\widehat{\widetilde{U}})^{\bigtriangleup}\widetilde{U}^{\bigtriangleup})^{\bigtriangleup})^{\bigtriangleup}
            +
            U(\widehat{\widetilde{U}}^{\bigtriangleup}((\widehat{U}\widehat{\widetilde{U}})^{\bigtriangleup}U^{\bigtriangleup})^{\bigtriangleup})^{\bigtriangleup}
            +\\
            U(\widehat{U}^{\bigtriangleup}(\widehat{\widetilde{U}}^{\bigtriangleup}(U\widetilde{U})^{\bigtriangleup})^{\bigtriangleup})^{\bigtriangleup}
            +
            U(\widehat{U}^{\bigtriangleup}(\widehat{U}^{\bigtriangleup}(U\widetilde{U})^{\bigtriangleup})^{\bigtriangleup})^{\bigtriangleup}
            +
            U(\widehat{U}^{\bigtriangleup}((\widehat{U}\widehat{\widetilde{U}})^{\bigtriangleup}\widetilde{U}^{\bigtriangleup})^{\bigtriangleup})^{\bigtriangleup}
            +\\
            U(\widehat{U}^{\bigtriangleup}((\widehat{U}\widehat{\widetilde{U}})^{\bigtriangleup}U^{\bigtriangleup})^{\bigtriangleup})^{\bigtriangleup}
            +
            U((\widehat{U}\widehat{\widetilde{U}})^{\bigtriangleup}U\widetilde{U})^{\bigtriangleup}
            +
            U((\widehat{U}\widehat{\widetilde{U}})^{\bigtriangleup}(\widehat{\widetilde{U}}^{\bigtriangleup}\widetilde{U}^{\bigtriangleup})^{\bigtriangleup})^{\bigtriangleup}
            +\\
            U((\widehat{U}\widehat{\widetilde{U}})^{\bigtriangleup}(\widehat{\widetilde{U}}^{\bigtriangleup}U^{\bigtriangleup})^{\bigtriangleup})^{\bigtriangleup}
            +
            U((\widehat{U}\widehat{\widetilde{U}})^{\bigtriangleup}(\widehat{U}^{\bigtriangleup}\widetilde{U}^{\bigtriangleup})^{\bigtriangleup})^{\bigtriangleup}
            +
            U((\widehat{U}\widehat{\widetilde{U}})^{\bigtriangleup}(\widehat{U}^{\bigtriangleup}U^{\bigtriangleup})^{\bigtriangleup})^{\bigtriangleup}
            +\\
            U((\widehat{U}\widehat{\widetilde{U}})^{\bigtriangleup}\widehat{U}\widehat{\widetilde{U}})^{\bigtriangleup}
            +
            U\widetilde{U}\widehat{\widetilde{U}}^{\bigtriangleup}(U\widetilde{U})^{\bigtriangleup}
            +
            U\widetilde{U}\widehat{U}^{\bigtriangleup}(U\widetilde{U})^{\bigtriangleup}
            +
            U\widetilde{U}(\widehat{U}\widehat{\widetilde{U}})^{\bigtriangleup}\widetilde{U}^{\bigtriangleup}
            +
            U\widetilde{U}(\widehat{U}\widehat{\widetilde{U}})^{\bigtriangleup}U^{\bigtriangleup}
            +
            U\widetilde{U}(\widehat{\widetilde{U}}^{\bigtriangleup}U\widetilde{U})^{\bigtriangleup}+
            U\widetilde{U}(\widehat{\widetilde{U}}^{\bigtriangleup}(\widehat{\widetilde{U}}^{\bigtriangleup}\widetilde{U}^{\bigtriangleup})^{\bigtriangleup})^{\bigtriangleup}
            +
            U\widetilde{U}(\widehat{\widetilde{U}}^{\bigtriangleup}(\widehat{\widetilde{U}}^{\bigtriangleup}U^{\bigtriangleup})^{\bigtriangleup})^{\bigtriangleup}
            +
            U\widetilde{U}(\widehat{\widetilde{U}}^{\bigtriangleup}(\widehat{U}^{\bigtriangleup}\widetilde{U}^{\bigtriangleup})^{\bigtriangleup})^{\bigtriangleup}
            +
            U\widetilde{U}(\widehat{\widetilde{U}}^{\bigtriangleup}(\widehat{U}^{\bigtriangleup}U^{\bigtriangleup})^{\bigtriangleup})^{\bigtriangleup}
            +
            U\widetilde{U}(\widehat{\widetilde{U}}^{\bigtriangleup}\widehat{U}\widehat{\widetilde{U}})^{\bigtriangleup}
            +
            U\widetilde{U}(\widehat{U}^{\bigtriangleup}U\widetilde{U})^{\bigtriangleup}
            +
            U\widetilde{U}(\widehat{U}^{\bigtriangleup}(\widehat{\widetilde{U}}^{\bigtriangleup}\widetilde{U}^{\bigtriangleup})^{\bigtriangleup})^{\bigtriangleup}
            +\\
            U\widetilde{U}(\widehat{U}^{\bigtriangleup}(\widehat{\widetilde{U}}^{\bigtriangleup}U^{\bigtriangleup})^{\bigtriangleup})^{\bigtriangleup}
            +
            U\widetilde{U}(\widehat{U}^{\bigtriangleup}(\widehat{U}^{\bigtriangleup}\widetilde{U}^{\bigtriangleup})^{\bigtriangleup})^{\bigtriangleup}
            +
            U\widetilde{U}(\widehat{U}^{\bigtriangleup}(\widehat{U}^{\bigtriangleup}U^{\bigtriangleup})^{\bigtriangleup})^{\bigtriangleup}+
            U\widetilde{U}(\widehat{U}^{\bigtriangleup}\widehat{U}\widehat{\widetilde{U}})^{\bigtriangleup}
            +
            U\widetilde{U}((\widehat{U}\widehat{\widetilde{U}})^{\bigtriangleup}\widetilde{U})^{\bigtriangleup}+
            U\widetilde{U}((\widehat{U}\widehat{\widetilde{U}})^{\bigtriangleup}U)^{\bigtriangleup}
            +
            U\widetilde{U}((\widehat{U}\widehat{\widetilde{U}})^{\bigtriangleup}\widehat{\widetilde{U}})^{\bigtriangleup}+
            U\widetilde{U}((\widehat{U}\widehat{\widetilde{U}})^{\bigtriangleup}\widehat{U})^{\bigtriangleup}),
            $
            \\\\
        $
            C_{(6)}=-(\frac{1}{3}(\widehat{U}\widehat{\widetilde{U}}(\widehat{U}\widehat{\widetilde{U}}U\widetilde{U})^{\bigtriangleup}+
            \widetilde{U}\widehat{\widetilde{U}}(\widehat{U}\widehat{\widetilde{U}}U\widetilde{U})^{\bigtriangleup}+
            \widetilde{U}\widehat{U}(\widehat{U}\widehat{\widetilde{U}}U\widetilde{U})^{\bigtriangleup}
            +
            \widetilde{U}\widehat{U}\widehat{\widetilde{U}}(\widehat{\widetilde{U}}U\widetilde{U})^{\bigtriangleup}+\\
            \widetilde{U}\widehat{U}\widehat{\widetilde{U}}(\widehat{U}U\widetilde{U})^{\bigtriangleup}+
            \widetilde{U}\widehat{U}\widehat{\widetilde{U}}(\widehat{U}\widehat{\widetilde{U}}\widetilde{U})^{\bigtriangleup}
            +
            \widetilde{U}\widehat{U}\widehat{\widetilde{U}}(\widehat{U}\widehat{\widetilde{U}}U)^{\bigtriangleup}+
            U\widehat{\widetilde{U}}(\widehat{U}\widehat{\widetilde{U}}U\widetilde{U})^{\bigtriangleup}+
            U\widehat{U}(\widehat{U}\widehat{\widetilde{U}}U\widetilde{U})^{\bigtriangleup}
            +
            U\widehat{U}\widehat{\widetilde{U}}(\widehat{\widetilde{U}}U\widetilde{U})^{\bigtriangleup}+
            U\widehat{U}\widehat{\widetilde{U}}(\widehat{U}U\widetilde{U})^{\bigtriangleup}+
            U\widehat{U}\widehat{\widetilde{U}}(\widehat{U}\widehat{\widetilde{U}}\widetilde{U})^{\bigtriangleup}
            +
            U\widehat{U}\widehat{\widetilde{U}}(\widehat{U}\widehat{\widetilde{U}}U)^{\bigtriangleup}+
            U\widetilde{U}(\widehat{U}\widehat{\widetilde{U}}U\widetilde{U})^{\bigtriangleup}+
            U\widetilde{U}\widehat{\widetilde{U}}(\widehat{\widetilde{U}}U\widetilde{U})^{\bigtriangleup}
            +
            U\widetilde{U}\widehat{\widetilde{U}}(\widehat{U}U\widetilde{U})^{\bigtriangleup}+
            U\widetilde{U}\widehat{\widetilde{U}}(\widehat{U}\widehat{\widetilde{U}}\widetilde{U})^{\bigtriangleup}+
            U\widetilde{U}\widehat{\widetilde{U}}(\widehat{U}\widehat{\widetilde{U}}U)^{\bigtriangleup}
            +
            U\widetilde{U}\widehat{U}(\widehat{\widetilde{U}}U\widetilde{U})^{\bigtriangleup}+
            U\widetilde{U}\widehat{U}(\widehat{U}U\widetilde{U})^{\bigtriangleup}+
            U\widetilde{U}\widehat{U}(\widehat{U}\widehat{\widetilde{U}}\widetilde{U})^{\bigtriangleup}
            +
            U\widetilde{U}\widehat{U}(\widehat{U}\widehat{\widetilde{U}}U)^{\bigtriangleup}+
            U\widetilde{U}\widehat{U}\widehat{\widetilde{U}}(U\widetilde{U})^{\bigtriangleup}+
            U\widetilde{U}\widehat{U}\widehat{\widetilde{U}}(\widehat{\widetilde{U}}\widetilde{U})^{\bigtriangleup}
            +\\
            U\widetilde{U}\widehat{U}\widehat{\widetilde{U}}(\widehat{\widetilde{U}}U)^{\bigtriangleup}+
            U\widetilde{U}\widehat{U}\widehat{\widetilde{U}}(\widehat{U}\widetilde{U})^{\bigtriangleup}+
            U\widetilde{U}\widehat{U}\widehat{\widetilde{U}}(\widehat{U}U)^{\bigtriangleup}
            +
            U\widetilde{U}\widehat{U}\widehat{\widetilde{U}}(\widehat{U}\widehat{\widetilde{U}})^{\bigtriangleup})+\\
            \frac{2}{3}(
            ((\widehat{U}\widehat{\widetilde{U}})^{\bigtriangleup}((\widehat{U}\widehat{\widetilde{U}})^{\bigtriangleup}(U\widetilde{U})^{\bigtriangleup})^{\bigtriangleup})^{\bigtriangleup}\!+
            \widetilde{U}(\widehat{\widetilde{U}}^{\bigtriangleup}\!\!\!((\widehat{U}\widehat{\widetilde{U}})^{\bigtriangleup}(U\widetilde{U})^{\bigtriangleup})^{\bigtriangleup})^{\bigtriangleup}
            \!+
            \widetilde{U}(\widehat{U}^{\bigtriangleup}((\widehat{U}\widehat{\widetilde{U}})^{\bigtriangleup}(U\widetilde{U})^{\bigtriangleup})^{\bigtriangleup})^{\bigtriangleup}+
            \widetilde{U}((\widehat{U}\widehat{\widetilde{U}})^{\bigtriangleup}(\widehat{\widetilde{U}}^{\bigtriangleup}(U\widetilde{U})^{\bigtriangleup})^{\bigtriangleup})^{\bigtriangleup}
            +
            \widetilde{U}((\widehat{U}\widehat{\widetilde{U}})^{\bigtriangleup}(\widehat{U}^{\bigtriangleup}(U\widetilde{U})^{\bigtriangleup})^{\bigtriangleup})^{\bigtriangleup}+
            \widetilde{U}((\widehat{U}\widehat{\widetilde{U}})^{\bigtriangleup}((\widehat{U}\widehat{\widetilde{U}})^{\bigtriangleup}\widetilde{U}^{\bigtriangleup})^{\bigtriangleup})^{\bigtriangleup}
            +
            \widetilde{U}((\widehat{U}\widehat{\widetilde{U}})^{\bigtriangleup}((\widehat{U}\widehat{\widetilde{U}})^{\bigtriangleup}U^{\bigtriangleup})^{\bigtriangleup})^{\bigtriangleup}+
            U(\widehat{\widetilde{U}}^{\bigtriangleup}((\widehat{U}\widehat{\widetilde{U}})^{\bigtriangleup}(U\widetilde{U})^{\bigtriangleup})^{\bigtriangleup})^{\bigtriangleup}
            +
            U(\widehat{U}^{\bigtriangleup}((\widehat{U}\widehat{\widetilde{U}})^{\bigtriangleup}(U\widetilde{U})^{\bigtriangleup})^{\bigtriangleup})^{\bigtriangleup}+
            U((\widehat{U}\widehat{\widetilde{U}})^{\bigtriangleup}(\widehat{\widetilde{U}}^{\bigtriangleup}(U\widetilde{U})^{\bigtriangleup})^{\bigtriangleup})^{\bigtriangleup}
            +
            U((\widehat{U}\widehat{\widetilde{U}})^{\bigtriangleup}(\widehat{U}^{\bigtriangleup}(U\widetilde{U})^{\bigtriangleup})^{\bigtriangleup})^{\bigtriangleup}+
            U((\widehat{U}\widehat{\widetilde{U}})^{\bigtriangleup}((\widehat{U}\widehat{\widetilde{U}})^{\bigtriangleup}\widetilde{U}^{\bigtriangleup})^{\bigtriangleup})^{\bigtriangleup}
            +
            U((\widehat{U}\widehat{\widetilde{U}})^{\bigtriangleup}((\widehat{U}\widehat{\widetilde{U}})^{\bigtriangleup}U^{\bigtriangleup})^{\bigtriangleup})^{\bigtriangleup}+
            U\widetilde{U}(\widehat{U}\widehat{\widetilde{U}})^{\bigtriangleup}(U\widetilde{U})^{\bigtriangleup}
            +
            U\widetilde{U}(\widehat{\widetilde{U}}^{\bigtriangleup}(\widehat{\widetilde{U}}^{\bigtriangleup}(U\widetilde{U})^{\bigtriangleup})^{\bigtriangleup})^{\bigtriangleup}+\\
            U\widetilde{U}(\widehat{\widetilde{U}}^{\bigtriangleup}(\widehat{U}^{\bigtriangleup}(U\widetilde{U})^{\bigtriangleup})^{\bigtriangleup})^{\bigtriangleup}
            +
            U\widetilde{U}(\widehat{\widetilde{U}}^{\bigtriangleup}((\widehat{U}\widehat{\widetilde{U}})^{\bigtriangleup}\widetilde{U}^{\bigtriangleup})^{\bigtriangleup})^{\bigtriangleup}+
            U\widetilde{U}(\widehat{\widetilde{U}}^{\bigtriangleup}((\widehat{U}\widehat{\widetilde{U}})^{\bigtriangleup}U^{\bigtriangleup})^{\bigtriangleup})^{\bigtriangleup}
            +\\
            U\widetilde{U}(\widehat{U}^{\bigtriangleup}(\widehat{\widetilde{U}}^{\bigtriangleup}(U\widetilde{U})^{\bigtriangleup})^{\bigtriangleup})^{\bigtriangleup}+
            U\widetilde{U}(\widehat{U}^{\bigtriangleup}(\widehat{U}^{\bigtriangleup}(U\widetilde{U})^{\bigtriangleup})^{\bigtriangleup})^{\bigtriangleup}
            +
            U\widetilde{U}(\widehat{U}^{\bigtriangleup}((\widehat{U}\widehat{\widetilde{U}})^{\bigtriangleup}\widetilde{U}^{\bigtriangleup})^{\bigtriangleup})^{\bigtriangleup}+\\
            U\widetilde{U}(\widehat{U}^{\bigtriangleup}((\widehat{U}\widehat{\widetilde{U}})^{\bigtriangleup}U^{\bigtriangleup})^{\bigtriangleup})^{\bigtriangleup}
            +
            U\widetilde{U}((\widehat{U}\widehat{\widetilde{U}})^{\bigtriangleup}U\widetilde{U})^{\bigtriangleup}+
            U\widetilde{U}((\widehat{U}\widehat{\widetilde{U}})^{\bigtriangleup}(\widehat{\widetilde{U}}^{\bigtriangleup}\widetilde{U}^{\bigtriangleup})^{\bigtriangleup})^{\bigtriangleup}
            +\\
            U\widetilde{U}((\widehat{U}\widehat{\widetilde{U}})^{\bigtriangleup}(\widehat{\widetilde{U}}^{\bigtriangleup}U^{\bigtriangleup})^{\bigtriangleup})^{\bigtriangleup}+
            U\widetilde{U}((\widehat{U}\widehat{\widetilde{U}})^{\bigtriangleup}(\widehat{U}^{\bigtriangleup}\widetilde{U}^{\bigtriangleup})^{\bigtriangleup})^{\bigtriangleup}
            +
            U\widetilde{U}((\widehat{U}\widehat{\widetilde{U}})^{\bigtriangleup}(\widehat{U}^{\bigtriangleup}U^{\bigtriangleup})^{\bigtriangleup})^{\bigtriangleup}+\\
            U\widetilde{U}((\widehat{U}\widehat{\widetilde{U}})^{\bigtriangleup}\widehat{U}\widehat{\widetilde{U}})^{\bigtriangleup}))
       $
    }
\end{document}